\newcommand*{\FULL}{}%
\newcommand*{\WORLDCUP}{}%
\newtheorem{theorem}{Theorem}
\newtheorem{lemma}{Lemma}
\newtheorem{corollary}{Corollary}
\newtheorem{definition}{Definition}
\newtheorem{remark}{Remark}
\newcommand{\abs}[1]{\left| #1 \right|}
\newcommand{\norm}[1]{\left\lVert #1 \right\rVert}
\newcommand{\E}{\mathbf{E}}
\newcommand{\var}{\mathbf{Var}}
\newcommand{\err}[1]{\text{Err}_{#1}^k}
\newcommand{\bx}{\mathbf{x}}
\newcommand{\bz}{\mathbf{z}}
\newcommand{\by}{\mathbf{y}}
\newcommand{\bw}{\mathbf{w}}
\newcommand{\bpi}{\boldsymbol{\pi}}
\newcommand{\infnorm}[1]{\|#1\|_{\infty}}
\renewcommand{\Pr}{\mathbf{Pr}}
\renewcommand{\S}{\mathcal{S}}
\newcommand{\CS}{\textsf{CS}}
\newcommand{\CM}{\textsf{CM}}
\newcommand{\CMCU}{\textsf{CM-CU}}
\newcommand{\CMLCU}{\textsf{CML-CU}}
\newcommand{\OMP}{\textsf{OMP}}
\newcommand{\BOMP}{\textsf{BOMP}}
\newcommand{\oneSR}{\textsf{$\ell_1$-S/R}}
\newcommand{\twoSR}{\textsf{$\ell_2$-S/R}}
\newcommand{\median}{\operatornamewithlimits{median}}
\newcommand{\mean}{\operatornamewithlimits{mean}}
\newcommand{\argmin}{\operatornamewithlimits{argmin}}
\renewcommand{\paragraph}[1]{\medskip \noindent {\bf #1.}}
\begin{document}

\title{Bias-Aware Sketches}

\numberofauthors{2} 

\author{
\alignauthor
Jiecao Chen\\
\affaddr{Indiana University}\\
\affaddr{Bloomington, IN 47405}\\
\email{jiecchen@umail.iu.edu}\\
\alignauthor
Qin Zhang\\
\affaddr{Indiana University}\\
\affaddr{Bloomington, IN 47405}\\
\email{qzhangcs@indiana.edu}
}
\date{}

%\begin{titlepage}

\maketitle

% \begin{center}
% Paper type: Regular.
% \end{center}

%\end{titlepage}

%\thispagestyle{empty}

\begin{abstract}
Linear sketching algorithms have been widely used for processing large-scale distributed and streaming datasets. Their popularity is largely due to the fact that linear sketches can be naturally composed in the distributed model and be efficiently updated in the streaming model.  The errors of linear sketches are typically expressed in terms of the sum of coordinates of the input vector excluding those largest ones, or, the mass on the tail of the vector.  Thus, the precondition for these algorithms to perform well is that the mass on the tail is small, which is, however, not always the case -- in many real-world datasets the coordinates of the input vector have a {\em bias}, which will generate a large mass on the tail.  

In this paper we propose linear sketches that are {\em bias-aware}. 
We rigorously prove that they achieve strictly better error guarantees than the corresponding existing sketches, and demonstrate their practicality and superiority via an extensive experimental evaluation on both real and synthetic datasets.

%%We give a complete theoretical analysis on the performance of our bias-aware sketching algorithms, and 
\end{abstract}

\section{Introduction}
Linear sketches, such as Count-Sketch~\cite{CCFC02} and Count-Median~\cite{CM04}, are powerful tools for processing massive, distributed, and real-time datasets.   Let $\bx = (x_1, \ldots, x_n)^T$ be the input data vector where $x_i$ stands for the frequency of element $i$.  Linear sketching algorithms typically consist of two phases: (1) {\em Sketching phase.} We apply a linear sketching matrix $\Phi \in \mathbb{R}^{r \times n}\ (r \ll n)$ on $\bx$, getting a sketching vector $\Phi \bx$ whose dimension is much smaller than $\bx$. (2) {\em Recovery phase.} We use $\Phi \bx$ to recover useful  information about the input vector $\bx$, such as the median coordinate, the number of non-zero coordinates (distinct elements), etc.

We start by explaining why linear sketches are useful in handling distributed and streaming data.  In the distributed computation model, we have $t$ data vectors $\bx^1, \ldots, \bx^t$ distributed at $t$ sites, which connect to a central coordinator.  The goal is for the coordinator to learn the global data vector $\bx = \sum_{i \in [t]} \bx^i$ communication efficiently.  Note that the naive solution that each site sending $\bx^i$ to the coordinator is communication expensive if the dimension of $\bx$ is large. By linearity we have $\Phi \bx = \Phi \bx^1 + \ldots + \Phi \bx^{t}$.   Thus each site can simply send the local sketching vector $\Phi \bx^i$ to the coordinator, and then the coordinator sums up these local sketching vectors to obtain the global sketching vector $\Phi \bx$, from which it reconstructs $\bx$ using the recovery procedure.  The total communication will be the product of $t$ and the dimension of $\Phi \bx$, which is much smaller than the dimension of input vector $\bx$.  

In the streaming model~\cite{AMS96}, where items arrive one by one in the online fashion, a new incoming item $i \in [n]$ corresponds to updating the input vector $\bx \gets \bx + \mathbf{e}_i$ where $\mathbf{e}_i$ is an all-$0$ vector except the $i$-th coordinate being $1$.  Again due to linearity, we can easily update the linear sketch as $\Phi \bx \gets \Phi \bx + \Phi \mathbf{e}_i$.  The space usage of the streaming algorithm is simply the dimension of the sketch $\Phi \bx$, which is again much smaller than the dimension of  $\bx$.

%\paragraph{The Basic Problem} 
We consider in this paper the basic problem that in the recovery phase, we want to best reconstruct the input vector $\bx$ using the sketching vector $\Phi \bx$.  More precisely, our goal is to design a sketching matrix $\Phi$ and a recovery procedure $\mathcal{R}(\cdot)$ with the following properties. 
\begin{itemize}
\vspace{-1mm}
\item {\em Accuracy.}  $\hat{\bx} = \mathcal{R}(\Phi \bx)$ is close to the original vector $\bx$ under certain distance measurement.

\vspace{-1mm}
\item {\em Compactness.} The size of the sketch (equivalently, $r$, the number of rows of  $\Phi$) is small; 

\vspace{-1mm}
\item {\em Efficiency.} We can compute $\Phi \bx$ and $\hat{\bx} = \mathcal{R}(\Phi \bx)$ time-efficiently.  
\end{itemize}

\vspace{-1mm}

%\paragraph{Applications}
This basic problem has many applications in massive data processing. Once a good approximation to $\bx$ is obtained, we can answer a number of statistical queries on the input frequency vector such as {\em point query}, {\em  frequent elements}, {\em range query}, etc. These queries have numerous real-world applications, including Internet data analytics \cite{CJK+04}, search engines \cite{MGL+11}, data stream mining \cite{CH10}, streaming and distributed query processing \cite{Cor11,CG05,VWW+15}, etc. 
%See \cite{Mut05} for an overview.

In this paper we focus on point query, which we believe is the most basic operation: given an index $i \in [n]$, return $x_i$ (the $i$-th coordinate of the input vector $\bx$).
%two particular applications:
%\begin{itemize}
%\item {\em Point query}:  given an index $i \in [n]$, return $x_i$.
%
%\item {\em Top-$k$ outliers}: return the top-$k$ coordinates of $\bx$ that deviate the most from the median of coordinates of $\bx$.
%\end{itemize}
Naturally, we would like to minimize the maximum (average) coordinate-wise difference between the recovered vector $\hat{\bx} = \mathcal{R}(\Phi \bx)$ and the original vector $\bx$, that is, to minimize $\norm{\bx - \hat{\bx}}_\infty$ ($\frac{1}{n}\norm{\bx - \hat{\bx}}_1$).  

%One task that existing linear sketches cannot perform well but the ones that we are going to propose excel, is to detect {\em outliers with small frequencies}, or, {\em rare items}, which is very useful for detecting abnormal events. We shall see an example shortly.  We will also empirically test our proposed algorithms on outlier detection and compare their performance with existing linear sketches (Section~\ref{sec:exp}).

\paragraph{Linear Sketches}
Before stating our results, we would like to add some background on linear sketches. For a general vector $\bx \in \mathbb{R}^n$, it is impossible to recover $\bx$ {\em exactly} from the sketching vector $\Phi \bx$ of a much smaller dimension.  However, in many cases we are able to recover $\bx$ up to some small errors. One such error guarantee, called the $\ell_\infty/\ell_p$-guarantee, is that for any $\bx \in \mathbb{R}^n$, letting $\hat{\bx} = \mathcal{R}(\Phi \bx)$, the coordinate-wise error of the recovery is bounded by
\begin{eqnarray}
\label{eq:a-1}
\norm{\hat{\bx} - \bx}_\infty = O(k^{-1/p}) \cdot \err{p}(\bx),
\end{eqnarray} 
where $k$ is a tradeoff parameter between the sketch size and the accuracy guarantee, and $$\err{p}(\bx) = \min_{k\text{-sparse}\  \bx'}\norm{\bx - \bx'}_p,$$ where we say a vector is {\em $k$-sparse} if it contains {\em at most} $k$ non-zero coordinates. In other words, $\err{p}(\bx)$ is the $\ell_p$-norm of the vector containing all coordinates of $\bx$ except zero-ing out the $k$ coordinates with the largest absolute values. We often call the $k$ largest coordinates the {\em head} of $\bx$ and the rest $(n-k)$ ones the {\em tail} of $\bx$.  
Note that if $\bx$ is $k$-sparse, then we are able to recover it exactly since $\err{p}(\bx) = 0$.  

We typically consider $p = 1$ or $p = 2$, since for $p > 2$ there exists strong lower bound: the sketch size has to be at least $\Omega(n^{1-2/p})$.~\footnote{The proof can be done using the $n^{1/p}$-party set-disjointness hard instance similar to that for $p$-th frequency moments~\cite{BJKS02}.}
 The error guarantee in Equality (\ref{eq:a-1}) for $p = 1$ and $p = 2$ can be achieved with high probability by the classical Count-Median algorithm~\cite{CM04} and Count-Sketch algorithm~\cite{CCFC02} respectively; we will illustrate these two algorithms in details in Section~\ref{sec:preliminary}.  
%It is not difficult to see that when the tail of $\bx$ is not too skew, which is in fact the cases in most applications, the $\ell_2$-norm error $1/\sqrt{k} \cdot \err{2}(\bx)$ is much smaller than the $\ell_1$-norm error $1/k \cdot \err{1}(\bx)$ when $k \ll n$.  In other words, the $\ell_\infty/\ell_2$ guarantee is often much better than the $\ell_\infty/\ell_1$ guarantee. This is also consistent with our experiments in Section~\ref{sec:exp}.  
%\qinsays{show the analysis in Appendix?}

It is folklore that $\ell_\infty/\ell_1$ and $\ell_\infty/\ell_2$ guarantees can be converted into $\ell_1/\ell_1$ and $\ell_2/\ell_2$ guarantees respectively (see, for example, Section II of \cite{GI10}).  More precisely, for $p \in \{1, 2\}$ we can derive from Inequality (\ref{eq:a-1}) that
\begin{eqnarray}
\label{eq:a-1-1}
\norm{\hat{\bx} - \bx}_p = O(1) \cdot \err{p}(\bx),
\end{eqnarray} 
which gives a more intuitive approximation guarantee on the whole vector instead of individual coordinates. 

\smallskip

\paragraph{Bias-Aware Sketches}
The question we try to address in this paper is:
\begin{quote}
\em What if the coordinates in the input vector $\bx$ have a non-trivial bias?
\end{quote}

Let us consider an example. Let $k = 2$, $n = 10$, and
\begin{equation}
\label{eq:example}
\bx = (\mathbf{3}, 100, 101, \mathbf{500}, 102, 98, 97, 100, 99, 103).
\end{equation}
We have $\err{1}(\bx) = 700$,  $\err{2}(\bx) = \sqrt{69428} \approx 263.49$, which are fairly large. It is easy to see that these large errors are due to the fact that most coordinates of $\bx$ are close to $100$ (intuitively, the bias), which results in a heavy tail.  It would be desirable if we can remove this bias first and then perform the sketching and recovery.  

In this paper we propose  bias-aware sketches that achieve the following performance guarantee.  Let $\beta^{(n)}$ be the $n$-dimensional vector with $\beta$ at each coordinate.   For $p \in \{1,2\}$, our sketches can recover an $\hat{\bx}$ such that
\begin{eqnarray}
\label{eq:a-2}
\textstyle \norm{\hat{\bx} - \bx}_\infty = O(k^{-1/p}) \cdot  \min_{\beta} \err{p}(\bx -\beta^{(n)}).
\end{eqnarray}
And we define the {\bf bias} of the input data vector $\bx$ to be
\begin{eqnarray}
\label{eq:a-3}
\beta^* = \arg\min_{\beta} \err{p}(\bx -\beta^{(n)}).
\end{eqnarray}

Clearly, the right hand side (RHS) of Inequality~(\ref{eq:a-2}) is no more than the RHS of Inequality~(\ref{eq:a-1}) (equal when the best bias $\beta$ is $0$). 
In the case when all except at most $k$ coordinates of $\bx$ are close to a non-zero $\beta$, our error bound will be much better than that in (\ref{eq:a-1}).  For the example mentioned earlier, we have $\min_{\beta}\err{1}(\bx - \beta^{(10)}) = 12$ and $\min_{\beta}\err{2}(\bx - \beta^{(10)}) = \sqrt{28} \approx 5.29$ ($\arg\min_\beta = 100$; in this example the bias happens to be the same for both $p = 1$ and $p = 2$), which are significantly smaller than those given by Count-Median and Count-Sketch.  

Same as Inequality~(\ref{eq:a-1-1}), for $p \in \{1, 2\}$ we can derive from (\ref{eq:a-2}) that
\begin{eqnarray}
\label{eq:a-2-1}
\textstyle \norm{\hat{\bx} - \bx}_p = O(1) \cdot  \min_{\beta} \err{p}(\bx -\beta^{(n)}).
\end{eqnarray}

%We note that in the application of detecting the top-$2$ outlier coordinates, for the vector $\bx$ in (\ref{eq:example}), Count-Median and Count-Sketch cannot detect the coordinate with value $3$ since their errors are much larger than $3$ or even the median $100$, while the errors introduced by our bias-aware sketches are small enough for this job.

\begin{remark}
\label{rem:single-bias}
Compared with the single bias $\beta$, one may want to allow multiple bias values.  For example, for the data vector $\by = (200, 100, 50, 50, 50, 50, 100, 100, 100, 10)$, one may want to use two bias values $\beta_1 = 50$ and $\beta_2 = 100$, with $200$ and $10$ being the outliers. Unfortunately, this {\em cannot} be done if we want to obtain an $o(n)$ (sublinear) size sketch where $n$ is the dimension of the input vector, simply because when we have at least two bias values, in the recovery procedure for each of the $n$ coordinates of input vector we need the information of which bias value has been deducted from that coordinate, which costs at least $1$ bit.
\end{remark}

%\begin{remark} 
%\label{rem:average-bias}
%Although we can only allow a single bias value, our bias-aware sketches still work well even for inputs such as $\by$ in Remark~\ref{rem:single-bias}. This is mainly because our sketching algorithms use random hashing functions to partition all the coordinates of the input vector to a few groups (See Algorithm~\ref{alg:l1-sketch} and Algorithm~\ref{alg:l2-sketch} in Section~\ref{sec:bias}).  If we view the data after the random grouping as a new vector, with each coordinate being the sum of all the coordinates hashed into the group, then the new data vector will better fit a single bias value.   As an example, the vector $\by$ in Remark~\ref{rem:single-bias} may be converted to $\by' = ((200+10), (100+50), (50+10), (50+10), (50+10)) = (210, 150, 60, 60, 60)$ if we partition the coordinates to groups of size $2$.   Note that the average ``bias'' of the coordinates falling into each group would be similar if the size of each group is large, which is the interesting case since the number of groups is essentially the sketch size which we want to make to be small. 
%\end{remark} 

\vspace{-1.5mm}

\paragraph{Our Contributions}  
In this paper we have made the following contributions.
\begin{enumerate}
\item We have given a rigorously formalization of the bias-aware sketches, which {\em strictly generalizes} standard linear sketches in the error guarantees.  

\item We have proposed bias-aware sketches with rigorous $\ell_\infty/\ell_1$ and $\ell_\infty/\ell_2$ error guarantees.  We have also shown how to implement our sketches in the streaming model for fast real-time query.

\item We have implemented our algorithms and verified their effectiveness on both synthetic and real-world datasets.  We note that our algorithms significantly outperform the existing algorithms in terms of accuracy for point query. 
%and top-$k$ outliers.
\end{enumerate}

%The rest of the paper is organized as follows. After surveying related work (Section~\ref{sec:related}), and giving some preliminaries (Section~\ref{sec:preliminary}),  we present our bias-aware sketches, analyze their performance, and illustrate their efficient implementations in the streaming model for real-time queries  (Section~\ref{sec:bias}).  We then empirically verify the effectiveness of our algorithms (Section~\ref{sec:exp}).  We conclude the paper in Section~\ref{sec:conclude}. 
%Due to the space constraints all proofs of mathematical theorems are left to the full version of this paper~\cite{CZ16e}.

\section{Related Work}
\label{sec:related}
The history of data sketch/summary can be traced back to Morris' {\em approximate counter}~\cite{M78} and Flajolet and Martin's {\em probabilistic counting} algorithm~\cite{FM85}.  Subsequently, streaming algorithms were extensively investigated since the seminal paper \cite{AMS96} by Alon {\em et al.}  Among them Count-Sketch~\cite{CCFC02} and Count-Min/Count-Median~\cite{CM04} were found particularly useful in many applications from data analytics and mining to query processing and optimizations.  A number of variants of the Count-Min  algorithm have also been proposed, such as Count-Min with conservative update~\cite{EV02,GDC12} and Count-Min-Log with conservative update~\cite{PF15}, but these sketches are {\em not} linear and thus cannot be directly used  in the distributed setting.  Another closely related algorithm is the Counter-braids~\cite{LMPDK08}.  The intent of Counter-braids is to be more bit-efficient than methods which simply use counters. It requires a larger amount of space to execute; and its encoding/decoding procedures are recursive, layer by layer, and thus it cannot answer point query without decoding the whole input vector $\bx$.
Finally, we would like to emphasize that all of the algorithms mentioned above cannot handle data bias.

Deng {\em et al.}~\cite{DR07} attempted to remove the bias in the Count-Min algorithm.  In the high level, at the time of recovering a coordinate mapped to a hash bucket (see CM-matrix in Definition~\ref{def:CM-matrix}), their algorithm averages the coordinates mapped into all other hash buckets to obtain an estimate of the bias presented in the considered bucket. It turns out that such an estimation is too rough to be useful -- their analysis shows that their algorithm can only achieve comparable recovery quality as Count-Sketch.

Yan {\em et al.}~\cite{YZH+15} formulated the bias recovery problem in the context of distributed outlier detection.  We briefly describe how \BOMP\ works.  To sketch a vector $\bx \in \mathbb{R}^n$, \BOMP\ first computes $\by = \Phi \bx$ where $\Phi = [\phi_1, \ldots, \phi_n] \in \mathbb{R}^{t\times n}$, where each entry of $\Phi$ is independently sampled from the Gaussian distribution $\mathcal{N}(0, 1/t)$. In the recovery phase \BOMP\ prepends a new column $\frac{1}{\sqrt{n}}\sum_{i=1}^n \phi_i$ to $\Phi$ to get $\textstyle \Phi' = [\frac{1}{\sqrt{n}}\sum_{i=1}^n \phi_i, \Phi]$,  and then runs \OMP\ ({\em Orthogonal Matching Pursuit}) on $\by$ and $\Phi'$ in $k + 1$ iterations to recover $\tilde{\bx}$ as an approximation of $\bx$.
However, their discussion only focused on the biased $k$-sparse vectors  where all coordinates of $\bx$ are equal to some unknown value $\beta$ except at most $k$ ``outliers'', and did not give a solid theoretical analysis. %\footnote{The correctness analysis for the algorithm \BOMP\ proposed by Yan {\em et al.}\ is based on two unproven conjectures, and does not extend to general vectors,  while we give a {\em complete} theoretical analysis for recovering general vectors with coordinate-wise error bounded by the RHS of (\ref{eq:a-2}), and our sketch size $O(k \log n)$ for general vectors is better than the claimed (but unproven) sketch size $O(k^{1.1} \log n)$ for biased $k$-sparse vectors in \cite{YZH+15}.}  
 Moreover, \OMP\ is very time expensive, and cannot answer point query without decoding the whole vector $\bx$.

%Moreover, our experiments have shown that their Orthogonal Matching Pursuit (\OMP) based {\em iterative} recovery algorithm is very time expensive (thus impractical) to recover vectors of large sizes.  We will give a brief description of their algorithm in Section~\ref{sec:setup} and its practical performance in Section~\ref{sec:time}.

Our work is closely related to the area of compressive sensing. In fact, our linear sketching and recovery algorithms can be seen as natural extensions of the standard compressive sensing sparse recovery algorithms~\cite{CRT06,Donoho06,CM06}. 
%(see \cite{BDDH11,GI10} for introductions).  
In the standard sparse recovery setting the bias of the vector is assumed to be $0$, which does work well for a number of problems in signal processing but its power is somewhat limited for massive data processing where coordinates in vectors may have non-zero biases.   We note that the idea of debiasing can be viewed as a special case of the incoherent dictionary learning~\cite{DET06,GMS03} -- one can add an all-$1$ vector (normalized by $1/\sqrt{n}$) upon the $n$ standard basis vectors. However, as far as we are concerned, the existing recovery algorithms in incoherent dictionary learning use either linear programming or \OMP, which, again, are very time-inefficient on large datasets and do not work for point query.

%\paragraph{Roadmap}

\section{Preliminaries}
\label{sec:preliminary}

We summarize the main notations in this paper in Table~\ref{tab:notation}. A quick scan of the table may be useful since some of the notations are not standard (e.g., a vector minus a scalar value: $\bx - \beta$).

%We often use $O(\cdot), \Omega(\cdot)$ and $\Theta(\cdot)$ to avoid spelling out messy constants in the theoretical analysis, and we have not attempted to optimize these constants.

\begin{table}[!ht]
\centering
\caption{List of notations}
\label{tab:notation}
\begin{tabular}{|l|l|}
\hline
%$\kappa$ & number of sites  \\
%\hline
$[n]$ & $[n] =\{1, 2, \ldots, n\}$  \\
\hline
$\Pr$ & the probability of  \\
\hline
${(\bx)}_i$ or $x_i$ & for $\bx \in \mathbb{R}^n$, both ${(\bx)}_i$ and $x_i$
represent \\
&the $i$-th coordinate of $\bx$ \\
\hline
$\|\bx\|_p$ &  $\|\bx\|_p = (\sum_i|x_i|^p)^{\frac{1}{p}}$ for $\bx = (x_1, \ldots, x_n)$;\\
            &  when $p=\infty$, $\|\bx\|_{\infty} = \max_i|x_i|$\\
\hline
$k$-sparse  & $\bx\in\mathbb{R}^n$ is $k$-sparse if $\bx$ has \emph{at most} $k$\\ 
& non-zero coordinates\\
\hline
$\S_{m}(\bx)$ & set of vectors in $\mathbb{R}^m$ obtained by choosing \\
        & $m\ (\le n)$ coordinates from $\bx \in \mathbb{R}^n$\\
\hline
$\err{p}(\bx)$ &  $\err{p}(\bx) = \min_{k\text{-sparse}~\bx'} \|\bx - \bx'\|_p$ \\ %Equivalently, $\err{p}(x)$ \\
%      & is the $\ell_p$ norm of the vector obtained by dropping the \\
%      & top-$k$ (with largest absolute values) coordinates of $x$.\\
\hline
$\bx - \beta$ & for $\bx \in \mathbb{R}^n, \beta \in \mathbb{R}$, \\
& $\bx - \beta = (x_1 - \beta,  \ldots, x_n -\beta)$ \\
\hline 
$\mean(\bx)$ & for $\bx \in \mathbb{R}^n$, $\mean(\bx) = \frac{1}{n}\sum_{i=1}^nx_i$\\
\hline
$\median(\bx)$ & for $\bx \in \mathbb{R}^n$,  $\median(\bx)=x_{\frac{n+1}{2}}$ for odd $n$, \\
&   $\median(\bx)=(x_\frac{n}{2} + x_{\frac{n}{2}+1})/2$ for even $n$\\
\hline
$\argmin_{\beta}f(\beta)$ & $\argmin_{\beta}f(\beta) = \{\alpha ~|~ f(\alpha) = \min_{\beta}f(\beta)\}$ \\
\hline
%$\argmax_{\beta}f(\beta)$ & $\argmax_{\beta}f(\beta) = \{\alpha ~|~ f(\alpha) = \max_{\beta}f(\beta)\}$ \\
%\hline
$\sigma^2(\bx)$ &  variance of $\bx \in\mathbb{R}^n$; \\
& $\sigma^2(\bx) = \frac{1}{n}\sum_{i=1}^n(x_i - \mean(\bx))^2$\\
\hline
%$\tilde{\sigma}^2(\bx)$ &  quasi-variance of $\bx \in\mathbb{R}^n$; \\
%& $\sigma^2(\bx) = \frac{1}{n-1}\sum_{i=1}^n(x_i - \mean(\bx))^2$\\
%\hline
$\sigma^2(Y)$ &   variance of a random variable $Y$;\\
& $\sigma^2(Y) = \E\left[(Y - \E[Y])^2\right]$\\
\hline
$\Pi$ & CM-Matrix. See Definition \ref{def:CM-matrix}\\
\hline
$\Psi$ & CS-Matrix. See Definition \ref{def:CS-matrix}\\
\hline
$\Upsilon$ & Sampling matrix. See Definition \ref{def:sample-matrix}\\
\hline
\end{tabular}
\end{table}

\smallskip

We would like to introduce two classical linear sketches Count-Median and Count-Sketch, which will be used as components in our algorithms.

\paragraph{Count-Median}
The Count-Median algorithm \cite{CM04} is a linear sketch for achieving $\ell_\infty/\ell_1$-guarantee.  We first introduce the {\em Count-Median} matrix.
\begin{definition}[CM-matrix]
\label{def:CM-matrix}
Let $h : [n] \rightarrow [s]$ be a hash function. A CM-matrix $\Pi(h)\in \{0, 1\}^{s\times n}$ is defined as
\begin{equation*}
\Pi(h)_{i,j} = \left\{
  \begin{array}{ll}
    1  & h(j) = i\\
    0  & h(j) \neq i.
  \end{array}
\right.
\end{equation*}   
\end{definition}

For a vector $\bx \in \mathbb{R}^n$, the following theorem shows that we can recover each coordinate of $\bx$ with a bounded error from $\Theta(\log n)$ random sketching vectors $\Pi(h) \bx$.

\begin{theorem}[\cite{CM04}]
\label{thm:count-median}
Set $s = \Theta(k/\alpha)$ for an $\alpha \in (0,1)$ and $d = \Theta(\log n)$.  Let $h^1, \ldots, h^d : [n] \to [s]$ be $d$ independent random hash functions, and let $\Pi(h^1), \ldots, \Pi(h^d)$ be the corresponding CM-matrices.  Let $\hat{\bx} = (\hat{x}_1, \ldots, \hat{x}_n)$ be a vector such that
$$
\hat{x}_j = \median_{i\in[d]} \left\{ \left(\Pi(h^i)\bx\right)_{h^i(j)} \right\}.
$$
We have
$
\Pr\left [\infnorm{\hat{\bx} - \bx} \leq \alpha / k \cdot \err{1}(\bx)\right] \geq 1 - 1/n.
$
\end{theorem}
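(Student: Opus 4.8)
The plan is to establish the bound for a single coordinate $j$ first, amplify the success probability to $1 - 1/n^2$ by the median, and then union-bound over all $n$ coordinates. Fix $j \in [n]$ and consider one hash function $h^i$. By the definition of the CM-matrix, the bucket containing $j$ holds the sum of all coordinates hashed there, so
$$\left(\Pi(h^i)\bx\right)_{h^i(j)} = x_j + Z^i_j, \qquad Z^i_j = \sum_{l \neq j,\; h^i(l) = h^i(j)} x_l,$$
where $Z^i_j$ is the collision noise. I would let $H \subseteq [n]$ be the set of the $k$ coordinates of largest absolute value (the head) and $T = [n] \setminus H$ the tail, so that $\sum_{l \in T}|x_l| = \err{1}(\bx)$, and split $Z^i_j$ into its head part and tail part.

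Second, I would bound the probability that a single estimate is inaccurate, namely that $|Z^i_j| > \frac{\alpha}{k}\err{1}(\bx)$, by a constant strictly below $1/2$. For the head, each $l \in H\setminus\{j\}$ satisfies $\Pr[h^i(l) = h^i(j)] = 1/s$, so by a union bound the probability that \emph{any} head element collides with $j$ is at most $k/s$. For the tail, since $\E\big[\sum_{l \in T\setminus\{j\},\, h^i(l)=h^i(j)} |x_l|\big] \le \frac{1}{s}\err{1}(\bx)$, Markov's inequality shows the tail noise exceeds $\frac{\alpha}{k}\err{1}(\bx)$ with probability at most $\frac{k}{s\alpha}$. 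Choosing the hidden constant in $s = \Theta(k/\alpha)$ large enough makes both $k/s$ and $\frac{k}{s\alpha}$ at most $1/8$; on the complementary event there is no head collision and the tail noise is bounded, so $|Z^i_j| \le \frac{\alpha}{k}\err{1}(\bx)$. Hence each individual estimate lands within $\frac{\alpha}{k}\err{1}(\bx)$ of $x_j$ with probability at least $3/4$.

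Third, I would amplify by taking the median over the $d = \Theta(\log n)$ independent repetitions. Writing $\Delta = \frac{\alpha}{k}\err{1}(\bx)$, the median $\hat{x}_j$ fails to lie in $[x_j - \Delta,\, x_j + \Delta]$ only if at least half of the $d$ estimates fall outside this interval. Since the estimates are independent and each is outside with probability at most $1/4$, a Chernoff bound gives that the number of ``bad'' estimates exceeds $d/2$ with probability $e^{-\Omega(d)}$; picking the constant in $d = \Theta(\log n)$ large enough drives this below $1/n^2$. A final union bound over the $n$ coordinates then yields $\Pr\big[\infnorm{\hat{\bx}-\bx} > \frac{\alpha}{k}\err{1}(\bx)\big] \le n \cdot n^{-2} = 1/n$, which is exactly the claim.

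The main obstacle is the single-estimate bound of the second step: the head coordinates can be arbitrarily large, so applying Markov's inequality to the full noise $Z^i_j$ would give no meaningful bound. The essential idea is the head/tail split—controlling the few, heavy head collisions by a union bound on the collision event, and the many, light tail contributions by Markov—combined with the fact that the median, unlike the mean, remains accurate even though a constant fraction of the estimates may still be corrupted by a head collision.
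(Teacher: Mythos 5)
The paper does not prove this theorem; it is imported verbatim from the cited Count-Median paper \cite{CM04}, so there is no in-paper proof to compare against. Your argument is correct and is essentially the standard proof from that source: the head/tail decomposition of the collision noise, the union bound over head collisions (probability at most $k/s$), Markov's inequality on the expected tail mass $\err{1}(\bx)/s$ landing in $j$'s bucket, constant-probability success per repetition, Chernoff amplification of the median over $d = \Theta(\log n)$ independent repetitions, and a final union bound over the $n$ coordinates. No gaps.
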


\paragraph{Count-Sketch}
The Count-Sketch algorithm \cite{CCFC02} is a linear sketch for achieving $\ell_\infty/\ell_2$-guarantee. It is similar to Count-Median; the main difference is that it introduces random signs in the sketching matrix.
\begin{definition}[CS-Matrix]
\label{def:CS-matrix}
Let $h : [n] \rightarrow [s]$ be a hash function, and $r:[n] \rightarrow \{-1, 1\}$ be a random sign function. A CS-matrix $\Psi(h, r)\in \{0, 1\}^{s\times n}$ is defined as
\begin{equation*}
\Psi(h, r)_{i,j} = \left\{
  \begin{array}{ll}
    r(j)  & h(j) = i\\
    0  & h(j) \neq i.
  \end{array}
\right.
\end{equation*}   
\end{definition}
Similarly, for a vector $\bx \in \mathbb{R}^n$, we can recover each coordinate of $\bx$ with a bounded error from $\Theta(\log n)$ sketching vectors $\Psi(h, r) \bx$.

\begin{theorem}[\cite{CCFC02}]
\label{thm:count-sketch}
Set $s = \Theta(k/\alpha)$  for an $\alpha \in (0,1)$ and $d = \Theta(\log n)$.  Let $h^1, \ldots, h^d : [n] \to [s]$ be $d$ independent random hash functions, let  $r^1, \ldots, r^d : [n] \to \{-1, 1\}$ be $d$ independent random sign functions, and let $\Psi(h^1, r^1), \ldots, \Psi(h^d, r^d)$ be the corresponding CS-matrices.  Let $\hat{\bx} = (\hat{x}_1, \ldots, \hat{x}_n)$ be a vector such that
$$
\hat{x}_j = \median_{i\in[d]} \left\{ r^i(j) \cdot \left( \Psi(h^i, r^i)\bx\right)_{h^i(j)} \right\}.
$$
We have 
$
\Pr\left [\infnorm{\hat{\bx} - \bx} \leq \alpha / \sqrt{k} \cdot \err{2}(\bx)\right] \geq 1 - 1/n.
$
\end{theorem}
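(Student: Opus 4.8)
The plan is to follow the template that proves Theorem~\ref{thm:count-median}: first bound the per-coordinate error produced by a \emph{single} row of the sketch so that it is small with constant probability, and then amplify this to probability $1-1/n$ by taking the median over the $d=\Theta(\log n)$ independent rows. Fix a coordinate $j\in[n]$ and a single row $i$. Since $r^i(j)^2=1$, the bucket $h^i(j)$ accumulates $x_j$ together with the signed contributions of every other coordinate colliding with $j$, so the single-row estimator expands as
\[
r^i(j)\cdot\left(\Psi(h^i,r^i)\bx\right)_{h^i(j)} \;=\; x_j + Z_i,\qquad Z_i \;=\; \sum_{\substack{\ell\neq j\\ h^i(\ell)=h^i(j)}} r^i(j)\,r^i(\ell)\,x_\ell .
\]
It therefore suffices to control the noise term $Z_i$.

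The first two moments of $Z_i$ are the crux. Because the sign function $r^i$ is mean-zero and (pairwise) independent across distinct indices, $\E[r^i(j)r^i(\ell)]=0$ for $\ell\neq j$, so $\E[Z_i]=0$; moreover every cross term $\ell\neq\ell'$ in $Z_i^2$ carries a factor $\E[r^i(\ell)r^i(\ell')]=0$ and hence vanishes, leaving only the diagonal
\[
\E[Z_i^2] \;=\; \sum_{\ell\neq j}\Pr\!\left[h^i(\ell)=h^i(j)\right] x_\ell^2 \;\le\; \frac1s\sum_{\ell\neq j} x_\ell^2 .
\]
This is where the random signs pay off: the contamination enters as a \emph{sum of squares} (an $\ell_2$ quantity) rather than a sum of absolute values. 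It then remains to replace the full sum by the tail. Let $H\subseteq[n]$ be the index set of the $k$ largest-magnitude coordinates and let $A$ be the event that no $\ell\in H\setminus\{j\}$ hashes to $h^i(j)$. A union bound gives $\Pr[\bar A]\le k/s$, and conditioned on $A$ only tail coordinates collide with $j$, so $\E[Z_i^2\mid A]\le \frac{1}{s}\,(\err{2}(\bx))^2$ up to an $O(1)$ factor from the conditioning.

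Turning moments into a failure bound, Chebyshev's inequality gives
\[
\Pr\!\left[\,|Z_i| > \tfrac{\alpha}{\sqrt{k}}\,\err{2}(\bx) \;\middle|\; A\,\right] \;\le\; \frac{\E[Z_i^2\mid A]}{(\alpha^2/k)\,(\err{2}(\bx))^2} \;=\; O\!\left(\frac{k}{\alpha^2 s}\right).
\]
Together with $\Pr[\bar A]\le k/s$, taking the number of buckets $s$ large enough drives the per-row failure probability below a fixed threshold $q<1/2$. Finally, the median over the $d$ rows can exceed the target error only if at least $d/2$ of the (independent) rows fail individually; since each fails with probability $q<1/2$, a Chernoff bound makes this at most $e^{-\Omega(d)}\le 1/n^2$ for $d=\Theta(\log n)$, and a union bound over the $n$ coordinates yields $\Pr[\infnorm{\hat{\bx}-\bx}\le \tfrac{\alpha}{\sqrt{k}}\err{2}(\bx)]\ge 1-1/n$.

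The main obstacle is the second step: forcing the per-row variance to depend on the \emph{tail} $\err{2}(\bx)$ rather than on the full $\norm{\bx}_2$. Two ingredients are jointly responsible and must be handled with care: the random signs, which collapse $\E[Z_i^2]$ into a cross-term-free sum of squares, and the head-isolation event $A$, which lets us discard the $k$ heaviest coordinates. A secondary technical point is that conditioning on $A$ correlates the remaining hash values, so to be fully rigorous I would avoid the conditional-expectation shorthand and instead bound directly the expected squared contribution of the tail coordinates colliding with $j$ (whose marginal collision probability is still $\le 1/s$), combined with the independent event that the head avoids $j$'s bucket. Fixing the constant so that $q$ is a genuine constant below $1/2$ is exactly what pins down the required number of buckets.
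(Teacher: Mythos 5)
The paper does not actually prove this statement: Theorem~\ref{thm:count-sketch} is quoted from \cite{CCFC02} without proof, so there is no in-paper argument to compare against. Your proposal is the standard proof of the Count-Sketch guarantee and its structure is sound. The unbiasedness and cross-term cancellation in $\E[Z_i^2]$ via pairwise-independent signs, the head-isolation event $A$ that replaces $\norm{\bx}_2$ by $\err{2}(\bx)$, Chebyshev per row, and median amplification over $d=\Theta(\log n)$ independent rows are exactly the ingredients of the original argument. Your closing remark --- that one should not condition on $A$ but instead bound $\Pr[\bar A]+\Pr[\,\abs{Z_i^{\mathrm{tail}}}>t\,]$, where $Z_i^{\mathrm{tail}}$ is the noise restricted to tail coordinates and coincides with $Z_i$ on the event $A$ --- is precisely the right way to make that step rigorous, since $\E[(Z_i^{\mathrm{tail}})^2]\le \err{2}(\bx)^2/s$ holds unconditionally.

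One quantitative point deserves attention. Your Chebyshev step yields a per-row failure probability of order $k/(\alpha^2 s)$, plus $k/s$ from the head event; driving this below a constant $q<1/2$ requires $s=\Omega(k/\alpha^2)$, not the $s=\Theta(k/\alpha)$ stated in the theorem, for which your bound degenerates to $\Theta(1/\alpha)\ge 1$. This mismatch is inherited from the paper's loose restatement of the cited result rather than being a flaw in your argument --- the $\ell_\infty/\ell_2$ guarantee genuinely costs $1/\alpha^2$ buckets where the $\ell_\infty/\ell_1$ guarantee of Theorem~\ref{thm:count-median} costs only $1/\alpha$, and the paper only ever invokes the theorem with $\alpha$ an absolute constant ($s=c_s k$), so nothing downstream breaks. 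Still, your write-up should state the required $s$ explicitly rather than saying ``large enough.''
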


%Note that Count-Sketch gives an error in terms of the $\ell_2$-norm of the  $\bx$, while Count-Median gives an error in terms of the $\ell_1$-norm of $\bx$.  The two error bounds are incompatible in general (note that the leading coefficients are different), but for most vector $\bx$, under the assumption that $n \gg k$, the error of Count-Sketch is better.  \qinsays{Discuss it in the Appendix?}

%\paragraph{Sampling Matrix}
We will use the following sampling matrix.
\begin{definition}[Sampling Matrix]
  \label{def:sample-matrix}
  Let $\Upsilon\in\{0, 1\}^{t \times n}$ be a 0/1 matrix by independently setting for each of the $t$ rows exactly one random coordinate to be $1$. 
\end{definition}

%\paragraph{Chernoff Bound} \label{lem:Chernoff-standard}
%Let $X_1, \ldots, X_n$ be independent Bernoulli random variables such that $\Pr[X_i = 1] = p_i$. Let $X = \sum_{i \in [n]} X_i$. Let $\mu = \E[X]$. It holds that $\Pr[X \ge (1+\delta)\mu] \le e^{-\delta^2\mu/3}$ and $\Pr[X \le (1-\delta)\mu] \le e^{-\delta^2\mu/2}$ for any $\delta \in (0,1)$.

%\begin{fact}[Bernstein's Inequality, See e.g. \cite{BRM15}]
%  \label{fact:Bernstein}
%  Let $\mathcal{X} = (x_1, \ldots, x_N) \in \mathbb{R}^N$, and $X_1, \ldots, X_m$ be $m$ random samples drawn \emph{without} replacement from $\mathcal{X}$, Let $X = \frac{1}{m} \sum_{i=1}^m X_i$,  $a = \min_{i=1}^N x_i$, and $b = \max_{i=1}^N x_i$. Let $\mu = \frac{1}{N}\sum_{i=1}^nx_i$ and $\sigma^2(\mathcal{X}) = \frac{1}{N}\sum_{i=1}^N(x_i-\mu)^2$.
%Then for all $\epsilon > 0$,
%$$
%\Pr\left[|X -\mu| \geq \epsilon  \right] \leq 2\exp\left(-\frac{m\epsilon^2}{2\sigma^2(\mathcal{X}) + (2/3)(b-a)\epsilon}\right).
%$$
%\end{fact}

%\begin{fact}
%  \label{fact:l2}
%  For any $\bx \in \mathbb{R}^t$, we have
%  $$\textstyle \mean(\bx) = \argmin_{\beta}\|\bx - \beta\|_2.$$
%  Consequently,
%$$\min_{\beta}\|\bx - \beta\|_2^2 = \|\bx - \mean(\bx)\|_2^2 = m\cdot\sigma^2(\bx).$$
%\end{fact}
%
%\begin{fact}[See e.g. \cite{T12}]
%  \label{fact:sample}
%  Given $\bx\in\mathbb{R}^n$, we sample \emph{without} replacement $m \le n$ coordinates from $\bx$. Let $X$ be the average of those sampled coordinates. We have
%$$\E[X] = \mean(\bx) \ \ \text{and} \ \ \sigma^2(X) = \frac{(n-m)\cdot\tilde\sigma^2(\bx)}{nm}.$$
%\end{fact}

\section{Bias-Aware Sketches}
\label{sec:bias}

In this section we propose two efficient bias-aware sketches achieving $\ell_\infty/\ell_1$-guarantee and $\ell_\infty/\ell_2$-guarantee respectively.  
%Due to the space constraints, we refer readers to the full version of this paper \cite{CZ16e} for most technical proofs.

\subsection{Warm Up}
\label{sec:warm-up}

The core of our algorithms is to estimate the bias of the input data.
Before presenting our algorithms, we first discuss a few natural approaches that do {\em not} work, and then illustrate high level ideas of our algorithms.
\smallskip

\noindent{\bf Using mean as the bias.}  The first idea is to use the mean of the input vector $\bx$.  However, this cannot lead to any theoretical error guarantee.  Consider the vector $\bx = (\infty, \infty, 50, 50, 50, 50, 50, 50, 50)$ where $\infty$ denotes a very large number, and $k$ is set to be $2$.  The mean of the coordinates of $\bx$ is $\infty$, but the best bias value is $\beta = 50$ which leads to a tail error $0$ (RHS of (\ref{eq:a-2})).  
Nevertheless, using the mean as the bias may work well in datasets where there are not many extreme values.  We will show in our experiments (Section~\ref{sec:exp}) that this is indeed the case for some real-world datasets.
\smallskip

\noindent{\bf Searching the bias in a post-processing step.}
Another idea is to search the best bias value $\beta$ in a post-processing step after performing the existing sketching algorithms such as Count-Sketch and Count-Median, and then subtract it from the original sketch for the recovery.  More precisely, we can binary search the best $\beta$ by computing the RHS of  (\ref{eq:a-2}) a logarithmic number of times and then picking the best $\beta$ value that minimize the error $\err{p}(\bx - \beta^{(n)})$.
This idea looks attractive since we can just reuse the existing sketching algorithms. 
%For example, we may use {\em binary} search if the error term for $p = 1$ or $2$ is a convex function of $\beta$ (which is not necessarily true).   
However, such a post-processing does not fit the streaming setting where we want to answer queries in real-time.  Indeed, in the streaming model we have to redo the binary search of $\beta$ for queries coming in different time steps in the streaming process, which makes the individual point query very slow. 
\smallskip

\noindent{\bf Our approaches.}
In this paper we propose two simple, yet efficient, algorithms to achieve the error guarantee in  (\ref{eq:a-2}), for $p = 1$ and $p = 2$ respectively.  Our algorithms do not need a post-processing step and can thus answer real-time queries in the streaming model.
For $p = 1$, we compute by sampling an approximate median (denoted by $med$) of coordinates in $\bx$, and use it as the bias.  Using the stability of median we can show that $med$ is also an approximate median of the vector $\bx^*$ obtained from $\bx$ by dropping the $k$ ``outliers''.  For  $p = 2$, the idea is still to use the mean.  However, as we have discussed previously, directly using the mean of all items will not give the desired theoretical guarantee, since the mean can be ``contaminated'' by the outliers (extreme values).  We thus choose to employ a Count-Median sketch and use the mean of the ``middle'' buckets in the Count-Median sketch as the bias.   Both algorithms are conceptually very simple, but the complete analysis turns out to be quite non-trivial.  The next two subsections detail our algorithms.

\subsection{Recovery with $\ell_\infty/\ell_1$-Guarantee}
\label{sec:l1}

In this section we give a bias-aware sketch with $\ell_\infty/\ell_1$-guarantee. That is, we try to design a sketching matrix $\Phi \in \mathbb{R}^{t \times n}\ (t \ll n)$ such that from $\Phi \bx$ we can recover an $\hat{\bx}$ satisfying
$
  \|\hat{\bx} - \bx\|_{\infty} = O(1/k)\cdot \min_{\beta}\err{1}(\bx - \beta).
$

\subsubsection{Algorithms}
We use \oneSR\ ($\ell_1$-Sketch/Recover) to denote our algorithm.
Its sketching and recovery procedures are described in Algorithm~\ref{alg:l1-sketch} and Algorithm~\ref{alg:l1-recover} respectively.  For simplicity we assume that the two algorithms can jointly sample hash functions $h^1, \ldots, h^d$ for free (i.e., without any costs).  Indeed, we can simply choose $2$-wise independent hash functions $g, h^i, r^i (i \in [d])$, each of which can be stored in $O(1)$ space. This will not affect any of our mathematical analysis since we will only need to use the second moment of random variables. 
%(same in the proofs for Theorem~\ref{thm:count-median} and Theorem~\ref{thm:count-sketch} for the CM-sketch and CS-sketch, see \cite{CCFC02,CM04}).
%, except a place  in the proof for Lemma~\ref{lem:estimate-beta} where we used the Chernoff bound. But that usage of the Chernoff bound is just for convenience, and can easily be replaced by the Cherbyshev inequality without affecting the asymptotic bounds. 
Thus the total extra space to store random hash functions can be bounded by $O(d) = O(\log n)$, and is negligible compared with the sketch size $O(k \log n)$.  In the distributed model we can ask the coordinator to generate these hash functions and then send to all sites, and in the streaming model we can precompute them at the beginning and store them in the memory.

In the sketching phase of \oneSR, we simply use sampling to estimate the best $\beta$ that minimizes $\err{1}(\bx - \beta)$. More precisely, we sample $\Theta(\log n)$ coordinates from $\bx$ and take the median (denoted by $\hat{\beta}$), which we will show is good for the $\ell_\infty/\ell_1$-guarantee. The final (implicit) sketching matrix $\Phi$ is a vertical concatenation of $d = \Theta(\log n)$ independent CM-matrix $\Pi(h^i)$'s and the sampling matrix $\Upsilon$.

In the recovery phase, we use Count-Median to recover $\hat{\bz}$ as an approximation to the {\em de-biased} vector $\bx - \hat{\beta}$; consequently $\hat{\bz} + \hat{\beta}$ will be a good approximation to $\bx$. 

\begin{algorithm}[t]
\DontPrintSemicolon % Some LaTeX compilers require you to use \dontprintsemicolon instead
\KwIn{$\bx = (x_1, \ldots, x_n) \in \mathbb{R}^n$}
\KwOut{sketch of $\bx$ and a set $S \subseteq \{x_1, \ldots, x_n\}$}
\tcc{assume $s = c_s k$ for a constant $c_s \ge 4$; $d = \Theta(\log n)$; $h^1, \ldots, h^d : [n] \to [s]$ are common knowledge}
generate a sampling matrix $\Upsilon\in\{0,1\}^{20 \log n \times n}$\;
$\forall i \in [d], \ \by^i \leftarrow \Pi(h^i)\bx$\;
$S \leftarrow \Upsilon\bx$\;\label{line:sample-t}
\Return{$S, \{\by^1, \ldots, \by^d\}$}\;
\caption{{\sc $\ell_1$-Sketch$(\bx)$}}
\label{alg:l1-sketch}
\end{algorithm}

\begin{algorithm}[t]
\DontPrintSemicolon % Some LaTeX compilers require you to use \dontprintsemicolon instead
\KwIn{$S$: a set of randomly sampled coordinates of $\bx$; $\{\by^i =\Pi(h^i)\bx ~|~ i\in[d]\}$}
\KwOut{$\hat{\bx}$ as an approximation of $\bx$}
\tcc{assume $s = c_s k$ for a constant $c_s \ge 4$; $d = \Theta(\log n)$; $h^1, \ldots, h^d : [n] \to [s]$ are common knowledge}
$\hat{\beta} \leftarrow \median \text{ of coordinates in } {S}$\;\label{line:gamma-median}
$\forall i \in [d], \ \boldsymbol{\pi}^i \leftarrow$ coordinate-wise sum of columns of $\Pi(h^i)$\;
$\forall i \in [d], \ \tilde{\by}^i \leftarrow \by^i - \hat{\beta}\boldsymbol{\pi}^i$\; 
\tcc{Run Count-Median recovery}
$\forall j \in [n], \ \hat{z}_j \leftarrow \text{median}_{i\in[d]} \left\{ \left( \tilde{\by}^i \right )_{h^i(j)} \right\}$ \label{eq:e-1}\;
$\hat{\bx} \leftarrow \hat{\bz} + \hat{\beta}$ \label{eq:e-2}\;
\Return{$\hat{\bx}$}\;
\caption{{\sc $\ell_1$-Recover$(S, \{\by^1, \ldots, \by^d\}\})$}}
\label{alg:l1-recover}
\end{algorithm}

\smallskip
The following theorem summarizes the performance of \oneSR.  One can compare it with Theorem~\ref{thm:count-median} for Count-Median.
\begin{theorem}
\label{thm:l1}
There exists a bias-aware sketching scheme such that for any $\bx \in \mathbb{R}^n$, it computes the sketch $\Phi x$, and then recovers an $\hat{\bx}$ as an approximation to ${\bx}$ from $\Phi\bx$ satisfying the following.
\begin{equation}
\label{eq:l1-correctness}
  \Pr[ \|\hat{\bx} - \bx\|_{\infty} \leq C_1/k \cdot \min_{\beta}\err{1}(\bx - \beta) ] \ge 1 - C_2/n,
\end{equation}
where $C_1, C_2 > 0$ are two universal constants.
The sketch can be constructed in time $O(n \log n)$;  the sketch size is bounded by $O(k \log n)$;  the recovery can be done in time $O(n \log n)$.
\end{theorem}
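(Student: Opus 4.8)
The plan is to exploit the fact that \oneSR\ de-biases the sketch and then simply invokes Count-Median, so that conditioned on $\hat\beta$ being a good bias, Theorem~\ref{thm:count-median} does the rest. Concretely, observe that $\tilde{\by}^i = \by^i - \hat\beta\bpi^i = \Pi(h^i)(\bx - \hat\beta)$, so the recovered $\hat{\bz}$ is exactly the Count-Median recovery of the de-biased vector $\bx - \hat\beta$. Writing $\textrm{OPT} = \min_\beta \err{1}(\bx - \beta)$, once we know that $\err{1}(\bx - \hat\beta) = O(\textrm{OPT})$ and that Count-Median succeeds, Theorem~\ref{thm:count-median} with $s = \Theta(k)$ (i.e.\ $\alpha = \Theta(1)$) gives $\infnorm{\hat{\bz} - (\bx - \hat\beta)} = O(\textrm{OPT}/k)$; since $\hat{\bx} = \hat{\bz} + \hat\beta$ we get $\infnorm{\hat{\bx} - \bx} = \infnorm{\hat{\bz} - (\bx - \hat\beta)} = O(\textrm{OPT}/k)$, the claimed bound. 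The two randomized components are independent ($\Upsilon$ versus the $h^i$'s), so I can condition on the value of $\hat\beta$ and apply the Count-Median guarantee to the now-fixed vector $\bx - \hat\beta$.

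The heart of the argument is therefore the lemma that the sampled median is a near-optimal bias: $\err{1}(\bx - \hat\beta) = O(\textrm{OPT})$ with probability $1 - 1/\mathrm{poly}(n)$. I would prove this in three steps. (i) By a Chernoff bound over the $20\log n$ sampled coordinates, $\hat\beta$ is an $\eps$-approximate median of $\bx$ for a small constant $\eps$ with probability $1 - 1/\mathrm{poly}(n)$; that is, at least $(1/2 - \eps)n$ coordinates of $\bx$ lie on each side of $\hat\beta$. (ii) Let $\beta^*$ attain $\textrm{OPT}$ and assume w.l.o.g.\ $\Delta := \hat\beta - \beta^* \ge 0$. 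At least $(1/2-\eps)n$ coordinates satisfy $x_i \ge \hat\beta = \beta^* + \Delta$, hence $|x_i - \beta^*| \ge \Delta$; since $\err{1}(\bx - \beta^*)$ discards only the $k$ largest deviations, at least $(1/2-\eps)n - k$ of these coordinates survive and each contributes at least $\Delta$, giving $\textrm{OPT} \ge ((1/2-\eps)n - k)\,\Delta$. For $k \le cn$ with a suitable constant $c < 1/2$ this yields $|\hat\beta - \beta^*| = \Delta = O(\textrm{OPT}/n)$. (iii) Using $\beta^*$'s optimal support $T^*$ (with $|T^*| = n-k$) as a possibly suboptimal support for $\hat\beta$ together with the triangle inequality, $\err{1}(\bx - \hat\beta) \le \sum_{i \in T^*}|x_i - \hat\beta| \le \textrm{OPT} + (n-k)\,|\hat\beta - \beta^*| = O(\textrm{OPT})$.

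I expect step (ii) --- the median-stability argument bounding $|\hat\beta - \beta^*|$ by $O(\textrm{OPT}/n)$ --- to be the main obstacle, since it is where ``approximate median of $\bx$'' must be converted into ``good bias for the de-biased tail'', and it is the only place the number of dropped coordinates $k$ enters; in particular the argument needs $k$ bounded away from $n/2$, which is anyway the regime of interest since otherwise the sketch size $O(k\log n)$ is not sublinear. A secondary subtlety is making the conditioning of the first paragraph rigorous: the event ``$\hat\beta$ is a good bias'' depends only on $\Upsilon$ and is thus independent of the hash functions, so a union bound over this event (failure $1/\mathrm{poly}(n)$) and the Count-Median failure ($1/n$) gives overall success probability $1 - C_2/n$.

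Finally, the resource bounds are routine: each $\by^i = \Pi(h^i)\bx$ costs $O(n)$ and there are $d = \Theta(\log n)$ of them, and drawing $\Upsilon\bx$ costs $O(\log n)$, so construction is $O(n\log n)$; the sketch consists of $d$ blocks of size $s = \Theta(k)$ plus $O(\log n)$ samples, i.e.\ $O(k\log n)$; and recovery computes $\hat\beta$ in $O(\log n)$, the bucket-count vectors $\bpi^i$ and the de-biased sketches $\tilde{\by}^i$ in $O(n\log n)$ total, and a coordinate-wise median over $d$ values for each of the $n$ coordinates in $O(n\log n)$.
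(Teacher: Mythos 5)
Your proposal is correct and follows the paper's overall decomposition (estimate the bias by the median of $\Theta(\log n)$ sampled coordinates, de-bias the CM-sketches, and apply the Count-Median guarantee to the now-fixed vector $\bx - \hat{\beta}$, using the independence of $\Upsilon$ and the $h^i$'s for the union bound), but the heart of the argument --- why the sampled median is a near-optimal bias --- is proved by a genuinely different route. The paper argues in three steps: Lemma~\ref{lem:beta-median} shows the optimal bias is exactly $\median(\bx^*)$ for the trimmed vector $\bx^*$; Lemma~\ref{lem:median-approx} shows any point between the $25$th and $75$th percentiles of $\bx^*$ is a factor-$2$ approximate $\ell_1$-center of $\bx^*$; and Lemma~\ref{lem:coordinates-sampling} places the sampled median in $[x_{n/2-n/6},x_{n/2+n/6}]$, which for $k=O(n/\log n)$ lies in the required percentile range of $\bx^*$. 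You never identify the optimizer: instead you bound $|\hat{\beta}-\beta^*|=O(\mathrm{OPT}/n)$ directly, by observing that at least $(1/2-\eps)n-k$ coordinates surviving in the optimal tail each contribute at least $|\hat{\beta}-\beta^*|$ to $\mathrm{OPT}$, and then conclude $\err{1}(\bx-\hat{\beta})\le \mathrm{OPT}+(n-k)\,|\hat{\beta}-\beta^*|=O(\mathrm{OPT})$ by reusing the optimal support $T^*$ and the triangle inequality. Both arguments require $k$ bounded away from a constant fraction of $n$, which both you and the paper correctly dismiss as the only interesting regime. The paper's route buys the clean constant $2$ and an exact characterization of the optimal bias; yours is more modular, in that it uses only the percentile guarantee on $\hat{\beta}$ plus a Lipschitz-type perturbation bound, and would carry over unchanged to any other bias estimator with the same percentile property. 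Your conditioning argument and resource accounting match the paper's.
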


As mentioned in the introduction,  we can convert  $\ell_\infty/\ell_1$ guarantee to $\ell_1/\ell_1$ guarantee.  
\begin{corollary}
\label{cor:l1}
The $\hat{\bx}$ recovered in Theorem~\ref{thm:l1} also guarantees that with probability $1 - O(1/n)$, we have 
$$
\|\hat{\bx} - \bx\|_1 = O(1) \cdot \min_{\beta}\err{1}(\bx - \beta).
$$
\end{corollary}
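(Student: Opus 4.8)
The plan is to reduce the claim to the standard folklore conversion from an $\ell_\infty/\ell_1$ guarantee to an $\ell_1/\ell_1$ guarantee (cf.\ Section~II of \cite{GI10}), applied not to $\bx$ itself but to the \emph{de-biased} vector. Let $\beta^*$ be a minimizer of $\err{1}(\bx - \beta)$, and set $\bx^* = \bx - \beta^*$ and $\hat{\bx}^* = \hat{\bx} - \beta^*$. Subtracting the same scalar from both vectors leaves their difference unchanged, so $\hat{\bx}^* - \bx^* = \hat{\bx} - \bx$; in particular $\norm{\hat{\bx}^* - \bx^*}_1 = \norm{\hat{\bx} - \bx}_1$, and, conditioning on the success event of Theorem~\ref{thm:l1} (which holds with probability $1 - C_2/n = 1 - O(1/n)$), we have $\eta := \norm{\hat{\bx}^* - \bx^*}_\infty \le \frac{C_1}{k}\err{1}(\bx^*)$, where $\err{1}(\bx^*) = \min_\beta \err{1}(\bx - \beta)$. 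Thus it suffices to establish the $\ell_1/\ell_1$ bound for the pair $(\hat{\bx}^*, \bx^*)$, which is now a bias-free instance.

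Following the folklore argument, I would keep only the $k$ largest-magnitude coordinates of $\hat{\bx}^*$, obtaining a $k$-sparse vector supported on a set $S'$ with $\abs{S'} = k$ (equivalently, this corresponds to pruning the recovered de-biased estimate $\hat{\bz}$ and then re-adding the bias). Let $S$ be the support of the $k$ largest-magnitude coordinates of $\bx^*$. Decomposing the error over $S'$ and its complement gives $\norm{\hat{\bx}^* - \bx^*}_1 \le \sum_{i \in S'}\abs{\hat{x}^*_i - x^*_i} + \sum_{i \notin S'}\abs{x^*_i}$. The first sum is at most $k\eta \le C_1 \err{1}(\bx^*)$. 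For the second sum I would split the complement of $S'$ into $S \setminus S'$ and $\overline{S} \cap \overline{S'}$; the latter contributes at most $\sum_{i \notin S}\abs{x^*_i} = \err{1}(\bx^*)$ by the definition of $S$.

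The main obstacle is the support-mismatch term $\sum_{i \in S \setminus S'}\abs{x^*_i}$, i.e.\ the true head coordinates that the pruning step failed to capture. To control it I would fix any bijection between $S \setminus S'$ and $S' \setminus S$ (these sets have equal size since $\abs{S} = \abs{S'} = k$). For a matched pair $(i, j)$, the coordinate $j$ lies in the top-$k$ of $\hat{\bx}^*$ while $i$ does not, so $\abs{\hat{x}^*_j} \ge \abs{\hat{x}^*_i}$, and combining with the $\ell_\infty$ bound twice yields $\abs{x^*_i} \le \abs{\hat{x}^*_i} + \eta \le \abs{\hat{x}^*_j} + \eta \le \abs{x^*_j} + 2\eta$. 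Summing over the matching and using $S' \setminus S \subseteq \overline{S}$ gives $\sum_{i \in S\setminus S'}\abs{x^*_i} \le \sum_{j \notin S}\abs{x^*_j} + 2k\eta \le (1 + 2C_1)\err{1}(\bx^*)$. Adding the three contributions shows $\norm{\hat{\bx} - \bx}_1 = \norm{\hat{\bx}^* - \bx^*}_1 = O(1)\cdot\err{1}(\bx^*) = O(1)\cdot\min_\beta\err{1}(\bx - \beta)$, which is the claim; everything beyond the matching step is routine triangle-inequality bookkeeping.
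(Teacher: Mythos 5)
Your proof is correct and is exactly the folklore $\ell_\infty/\ell_1 \rightarrow \ell_1/\ell_1$ conversion that the paper invokes via \cite{GI10} without spelling out, applied to the de-biased vector $\bx - \beta^*$; the bookkeeping checks out and yields the explicit constant $2 + 3C_1$. The one point worth flagging is that this bound (like the folklore one) holds for the \emph{top-$k$-pruned} estimate rather than the raw $n$-dimensional $\hat{\bx}$ --- for the unpruned vector the coordinate-wise errors can sum to $\Theta(n/k)\cdot\min_\beta\err{1}(\bx-\beta)$ --- so your insertion of the pruning step is necessary (and arguably fixes an imprecision in the corollary's statement), while your continued use of $\norm{\hat{\bx}^*-\bx^*}_1$ for the pruned error is only a harmless notational conflation.
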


\subsubsection{Analysis}

\paragraph{Correctness}
\ifdefined\FULL
\else
Due to the space constraints, some proofs will be omitted and they can be found in the full version of this paper~\cite{CZ16e}.
\fi
%Due to the space constraints, we leave the whole technical proof of the correctness of the Theorem~\ref{thm:l2} to the full version of this paper~\cite{CZ16e}, and give a brief proof sketch here.

Let $\bar{\beta}$ be any $\beta$ that minimizes the $\ell_1$-norm error $\err{1}(\bx - \beta)$.  Let $\bx^*$ be the vector obtained by dropping the $k$ coordinates from $\bx$  that deviate the most from $\bar{\beta}$.
We first show:

%\begin{equation}
%\label{eq:ell1-1}
%\|\bx^* - \bar{\beta}\|_1 = \| \bx^* -\median(\bx^*)\|_1.
%\end{equation}

\begin{lemma}
\label{lem:beta-median}
Given $\bx\in\mathbb{R}^n$, pick any $\bar{\beta} \in \argmin_{\beta}\err{1}(\bx - \beta)$. Let $\bx^*\in \S_{n-k}(\bx)$ be the vector obtained by dropping the $k$ coordinates that deviate the most from $\bar{\beta}$, we must have
\begin{equation}
\label{eq:ell1-1}
\|\bx^* - \bar{\beta}\|_1 = \| \bx^* -\median(\bx^*)\|_1.
\end{equation}
\end{lemma}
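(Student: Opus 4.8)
The plan is to reduce the statement to the classical one-dimensional fact that, for any fixed multiset of reals, a constant minimizing the $\ell_1$ deviation is a median of that multiset. Concretely, I would show that $\bar{\beta}$ itself attains $\min_{\beta}\|\bx^* - \beta\|_1$; since $\median(\bx^*)$ also attains this minimum, the two sides of~(\ref{eq:ell1-1}) must coincide.

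First I would record the classical fact applied to the $(n-k)$ coordinates of $\bx^*$: the value $\median(\bx^*)$ (as defined in Table~\ref{tab:notation}) minimizes the map $\beta \mapsto \|\bx^* - \beta\|_1$. This yields one of the two inequalities for free, namely $\|\bx^* - \median(\bx^*)\|_1 \le \|\bx^* - \bar{\beta}\|_1$.

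Next I would establish the reverse inequality from the optimality of $\bar{\beta}$ for the full vector. The key identity is that, by the very definition of $\bx^*$ (drop the $k$ coordinates of $\bx$ deviating most from $\bar{\beta}$), the retained coordinates realize the tail error, i.e. $\err{1}(\bx - \bar{\beta}) = \|\bx^* - \bar{\beta}\|_1$. Now set $m = \median(\bx^*)$. Dropping that same set of $k$ coordinates is one feasible (not necessarily optimal) way to evaluate the tail error of $\bx - m$, so $\err{1}(\bx - m) \le \|\bx^* - m\|_1$. Combining these with $\err{1}(\bx - \bar{\beta}) \le \err{1}(\bx - m)$, which holds because $\bar{\beta}$ minimizes $\err{1}(\bx - \beta)$, I obtain
$$\|\bx^* - \bar{\beta}\|_1 = \err{1}(\bx - \bar{\beta}) \le \err{1}(\bx - m) \le \|\bx^* - m\|_1 = \|\bx^* - \median(\bx^*)\|_1.$$
Chaining this with the inequality from the previous paragraph forces equality throughout, which is precisely~(\ref{eq:ell1-1}).

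I do not anticipate a serious obstacle, as the argument is short. The one point requiring care is the pair of relations $\err{1}(\bx - \bar{\beta}) = \|\bx^* - \bar{\beta}\|_1$ and $\err{1}(\bx - m) \le \|\bx^* - m\|_1$: both come from reading $\err{1}(\bx - \beta)$ as the minimum over all choices of $k$ coordinates to zero out, where for $\beta = \bar{\beta}$ the optimal drop set is exactly the one defining $\bx^*$, whereas for $\beta = m$ that same drop set is merely a valid and possibly suboptimal choice. Making this "drop the same $k$ coordinates" comparison explicit is the crux of the proof; the rest is the standard median-minimizes-$\ell_1$ fact together with the optimality of $\bar{\beta}$.
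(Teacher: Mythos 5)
Your proof is correct and rests on exactly the same ingredients as the paper's argument --- the identity $\err{1}(\bx-\bar{\beta})=\|\bx^*-\bar{\beta}\|_1$ coming from the definition of the drop set, the feasibility bound $\err{1}(\bx-\median(\bx^*))\le\|\bx^*-\median(\bx^*)\|_1$ from reusing that same drop set, the optimality of $\bar{\beta}$, and the fact that the median minimizes $\beta\mapsto\|\bx^*-\beta\|_1$ --- only packaged as a direct two-sided sandwich instead of the paper's proof by contradiction. The one small advantage of your arrangement is that it needs the median merely to \emph{attain} the minimum rather than to be its \emph{unique} minimizer, so the paper's ``assume $(n-k)$ is odd'' case split (and the deferred even case) is not needed.
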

\ifdefined\FULL
\begin{proof}
For convenience we assume that $(n-k)$ is odd, and then $\|\bx^* -\beta\|_1$ reaches the minimum only when $\beta = \median(\bx^*)$. It is easy to verify that our lemma also holds when $(n-k)$ is even. Under this assumption, we only need to show $\bar{\beta} = \median(\bx^*)$. 

We prove by contradiction. Suppose $\bar{\beta} \neq \median(\bx^*)$, then 
$$\err{1}(\bx - \median(\bx^*)) \leq \|\bx^* - \median(\bx^*)\|_1 < \err{1}(\bx - \bar{\beta}),$$ contradicting the definition of $\bar{\beta}$.
\end{proof}
\fi

Lemma~\ref{lem:beta-median} gives a more intuitive understanding of the best $\beta$ that minimizes $\err{1}(\bx - \beta)$, and it connects to the idea that the  median of coordinates works. But we are not quite there yet since in (\ref{eq:ell1-1}) we need the {\em exact} median of a vector $\bx^*$ that we do {\em not} know before figuring out $\bar{\beta}$.  To handle this we need the followings two lemmas.  

The first lemma says that a value that is {\em close} (but not necessary equal) to the median of coordinates of $\bx^*$ can be used to approximate the best $\bar{\beta}$.
\begin{lemma}
  \label{lem:median-approx}
  Given a vector $\bx \in \mathbb{R}^m$ with its coordinates sorted non-decreasingly: $x_1 \leq x_2 \leq \ldots \leq x_m$, for any $j$ such that $\frac{m}{4} < j < \frac{3m}{4}$, we have
$$ \sum_{i\in[m]}|x_i - x_j| \leq 2\cdot \min_{\beta}\sum_{i\in[m]}|x_i - \beta| .$$
\end{lemma}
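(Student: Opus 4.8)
The plan is to study the convex, piecewise-linear function $f(\beta) := \sum_{i\in[m]}|x_i - \beta|$ and compare its value at the candidate point $\beta = x_j$ against its minimum. The minimum of $f$ over $\beta$ is attained at $\median(\bx)$, so writing $\mu := \median(\bx)$, the right-hand side of the claimed inequality is exactly $f(\mu)$, and the goal becomes $f(x_j) \le 2 f(\mu)$. Exploiting the reflection $\beta \mapsto -\beta$ (negating every coordinate), I may assume $x_j \ge \mu$ and set $\Delta := x_j - \mu \ge 0$; the case $x_j \le \mu$ is symmetric. Thus the whole argument reduces to controlling the "overpayment" $f(x_j) - f(\mu)$ incurred by using the near-median $x_j$ in place of the true minimizer $\mu$.

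First I would upper bound this overpayment by a term-by-term comparison. For every coordinate the triangle inequality gives $|x_i - x_j| - |x_i - \mu| \le \Delta$, while for each index $i$ with $x_i \ge x_j$ the inequality $\mu \le x_j \le x_i$ makes this difference exactly $-\Delta$. Grouping the indices according to the sorted order (there are $m-j$ indices $i>j$, all with $x_i\ge x_j$, and the remaining $j$ indices bounded crudely by $\Delta$ each) yields $f(x_j) - f(\mu) \le (2j-m)\Delta$. Second, I would lower bound $f(\mu)$ by keeping only the coordinates lying beyond $x_j$: since $x_i \ge x_j$ forces $x_i - \mu \ge \Delta$, summing over the $m-j$ such indices gives $f(\mu) \ge (m-j)\Delta$, hence $\Delta \le f(\mu)/(m-j)$. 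Combining the two estimates bounds $f(x_j)$ by a constant multiple of $f(\mu)$, where the constant is governed entirely by how large the index $m-j$ (the number of coordinates on the far side of $x_j$) is relative to $m$. This is precisely where the window $\tfrac{m}{4} < j < \tfrac{3m}{4}$ enters: it forces a constant fraction of the coordinates to lie on each side of $x_j$, so neither $j$ nor $m-j$ can be too small, and the symmetric treatment of the two cases $x_j \gtrless \mu$ keeps both ratios bounded.

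The main obstacle I anticipate is extracting the \emph{sharp} constant from this counting step. The crude grouping above, in which every index $i\le j$ is charged the full $\Delta$, is wasteful: the coordinates lying strictly between $\mu$ and $x_j$ contribute strictly less than $\Delta$, and the index $i=j$ itself should be folded into the favorable side, so obtaining the claimed factor rather than a weaker constant will require this finer accounting of the mass between $\mu$ and $x_j$. I also expect to spend care on two routine but error-prone points: ties among the $x_i$, so that the quantities $j$ and $m-j$ are read as index counts rather than value counts, and the parity of $m$ in the definition of $\median$, which shifts the relevant count by one and must be tracked consistently across both the numerator and denominator bounds.
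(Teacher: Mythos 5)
Your two estimates are both correct: writing $f(\beta) := \sum_{i\in[m]}|x_i-\beta|$, $\mu := \median(\bx)$ and $\Delta := x_j - \mu \ge 0$, the term-by-term comparison gives $f(x_j)-f(\mu) \le j\Delta - (m-j)\Delta = (2j-m)\Delta$, and keeping only the indices $i>j$ gives $f(\mu) \ge (m-j)\Delta$; together these yield $f(x_j) \le \frac{j}{m-j}\,f(\mu) < 3\,f(\mu)$ for $j < \frac{3m}{4}$, with the symmetric bound when $x_j \le \mu$. The genuine gap is the step you defer to a ``finer accounting'': no refinement can bring the constant down to $2$, because the inequality as stated is false. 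Take $m=99$, $x_1=\dots=x_{27}=0$, $x_{28}=\dots=x_{99}=1$, and $j=27$, which satisfies $\frac{m}{4} < j < \frac{3m}{4}$. Then $\sum_i|x_i-x_j| = 72$ while $\min_\beta\sum_i|x_i-\beta| = f(1) = 27$, and $72 > 2\cdot 27$. Letting the number of zeros decrease toward $m/4$ drives the ratio up to $3$, so the constant $\frac{j}{m-j}\to 3$ that your argument produces is in fact tight for this window; you have proved the best possible version of the statement, and the refinement you anticipate spending effort on does not exist.

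For what it is worth, the paper's own proof is where the phantom constant $2$ originates: it asserts the identity $f(x_j)-f(x_t) = \sum_{i=1}^{t-j} i\,(x_{t-i+1}-x_{t-i})$, but the correct coefficient of the gap $x_{l+1}-x_l$ (for $j \le l \le t-1$) in this difference is $m-2l = 2(t-l)-1$, not $t-l$; the claimed identity undercounts by essentially a factor of two (on the example above it gives $23$ where the true difference is $45$). This does not damage anything downstream: the proof of Theorem~\ref{thm:l1} only needs $\err{1}(\bx-\hat{\beta}) = O(1)\cdot\min_\beta\err{1}(\bx-\beta)$, so replacing $2$ by $3$ in the lemma and in the chain of inequalities leading to (\ref{eq:lemma-6}) repairs the argument, and your proof --- stopped at the constant $3$ rather than pushed toward $2$ --- is a correct proof of that corrected lemma, modulo the routine parity and tie-breaking details you already flagged.
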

\ifdefined\FULL
\begin{proof}
  For simplicity we assume $m$ is odd; the even case can be handled similarly. Let $t = (m + 1)/2$ be the index of the median coordinate. 
If $j = t$ then we are done. Otherwise, w.l.o.g., we assume $j < t$. We have
\begin{eqnarray*}
  &&\left(\sum_{i=1}^m|x_i - x_j|\right) - \left(\sum_{i=1}^m|x_i - x_t|\right)\\
                                       &=&  \sum_{i=1}^{t-j}i\cdot (x_{t-i+1} - x_{t-i})\\
                                       & =& - (t-j)\cdot x_j + \sum_{i=j+1}^t x_i \\
                                       & =& \sum_{i=j+1}^{t} (x_i - x_j)\\
                                       & \leq& \sum_{i=1}^{m/4} (x_t - x_i) \ \  \left(\text{since}\  \frac{m}{4} < j < t = \frac{m+1}{2}\right) \\
                                       &\leq& \sum_{i=1}^m |x_i - x_t|.
\end{eqnarray*}
The lemma follows.
\end{proof}

\fi

The second lemma says that the median of $O(\log n)$ randomly sampled coordinates of $\bx$ is close to the median of coordinates of the unknown vector $\bx^*$.
\begin{lemma}
  \label{lem:coordinates-sampling}
  Given a vector $\bx \in \mathbb{R}^n$ with its coordinates sorted non-decreasingly: $x_1\leq x_2 \leq \ldots \leq x_n$, if we randomly sample with replacement $t = 20 \log n$ coordinates from $\bx$, then with probability at least $1 - 1/n$ the median of the $t$ samples falls into the range $[x_{n/2-n/6}, x_{n/2+n/6}]$.
\end{lemma}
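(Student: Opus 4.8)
The plan is to reduce the statement to a standard concentration (Chernoff) argument on a binomial count. Since the coordinates are sorted, I treat each of the $t$ samples as a uniformly random \emph{index} in $[n]$; the value of the sample is then $x$ evaluated at that index, and its rank among the coordinates is determined by the index (ties among equal coordinates can be broken by index, which only helps). The two ways the sample median can fail to lie in $[x_{n/2-n/6}, x_{n/2+n/6}]$ are: (i) the median falls strictly below $x_{n/2-n/6}$, or (ii) it falls strictly above $x_{n/2+n/6}$. I will bound each of these two bad events separately and then take a union bound.

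Consider event (i). For the median of the $t = 20\log n$ samples to lie below $x_{n/2-n/6}$, at least half of the samples must come from the bottom $n/2 - n/6 = n/3$ coordinates. Let $X$ be the number of samples whose index is at most $n/2 - n/6$. Then $X$ is a sum of $t$ i.i.d.\ Bernoulli variables, each equal to $1$ with probability $p = (n/2-n/6)/n = 1/3$, so $\E[X] = t/3$, and event (i) forces $X \ge t/2$. This is a deviation of $X$ above its mean by a constant multiplicative factor ($t/2 = \tfrac{3}{2}\E[X]$), exactly the regime controlled by a Chernoff bound. Event (ii) is symmetric: the number $X'$ of samples with index at least $n/2+n/6$ also has mean $t/3$, and event (ii) forces $X' \ge t/2$.

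To finish, I would apply the relative-entropy form of the Chernoff bound, $\Pr[X \ge \tfrac12 t] \le \exp(-t\, D(\tfrac12 \,\|\, \tfrac13))$ with $D(\tfrac12\|\tfrac13) = \tfrac12\ln\tfrac{9}{8} > 0.05$, so that each bad event has probability at most $\exp(-0.05\, t)$. Plugging in $t = 20\log n$ gives a bound of roughly $n^{-1}$ per event; a union bound over the two events then yields failure probability at most $1/n$ (after checking that the constant $20$ indeed suffices, which is the routine part of the calculation).

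The main thing to be careful about --- rather than any real obstacle --- is the bookkeeping around the median and ties: I must make sure that ``the sample median lies below $x_{n/2-n/6}$'' genuinely implies ``at least $\lceil t/2\rceil$ samples land in the bottom $n/3$ coordinates,'' independent of how equal coordinate values are ordered. Handling this cleanly (e.g.\ by defining rank via index and being slightly generous in the counting) removes any ambiguity; the concentration step itself is then immediate.
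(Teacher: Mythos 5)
Your proposal is correct and follows essentially the same route as the paper: decompose the failure event into the two one-sided events (at least half the samples below $x_{n/2-n/6}$, or at least half above $x_{n/2+n/6}$), observe each corresponds to a Binomial$(t,1/3)$ count reaching $t/2$, apply a Chernoff bound to each, and union-bound. Your use of the relative-entropy form with $D(\tfrac12\|\tfrac13)=\tfrac12\ln\tfrac98$ is if anything slightly more careful than the paper's stated exponent, and your attention to tie-breaking is a harmless refinement of the same argument.
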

\ifdefined\FULL
\begin{proof}
Let $X_1, \ldots, X_t$ be the samples we pick. 
The median of them does not fall into the range
$$[x_{n/2-n/6}, x_{n/2+n/6}]$$ if and only if one of the following events happens,
\begin{itemize}
\item $\mathcal{E}_1$: at least half of the samples larger than $x_{n/2+n/6}.$
\item $\mathcal{E}_2$: at least half of the samples smaller than $x_{n/2-n/6}.$
\end{itemize}

We first bound the probability that $\mathcal{E}_1$ happens.
Let $Y_i$ be the random variables such that $Y_i = 1$ if $X_i > x_{n/2+n/6}$, and $Y_i = 0$ otherwise. We have $\E\left[\sum_{i=1}^t Y_i\right] = t/3.$
By a Chernoff bound, we have
\begin{eqnarray*}
%  &&\Pr\left[\median\{X_1, \ldots, X_t\} \not\in \left[x_{\frac{n}{2}-\frac{n}{6}}, x_{\frac{n}{2}-\frac{n}{6}} \right] \right] \\
&&  \Pr\left[{\sum_{i=1}^tY_i - \frac{t}{3}} > \frac{t}{6}\right]\\
           &\leq& \exp\left(-\frac{t}{12}\right)\\
           &<& 1/(2n).  \quad \quad (t = 20 \log n)
\end{eqnarray*}
Similarly we can show that the probability that $\mathcal{E}_2$ happens is at most $1/(2n)$.
The lemma follows.
\end{proof}
\fi

%These two lemmas, together with Equation~(\ref{eq:ell1-1}), finishes the proof.
Now we are ready to prove the theorem.
\begin{proof}(of Theorem~\ref{thm:l1})
W.l.o.g.\ we assume the coordinates of $\bx$ are sorted as $x_1 \leq x_2 \leq \ldots \leq x_n$. To simplify the discussion, we assume $t$ at Line \ref{line:sample-t} of Algorithm \ref{alg:l1-sketch} is odd. The even case can be verified similarly.

Let $\hat{\beta}$ be the median of the $t$ samples in $S$ (Line~\ref{line:gamma-median} in Algorithm~\ref{alg:l1-recover}).  Let $\alpha \in  \argmin_{\beta}\err{1}(\bx - \beta)$. Let $\bx^*$ be the vector obtained by dropping the $k$ coordinates from $\bx$ that deviate the most from $\alpha$. 

By Lemma \ref{lem:coordinates-sampling}, $\hat{\beta} \in [x_{n/2-n/6}, x_{n/2+n/6}]$ holds with probability $1 - 1/n$. Note that we can assume that $k  = O(n / \log n)$ (otherwise the sketch can just be $\bx$ itself which has size $O(k \log n)$). We thus have 
\begin{equation}
\label{eq:c-1}
\Pr\left[(\bx^*)_\frac{(n-k)}{4} \leq \hat{\beta} \leq (\bx^*)_{\frac{3(n-k)}{4}}\right] > 1 - \frac{1}{n}.
\end{equation}
  Applying Lemma \ref{lem:median-approx} to $\bx^*$ (with $m = n-k$), with probability at least $(1 - 1/n)$ it holds that
\begin{align}
  \err{1}(\bx-\hat{\beta}) &\leq \|\bx^* - \hat{\beta}\|_1 \nonumber \\
                   &\leq 2 \cdot \min_{\beta}\|\bx^* - \beta \|_1 \quad (\text{by (\ref{eq:c-1}) and Lemma \ref{lem:median-approx}}) \nonumber \\
                   &= 2\cdot \|\bx^* - \median(\bx^*)\|_1 \nonumber \\
                   &= 2\cdot \|\bx^* - \alpha \|_1 \quad\quad\quad\quad (\text{by  Lemma \ref{lem:beta-median}}) \nonumber \\
                   &= 2\cdot \min_{\beta}\err{1}(\bx - \beta), \label{eq:lemma-6}
\end{align}
where the last equality holds due to the definitions of $\alpha$ and $\bx^*$.
By Theorem~\ref{thm:count-median} (property of Count-Median) and Line \ref{eq:e-1} of Algorithm~\ref{alg:l1-recover} we have
\begin{equation*}
\Pr\left [\infnorm{\hat{\bz} - (\bx - \hat{\beta})} = O\left(\frac{1}{k}\right) \cdot \err{1}(\bx - \hat{\beta})\right] \geq 1 - \frac{1}{n}.
\end{equation*}
Since at Line~\ref{eq:e-2} we set $\hat{\bx} = \hat{\bz} + \hat{\beta}$, we have
\begin{equation}
\label{eq:count-median}
\Pr\left [\infnorm{\hat{\bx} - \bx)} = O\left(\frac{1}{k}\right) \cdot \err{1}(\bx - \hat{\beta})\right] \geq 1 - \frac{1}{n}.
\end{equation}
Inequality (\ref{eq:l1-correctness}) of Theorem~\ref{thm:l1} follows from (\ref{eq:lemma-6})  and (\ref{eq:count-median}).
\end{proof}

\paragraph{Complexities}
Since CM-matrix only has one non-zero entry in each column, using sparse matrix representation we can compute $\Pi(h^i)\bx\ (i \in [d])$ in $O(n)$ time. Thus the sketching phase can be done in time $O(n d) = O(n \log n)$.

The sketch size is $O(k \log n)$ since each $\Psi(h^i)\bx\ (i \in [d])$ has size $O(k)$.

In the recovery phase, the dominating cost is the computation of coordinates in $\hat{\bz}$, for each of which we need $O(d) = O(\log n)$ time.  Thus the total cost is $O(n \log n)$.

\subsection{Recovery with $\ell_\infty/\ell_2$-Guarantee}
\label{sec:l2}

In this section we give a bias-aware sketch with $\ell_\infty/\ell_2$-guarantee. That is, we try to design a sketching matrix $\Phi \in \mathbb{R}^{t \times n}\ (t \ll n)$ such that from $\Phi \bx$ we can recover an $\hat{\bx}$ satisfying
$
  \|\hat{\bx} - \bx\|_{\infty} = O(1/\sqrt{k})\cdot \min_{\beta}\err{2}(\bx - \beta).
$

\subsubsection{Algorithms}
We use \twoSR\ ($\ell_2$-Sketch/Recover) to denote our algorithm.
Its sketching and recovery procedures are described in Algorithm~\ref{alg:l2-sketch} and Algorithm~\ref{alg:l2-recover} respectively.

We again assume that the sketching algorithm and the recovery algorithm can jointly sample (1) independent random hash functions $g, h^1, \ldots, h^d : [n]\rightarrow[s]$ and (2) independent random signed functions $r^1, \ldots, r^d : [n]\rightarrow\{-1, 1\}$ without any costs.  
%We will specify how to implement these in the distributed and streaming models later.

\begin{algorithm}[t]
\DontPrintSemicolon % Some LaTeX compilers require you to use \dontprintsemicolon instead
\KwIn{$\bx \in \mathbb{R}^n$}
\KwOut{the sketch of $\bx$}
\tcc{assume $s = c_s k$ for a constant $c_s \ge 4$; $d = \Theta(\log n)$; $g, h^1, \ldots, h^d : [n] \to [s]$; $r^1, \ldots, r^d : [n] \to \{-1, 1\}$ are common knowledge}
$\bw \leftarrow \Pi(g)\bx$\;
$\forall i \in [d], \ \by^i \leftarrow \Psi(h^i, r^i)\bx$\;

\Return{$\bw, \{\by^1, \ldots, \by^d\}$}
\caption{{\sc $\ell_2$-Sketch$(\bx)$}}
\label{alg:l2-sketch}
\end{algorithm}

\begin{algorithm}[t]
\DontPrintSemicolon % Some LaTeX compilers require you to use \dontprintsemicolon instead
\KwIn{$\bw=\Pi(g)\bx$; $\{\by^i =\Psi(h^i, r^i)\bx ~|~ i\in[d]\}$}
\KwOut{$\hat{\bx}$ as an approximation of $\bx$}
\tcc{assume $s = c_s k$ for a constant $c_s \ge 4$; $d = \Theta(\log n)$; $g, h^1, \ldots, h^d : [n] \to [s]$; $r^1, \ldots, r^d : [n] \to \{-1, 1\}$ are common knowledge}
$\boldsymbol{\pi} \leftarrow$ coordinate-wise sum of columns of $\Pi(g)$ \label{line:pi} \;
w.l.o.g.\ assume $w_1/\pi_1 \le \ldots \le w_s/\pi_s$; set
$\hat{\beta} = \sum_{i = s/2-k}^{s/2+k-1}w_i \left/\sum_{i = s/2-k}^{s/2+k-1}\pi_i \right.$\;\label{line:beta} 
$\forall i \in [d], \ \boldsymbol{\psi}^i \leftarrow$ coordinate-wise sum of columns of $\Psi(h^i, r^i)$\; \label{line:psi}
$\forall i \in [d], \ \tilde{\by}^i \leftarrow \by^i - \hat{\beta}\boldsymbol{\psi}^i$\; \label{line:debias}
\tcc{Run the Count-Sketch recovery}
$\forall j \in [n], \ \hat{z}_j \leftarrow \text{median}_{i\in[d]} \left\{ r^i(j)\cdot \left( \tilde{\by}^i \right )_{h^i(j)} \right\}$  \;\label{line:median}
$\hat{\bx} \leftarrow \hat{\bz} + \hat{\beta}$\; \label{line:change}
\Return{$\hat{\bx}$}\;

\caption{{\sc $\ell_2$-Recover$(\bw, \{\by^1, \ldots, \by^d\})$}}
\label{alg:l2-recover}
\end{algorithm}

In our algorithms we first use the CM-matrix to obtain a good approximation $\hat{\beta}$ of the $\beta$ that minimizes $\err{2}(\bx - \beta)$, and then use the Count-Sketch algorithm to recover $\hat{\bz}$ as an approximation to the {\em de-biased} vector $\bx - \hat{\beta}$; and consequently $\hat{\bz} + \hat{\beta}$ will be a good approximation to $\bx$.  The final (implicit) sketching matrix $\Phi \in \mathbb{R}^{s(d+1) \times n}$  in Algorithm~\ref{alg:l2-sketch} is a vertical concatenation of a CM-matrix $\Pi(g)$ and $d = \Theta(\log n)$ independent CS-matrices $\Psi(h^i, r^i)$'s.

In Algorithm~\ref{alg:l2-recover}, to approximate the best $\beta$ we first sum up all the columns of $\Pi(g)$, giving a vector $\bpi = (\pi_1, \ldots, \pi_s)$ (Line~\ref{line:pi}).  Let $\bw = \Pi(g) \bx \in \mathbb{R}^s$.  W.l.o.g.\ assume that $w_1/\pi_1 \le \ldots \le w_s/\pi_s$. We estimate $\beta$ by %taking the average of coordinates in $\bx$ falling into the middle $2k$ ``buckets'' in the sketching vector.
$$\textstyle \hat{\beta} = \sum_{i = s/2-k}^{s/2+k-1}w_i \left/ \sum_{i = s/2-k}^{s/2+k-1}\pi_i \right. .$$ 

The intuition of this estimation is the following. First note that $w_i/\pi_i\ (i \in [s])$ is the average of coordinates of $\bx$ that are hashed into the $i$-th coordinate(bucket) of sketching vector $\Pi(g)\bx$. In the case that there is no ``outlier'' coordinate of $\bx$ that is hashed into the $i$-th bucket of $\Pi(g)\bx$, then $w_i/\pi_i\ (i \in [s])$ should be close to the best bias $\beta$.  Since there are at most $k$ outliers, if we choose $s \ge 4k$ then most of these $s$ buckets in $\Pi(g)$ will {\em not} be ``contaminated'' by outliers.  

The next idea is to sort the buckets according to the average of coordinates of $\bx$ hashed into it ({\em i.e.}, $w_i/\pi_i$), and then choose the $2k$ buckets around the median and take the average of coordinates hashed into those buckets (Line~\ref{line:beta}). We can show that the average of coordinates of $\bx$ that are hashed into these $2k$ ``median'' buckets is a good estimation of the best $\beta$.  Note that there could still be outliers hashed into the median $2k$ buckets, but we are able to prove that such outliers will not affect the estimation of $\beta$ by much. After getting an estimate of $\beta$ we de-bias the sketching vector $\by$ (Line~\ref{line:psi} and \ref{line:debias}) for the next step recovery (Line~\ref{line:median} and \ref{line:change}).

% We comment that this intuition is not entirely correct, since the hash function is random and we cannot guarantee that there is no outlier that is hashed into the median coordinate. In our formal analysis we will show how to make this argument solid.
\smallskip

The following theorem summarizes the performance of \twoSR.    One can compare it with Theorem~\ref{thm:count-sketch} for Count-Sketch.
\begin{theorem}
\label{thm:l2}
There exists a bias-aware sketching scheme such that for any $\bx \in \mathbb{R}^n$, it computes $\Phi \bx$, and then recovers an $\hat{\bx}$ as an approximation to $\bx$ from $\Phi \bx$ satisfying the following:
\begin{equation}
\label{eq:l2-correctness}
 \Pr[ \|\hat{\bx} - \bx\|_{\infty} \leq C_1 /\sqrt{k}\cdot \min_{\beta}\err{2}(\bx - \beta) ] \ge 1 - C_2 /n,
\end{equation}
where $C_1, C_2 > 0$ are two universal constants.
The sketch can be constructed in time $O(n \log n)$;  the sketch size is bounded by $O(k \log n)$; the recovery can be done in time $O(n \log n)$.
\end{theorem}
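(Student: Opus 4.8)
The plan is to mirror the structure of the proof of Theorem~\ref{thm:l1}: reduce everything to the quality of the estimated bias $\hat\beta$ computed in Line~\ref{line:beta} of Algorithm~\ref{alg:l2-recover}, and then invoke the Count-Sketch guarantee (Theorem~\ref{thm:count-sketch}) on the de-biased sketch. Concretely, let $\beta^*$ minimize $\err{2}(\bx - \beta)$ and let $\bx^* \in \S_{n-k}(\bx)$ be obtained by dropping the $k$ coordinates of $\bx$ deviating most from $\beta^*$; as an $\ell_2$ analog of Lemma~\ref{lem:beta-median} I would first check that $\beta^* = \mean(\bx^*)$, so that $\min_\beta \err{2}(\bx-\beta) = \|\bx^* - \beta^*\|_2$. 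Since Line~\ref{line:median} runs exactly the Count-Sketch recovery on the de-biased sketches $\tilde\by^i = \by^i - \hat\beta\boldsymbol{\psi}^i$ (an unbiased linear sketch of $\bx - \hat\beta$), Theorem~\ref{thm:count-sketch} gives $\infnorm{\hat\bz - (\bx - \hat\beta)} = O(1/\sqrt k)\cdot\err{2}(\bx - \hat\beta)$ with probability $1 - 1/n$; because $\hat{\bx} = \hat\bz + \hat\beta$, this is exactly $\infnorm{\hat{\bx} - \bx}$. It therefore suffices to prove $\err{2}(\bx - \hat\beta) = O(1)\cdot \min_\beta\err{2}(\bx - \beta)$.

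The second step reduces this to a bound on $|\hat\beta - \beta^*|$. By the triangle inequality, $\err{2}(\bx - \hat\beta) \le \|\bx^* - \hat\beta\|_2 \le \|\bx^* - \beta^*\|_2 + \sqrt{n-k}\,|\hat\beta - \beta^*|$, so it is enough to show that with high probability $|\hat\beta - \beta^*| = O\!\left(\|\bx^* - \beta^*\|_2/\sqrt{n-k}\right)$, i.e.\ $\hat\beta$ is within $O(1)$ standard deviations $\sigma(\bx^*)$ of the true mean. The key structural observation is a sandwiching identity: writing $\pi_i$ for the bucket loads (Line~\ref{line:pi}) and $w_i/\pi_i$ for the bucket averages, $\hat\beta - \beta^* = \left(\sum_{i\in M}\pi_i(w_i/\pi_i - \beta^*)\right)\big/\sum_{i\in M}\pi_i$ is a convex combination over the $2k$ median buckets $M$, hence lies between the two boundary bucket averages $a,b$ at the sorted positions $s/2-k$ and $s/2+k-1$. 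Consequently $|\hat\beta - \beta^*|$ is controlled solely by how far $a$ and $b$ are from $\beta^*$, and this automatically neutralizes any outliers that happen to fall into the median window — we never reason about polluted buckets directly.

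To bound the boundary averages I would use an essentially deterministic second-moment argument. Call a bucket \emph{clean} if no dropped ``outlier'' coordinate hashes into it; at most $k$ buckets are unclean. Summing only over clean buckets $i$, the variance decomposition gives $\sum_i \pi_i\,(w_i/\pi_i - \beta^*)^2 \le \|\bx^* - \beta^*\|_2^2$ with no probability involved. Conditioning on the high-probability event (Chernoff plus a union bound over $s = O(n/\log n)$ buckets) that every load satisfies $\pi_i \ge n/(2s)$, each clean bucket whose average is more than $\delta$ from $\beta^*$ contributes at least $\tfrac{n}{2s}\delta^2$ to the sum, so the number of such ``straying'' clean buckets is at most $\frac{2s}{n\delta^2}\|\bx^* - \beta^*\|_2^2 \le \frac{2s\,\sigma^2(\bx^*)}{\delta^2}$; choosing $\delta = c\,\sigma(\bx^*)$ for a large enough constant $c$ makes this at most $2s/c^2 \le 2k$. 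Since at most $k$ buckets are unclean and at most $2k$ clean buckets stray, for a sufficiently large constant $c_s$ these $\le 3k$ ``bad'' buckets cannot reach either boundary position of the central window $[s/2-k,\,s/2+k-1]$; hence $a,b \in [\beta^*-\delta,\,\beta^*+\delta]$, giving $|\hat\beta-\beta^*| \le \delta = O(\sigma(\bx^*))$. As in the proof of Theorem~\ref{thm:l1}, we may assume $k = O(n/\log n)$, which is exactly what makes the load concentration and the straying-count bound go through.

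I expect the main obstacle to be this third step, and specifically getting the constants to cooperate: the deterministic sum-of-squares bound is what lets us avoid delicate concentration for correlated bucket-average events, but it only yields ``at most $2k$ straying clean buckets'', so the sandwiching needs enough slack ($c_s$ large and the window placed at the center) to guarantee the two boundary positions avoid both the $\le k$ unclean and the $\le 2k$ straying buckets; tightening this to the stated $c_s \ge 4$ would require a sharper count. The remaining pieces are routine and follow exactly as in the $\ell_1$ case: $\Pi(g)\bx$ and each $\Psi(h^i,r^i)\bx$ are computed in $O(n)$ time via sparse representations (total $O(n\log n)$), the sketch has $s(d+1) = O(k\log n)$ entries, and each of the $n$ recovered coordinates costs $O(\log n)$.
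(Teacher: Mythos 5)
Your overall architecture matches the paper's: characterize the optimum as $\mean(\bx^*)$ for the minimum-variance $(n-k)$-subset $\bx^*$ (the paper's Lemma~\ref{lem:min-k}), reduce the theorem to showing $|\hat{\beta}-\mean(\bx^*)| = O(\sigma(\bx^*))$ (the paper does this via Lemma~\ref{lem:deviation}; your triangle-inequality version $\err{2}(\bx-\hat\beta)\le\|\bx^*-\beta^*\|_2+\sqrt{n-k}\,|\hat\beta-\beta^*|$ is a cleaner route to the same reduction), and finish with Theorem~\ref{thm:count-sketch}. Where you genuinely diverge is the heart of the argument, the analogue of Lemma~\ref{lem:estimate-beta}. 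The paper sandwiches $\hat\beta$ between $\gamma_1,\gamma_2$ obtained by swapping the contaminated median-window buckets for extreme uncontaminated buckets outside the window; each such $\gamma_J$ is then the mean of a $\Theta(n)$-size subset of $\bx^*$'s coordinates (a Chebyshev bound on $|G|$), and Lemma~\ref{lem:ave-var} bounds how far a constant-fraction subset mean can stray from the full mean. You instead sandwich $\hat\beta$ between the two boundary bucket averages of the sorted order and count strays: the deterministic Cauchy--Schwarz bound $\sum_{i\ \text{clean}}\pi_i(w_i/\pi_i-\beta^*)^2\le\|\bx^*-\beta^*\|_2^2$ plus a minimum-load condition caps the number of clean buckets deviating by more than $c\,\sigma(\bx^*)$ at $2s/c^2$, and together with the $\le k$ unclean buckets these cannot occupy the boundary positions. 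Both arguments are sound; yours trades the paper's subset-mean lemma for an energy/counting argument and, as you note, needs $c_s$ somewhat larger than $4$ (around $8$ by your own count), which is consistent with the paper's hedge ``sufficiently large constant $c_s\ge4$'' in Lemma~\ref{lem:estimate-beta}.

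One caveat to flag: your minimum-load step ($\pi_i\ge n/(2s)$ for \emph{all} $s$ buckets, via Chernoff and a union bound) requires $g$ to be fully random, or at least $\Theta(\log n)$-wise independent. The paper explicitly relies on $2$-wise independent hash functions and on using only second moments, and under pairwise independence Chebyshev gives only $\Pr[\pi_i<n/(2s)]=O(s/n)$ per bucket, which does not union-bound to $1-O(1/n)$ over $s=\Theta(k)$ buckets when $k$ is as large as $n/\log n$. This is not fatal to the theorem as stated (nothing in Theorem~\ref{thm:l2} forbids fully random hash functions, whose storage the paper waves away anyway), but it is a real divergence from the paper's implementation claims, and to recover the pairwise-independent setting you would need to either tolerate low-load buckets in the stray count or adopt the paper's subset-mean route, which avoids per-bucket concentration entirely.
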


As mentioned in the introduction,  we can convert $\ell_\infty/\ell_2$ guarantee to $\ell_2/\ell_2$ guarantee. 

\begin{corollary}
\label{cor:l2}
The $\hat{\bx}$ recovered in Theorem~\ref{thm:l2} also guarantees that with probability $1 - O(1/n)$, we have
$$
\|\hat{\bx} - \bx\|_2 = O(1) \cdot \min_{\beta}\err{2}(\bx - \beta).
$$
\end{corollary}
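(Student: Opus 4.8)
The plan is to reduce the $\ell_2/\ell_2$ bound to the standard (unbiased) folklore conversion of \cite{GI10}, after absorbing the bias into a shift. Let $\beta^{*}\in\argmin_{\beta}\err{2}(\bx-\beta)$ and write $\bu=\bx-\beta^{*}$, so that $\err{2}(\bu)=\min_{\beta}\err{2}(\bx-\beta)$. Setting $\hat{\bu}=\hat{\bx}-\beta^{*}$, both estimate and target are shifted by the same scalar, hence $\hat{\bu}-\bu=\hat{\bx}-\bx$; in particular $\infnorm{\hat{\bu}-\bu}=\infnorm{\hat{\bx}-\bx}$ and $\norm{\hat{\bu}-\bu}_2=\norm{\hat{\bx}-\bx}_2$. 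By Theorem~\ref{thm:l2} the former is at most $\frac{C_1}{\sqrt{k}}\err{2}(\bu)$ with probability $1-C_2/n$. Thus it suffices to establish the unbiased statement: from $\infnorm{\hat{\bu}-\bu}=O(1/\sqrt{k})\cdot\err{2}(\bu)$ derive $\norm{\hat{\bu}-\bu}_2=O(1)\cdot\err{2}(\bu)$, which then equals $O(1)\cdot\min_{\beta}\err{2}(\bx-\beta)$.

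For the unbiased conversion I would use the usual truncation argument. Let $\hat{\bu}'$ keep the $k$ largest-magnitude coordinates of $\hat{\bu}$ and zero out the rest, let $\hat{S}$ be its support and $S$ the top-$k$ support of $\bu$. I split $\norm{\hat{\bu}'-\bu}_2^2$ over $\hat{S}$ and $[n]\setminus\hat{S}$. On $\hat{S}$ every coordinate error is bounded by $\infnorm{\hat{\bu}-\bu}$, so that part contributes at most $\sqrt{k}\cdot\infnorm{\hat{\bu}-\bu}=O(1)\cdot\err{2}(\bu)$. On $[n]\setminus\hat{S}$ we have $(\hat{\bu}'-\bu)_i=-u_i$, and I split this set further into $([n]\setminus\hat{S})\cap([n]\setminus S)$, whose $\bu$-mass is at most $\err{2}(\bu)$ because these are tail coordinates of $\bu$, and the ``displaced head'' $S\setminus\hat{S}$.

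The crux, and the only place the $1/\sqrt{k}$ factor is really used, is bounding the $\bu$-mass on $S\setminus\hat{S}$, i.e.\ the top-$k$ coordinates of $\bu$ that the truncation wrongly discards. I would argue by a swap: since $|S|=|\hat{S}|=k$, each $i\in S\setminus\hat{S}$ can be matched to a distinct $i'\in\hat{S}\setminus S$ with $|\hat{u}_{i'}|\ge|\hat{u}_{i}|$, and combining this with $\infnorm{\hat{\bu}-\bu}\le\frac{c}{\sqrt{k}}\err{2}(\bu)$ twice gives $|u_i|\le|u_{i'}|+\frac{2c}{\sqrt{k}}\err{2}(\bu)$. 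Since the matched $u_{i'}$ are distinct tail coordinates of $\bu$ (so $\sum_{i'}u_{i'}^2\le\err{2}(\bu)^2$) and $|S\setminus\hat{S}|\le k$, we obtain $\sum_{i\in S\setminus\hat{S}}u_i^2=O(\err{2}(\bu)^2)$. Adding the three pieces yields $\norm{\hat{\bu}'-\bu}_2=O(1)\cdot\err{2}(\bu)=O(1)\cdot\min_{\beta}\err{2}(\bx-\beta)$, which is the claim for the post-processed recovery $\hat{\bx}'=\hat{\bu}'+\beta^{*}$. The only randomness used is the event of Theorem~\ref{thm:l2}, so the bound holds with probability $1-O(1/n)$ while the conversion itself is deterministic. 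I expect this displaced-head estimate to be the main obstacle, as it is precisely where the $\sqrt{k}$ gain in the $\ell_\infty$ guarantee offsets the $k$ discarded coordinates. Finally, since the same argument goes through when truncating around any bias $\gamma$ with $\err{2}(\bx-\gamma)=O(1)\cdot\min_{\beta}\err{2}(\bx-\beta)$, one may keep the $k$ coordinates of $\hat{\bx}$ deviating most from the computed estimate $\hat{\beta}$ rather than the unknown $\beta^{*}$, so the $\ell_2$-recovery is realized by \twoSR\ followed by this truncation without any extra information.
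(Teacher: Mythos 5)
Your proof is correct and is essentially the argument the paper intends: the paper gives no explicit proof of Corollary~\ref{cor:l2}, deferring to the folklore $\ell_\infty/\ell_2$-to-$\ell_2/\ell_2$ conversion of \cite{GI10}, and your shift-by-$\beta^*$ reduction followed by the top-$k$ truncation with the displaced-head swap is exactly that conversion, with the $1/\sqrt{k}$ factor spent in the standard place. The one point worth flagging is that your write-up makes explicit something the corollary's wording glosses over: the $O(1)\cdot\min_\beta\err{2}(\bx-\beta)$ bound holds for the truncated estimate $\hat{\bx}'$ (keeping the $k$ coordinates deviating most from the bias estimate and resetting the rest to it), not for the raw dense output of Algorithm~\ref{alg:l2-recover}, whose $\ell_2$ error can be as large as $\Theta\bigl(\sqrt{n/k}\bigr)\cdot\min_\beta\err{2}(\bx-\beta)$.
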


\subsubsection{Analysis}
\label{sec:l2-analysis}

\paragraph{Correctness}
\ifdefined\FULL
\else
Again due to the space constraints, some proofs will be omitted and they can be found in the full version of this paper~\cite{CZ16e}.
\fi
%we again leave the whole proof of the correctness of the Theorem~\ref{thm:l2}, which is fairly technical, to the full version of this paper~\cite{CZ16e}.  We give a brief proof sketch here.

Similar to the $\ell_1$ case, we first replace the somewhat obscure expression $\min_{\beta}\err{2}(\bx - \beta)$ in Theorem~\ref{thm:l2} with another one which is more convenient to use.  

%We have (Lemma 4 in \cite{CZ16e}) that
%\begin{equation*}
%\label{eq:ell2-1}
%\left(\min_{\beta} \err{2}(\bx - \beta)\right)^2 = \|\bx^* - \mean(\bx^*)\|_2^2,
%\end{equation*}
%where $\bx^*$ is the vector obtained by dropping $k$ coordinates from $\bx$ such that its coordinates have the minimum {\em variance}.  

\begin{lemma}
\label{lem:min-k}
For any $\bx \in \mathbb{R}^n$ and $k < n$, let $\bx^*$ be a vector in $\S_{n-k}(\bx)$ that has the minimum variance.  It holds that
\begin{eqnarray}
  \label{eq:beta-x}
  \left(\min_{\beta} \err{2}(\bx - \beta)\right)^2 &=& (n-k)  \sigma^2(\bx^*) \nonumber\\
  &=& \|\bx^* - \mean(\bx^*)\|_2^2.
\end{eqnarray}
Furthermore,  $\bx^*$ is equivalent to the vector obtained by dropping the $k$ coordinates from $\bx$ that deviate the most from $\mean(\bx^*)$.
\end{lemma}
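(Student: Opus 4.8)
The plan is to peel apart the definition of $\err{2}(\bx-\beta)$ and recognize the left-hand side as a double minimization whose order can be swapped. For a fixed $\beta$, the minimizer in $\err{2}(\bx-\beta)=\min_{k\text{-sparse }\bx'}\norm{(\bx-\beta)-\bx'}_2$ simply zeroes out the $k$ coordinates of $\bx-\beta$ of largest magnitude, so the squared error equals the sum of $(x_i-\beta)^2$ over the surviving $n-k$ coordinates. Writing $T$ for the retained index set, this gives
\[
\err{2}(\bx-\beta)^2=\min_{|T|=n-k}\ \sum_{i\in T}(x_i-\beta)^2,
\]
and hence $\min_{\beta}\err{2}(\bx-\beta)^2=\min_{\beta}\min_{|T|=n-k}\sum_{i\in T}(x_i-\beta)^2$.

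First I would swap the two minimizations (a finite double minimum may be taken in either order) to obtain $\min_{|T|=n-k}\min_{\beta}\sum_{i\in T}(x_i-\beta)^2$. For a fixed $T$ the inner problem is a one-dimensional least-squares minimization whose optimum is $\beta=\mean(\bx_T)$ with value $\sum_{i\in T}(x_i-\mean(\bx_T))^2=(n-k)\sigma^2(\bx_T)$. Minimizing over $T$ then yields $(n-k)\sigma^2(\bx^*)$, where $\bx^*$ is precisely the size-$(n-k)$ subvector of minimum variance; this establishes the first equality. The second equality $(n-k)\sigma^2(\bx^*)=\norm{\bx^*-\mean(\bx^*)}_2^2$ is immediate from the definition of variance.

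For the \emph{furthermore} claim, set $m=\mean(\bx^*)$, let $T^*$ be the index set of $\bx^*$, and let $T_m$ be the set obtained by dropping the $k$ coordinates of $\bx$ that deviate most from $m$ (equivalently, $T_m$ retains the $n-k$ coordinates minimizing $(x_i-m)^2$). On one hand, $\sum_{i\in T_m}(x_i-m)^2\le\sum_{i\in T^*}(x_i-m)^2$ by the choice of $T_m$. On the other hand, by minimality of the variance of $\bx^*$ together with the identity just proved, $\sum_{i\in T^*}(x_i-m)^2=(n-k)\sigma^2(\bx^*)\le(n-k)\sigma^2(\bx_{T_m})=\min_{\beta}\sum_{i\in T_m}(x_i-\beta)^2\le\sum_{i\in T_m}(x_i-m)^2$. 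These two chains sandwich all quantities into equalities, so $T^*$ also minimizes $\sum_{i\in T}(x_i-m)^2$ over size-$(n-k)$ sets and therefore discards exactly the $k$ coordinates farthest from $m=\mean(\bx^*)$ (up to ties).

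The routine steps — unfolding the definition, solving the scalar least-squares, reading off the variance — are mechanical. The only delicate step is the \emph{furthermore} part: because the reference point $m$ is itself defined through the unknown optimal set $\bx^*$, a naive swap-one-coordinate exchange argument risks circularity. The two-sided inequality sandwich above is what I expect to be the crux, since it simultaneously certifies the minimality of $\bx^*$ and pins down which coordinates $\bx^*$ drops, without ever presupposing the conclusion.
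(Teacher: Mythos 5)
Your proof is correct, and it takes a genuinely different route from the paper's. The paper works in the opposite order: it first establishes the structural (``furthermore'') claim --- that $\bx^*$ drops exactly the $k$ coordinates deviating most from $\mean(\bx^*)$ --- by an exchange argument (if some retained coordinate deviated more from $\mean(\bx^*)$ than some dropped one, swapping them would yield a subvector of strictly smaller variance, contradicting the minimality of $\bx^*$), and only then uses that structure to bound $\min_{\beta}\err{2}(\bx-\beta)^2$ from above by $\norm{\bx^*-\mean(\bx^*)}_2^2$; the matching lower bound comes from the relaxation $\min_{\beta}\err{2}(\bx-\beta)\ge\min_{\bx'\in\S_{n-k}(\bx)}\min_{\beta}\norm{\bx'-\beta}_2$. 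You instead observe that this relaxation is actually an identity --- $\err{2}(\bx-\beta)=\min_{\bx'\in\S_{n-k}(\bx)}\norm{\bx'-\beta}_2$ holds pointwise in $\beta$ --- so commuting the two minimizations (valid for infima in general, with attainment since the inner problem is a coercive quadratic in $\beta$) gives the displayed equality in one stroke, with no case analysis; you then recover the structural claim as a corollary via the two-sided sandwich, which correctly sidesteps the circularity you flag. Your route is cleaner for the main equality; the paper's route has the mild advantage of identifying the dropped outlier set constructively before any optimization over $\beta$, which is the form in which the fact is reused later (the proof of Lemma~\ref{lem:estimate-beta} needs the explicit set of $k$ outliers). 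Both arguments are sound, and both treat ties in the same harmless, unresolved way.
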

\ifdefined\FULL
\begin{proof}
First, by the definition of $\bx^*$ we have
$$(n-k) \sigma^2(\bx^*) =  \min_{\bx'\in \S_{n-k}(\bx)} \|\bx' - \mean(\bx')\|_2^2.$$ 
By the definition of $\err{2}(\cdot)$, we have
\begin{align*}
\min_{\beta} \err{2}(\bx - \beta) &\geq  \min_{\bx'\in \S_{n-k}(\bx)} \min_{\beta}\|\bx' - \beta\| \\
  &= \min_{\bx'\in \S_{n-k}(\bx)} \|\bx' - \mean(\bx')\|_2.  
\end{align*}

Thus to prove $(\ref{eq:beta-x})$, it suffices to show that
$$\min_{\beta} \err{2}(\bx - \beta) \leq \min_{\bx'\in \S_{n-k}(\bx)} \|\bx' - \mean(\bx')\|_2.$$

Since the order of the coordinates in $\bx$ do not matter, w.l.o.g.\ we assume $\bx^* = (x_1, x_2, \ldots, x_{n-k})$.  Let $\gamma = \mean(\bx^*)$, and write $x_i = \gamma + \Delta_i$, or equivalently, $\Delta_i = x_i - \gamma$.  Note that
if 
\begin{equation}
  \label{eq:assumption}
  \min_{i\in [n]\backslash[n-k]}|\Delta_i| \geq \max_{i\in [n-k]}|\Delta_i|, 
\end{equation}
then we are done because
\begin{align}
  \min_{\beta} \err{2}(\bx - \beta)^2 &\leq \err{2}(\bx - \gamma)^2 \nonumber\\
           &=  \sum_{i\in[n-k]} \Delta_i^2 =  \|\bx^* - \gamma\|_2^2.    \label{eq:implication} 
\end{align}
Now we assume $(\ref{eq:assumption})$ is false.  Again w.l.o.g., we assume $|\Delta_1| = \max\limits_{i\in [n-k]}|\Delta_i|$ and $|\Delta_n| = \min\limits_{i\in [n]\backslash[n-k]}|\Delta_i|$, then $|\Delta_1| > |\Delta_n|$. Let $\bx' = (x_2, x_3, \ldots, x_{n-k-1}, x_{n-k}, x_n) \in \S_{n-k}$, that is, $\bx'$ is obtained by dropping $x_i$ from $\bx^*$ and then appending $x_n$, we have 
\begin{align*}
\|\bx' - \mean(\bx')\|_2^2 &\leq \|\bx' - \mean(\bx^*)\|_2^2 \\
& =  \|\bx' - \gamma\|_2^2 \quad \quad \text{(by definition of $\gamma$)}\\
                          & =  \Delta_n^2 + \sum_{i=2}^{n-k} \Delta_i^2 \quad \text{(by definition of $\Delta_i$)}\\
                          & < \Delta_1^2 + \sum_{i=2}^{n-k}\Delta_i^2 \\
                          & =  \|\bx^* - \mean(\bx^*) \|_2,
\end{align*}
which contradicts the definition of $\bx^*$. Hence $(\ref{eq:assumption})$ holds, and consequently $(\ref{eq:beta-x})$ holds.  

On the other hand, $(\ref{eq:assumption})$ also implies that $x_{n-k+1},  \ldots, x_n$ are the $k$ coordinates of $\bx$ that deviate the most from $\gamma=\mean(\bx^*)$.  
\end{proof}
\fi

We then show (using Lemma~\ref{lem:min-k}) that a good approximation of $\mean(\bx^*)$ is also a good approximation of the best $\beta$.
\begin{lemma}
  \label{lem:deviation}
  For any $\bx\in\mathbb{R}^n$ and $k < n$, let $\bx^*$  be a vector in $\S_{n-k}(\bx)$ that has the minimum variance. For any $\alpha$ such that $\abs{\mean(\bx^*) - \alpha}^2 \leq C \cdot \sigma^2(\bx^*)$ for any constant $C > 0$, we have
  $$\err{2}(\bx - \alpha)^2 = O\left(\min_{\beta}\err{2}(\bx - \beta)^2\right).$$
\end{lemma}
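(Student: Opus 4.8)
The plan is to avoid computing the optimal $(n-k)$-subset for $\alpha$ directly, and instead bound $\err{2}(\bx-\alpha)^2$ by testing a single convenient subset. Recall that $\err{2}(\bx-\alpha)^2 = \min_{S \subseteq [n],\, |S| = n-k}\sum_{i \in S}(x_i - \alpha)^2$, since the minimizing $k$-sparse vector in the definition of $\err{2}(\cdot)$ simply zeroes out the $k$ coordinates of $\bx - \alpha$ with the largest magnitude. Let $S^*$ be the index set of the coordinates retained in $\bx^*$, the variance-minimizing member of $\S_{n-k}(\bx)$. Since $S^*$ is one feasible choice in the above minimization, I immediately obtain the upper bound $\err{2}(\bx - \alpha)^2 \le \sum_{i \in S^*}(x_i - \alpha)^2 = \|\bx^* - \alpha\|_2^2$.

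Next I would expand $\|\bx^* - \alpha\|_2^2$ around the mean of $\bx^*$. Writing $x_i - \alpha = (x_i - \mean(\bx^*)) + (\mean(\bx^*) - \alpha)$ for each $i \in S^*$ and squaring, the cross term is $2(\mean(\bx^*) - \alpha)\sum_{i \in S^*}(x_i - \mean(\bx^*))$, which vanishes because $\mean(\bx^*)$ is exactly the mean of the coordinates of $\bx^*$. This leaves the clean decomposition
\begin{equation*}
\|\bx^* - \alpha\|_2^2 = \|\bx^* - \mean(\bx^*)\|_2^2 + (n-k)\,\abs{\mean(\bx^*) - \alpha}^2.
\end{equation*}
Now I apply the hypothesis $\abs{\mean(\bx^*) - \alpha}^2 \le C\,\sigma^2(\bx^*)$ to the second term, and Lemma~\ref{lem:min-k} to identify $\|\bx^* - \mean(\bx^*)\|_2^2 = (n-k)\,\sigma^2(\bx^*) = \min_{\beta}\err{2}(\bx - \beta)^2$. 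Combining the two gives $\|\bx^* - \alpha\|_2^2 \le (1 + C)(n-k)\,\sigma^2(\bx^*) = (1+C)\min_{\beta}\err{2}(\bx-\beta)^2$, which is exactly the claimed $O(\cdot)$ bound.

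The argument is short, and the only real insight — which is also the easiest place to go wrong — is the choice of witness subset. A naive attempt would try to control the optimal $(n-k)$-subset for the shifted vector $\bx - \alpha$, but that subset varies with $\alpha$ and is awkward to compare against $\bx^*$. Using $\bx^*$ itself as a (generally suboptimal) feasible subset is what makes the expansion work: it is precisely the fact that $\mean(\bx^*)$ is the centroid of $\bx^*$ that annihilates the cross term and yields the Pythagorean-type split above. Everything after that is a one-line substitution of the hypothesis together with Lemma~\ref{lem:min-k}, so I do not expect any serious obstacle once the witness is fixed.
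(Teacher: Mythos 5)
Your proof is correct and follows essentially the same route as the paper's: both bound $\err{2}(\bx-\alpha)^2$ by $\|\bx^*-\alpha\|_2^2$ using $\bx^*$ as the witness, expand around $\mean(\bx^*)$ so the cross term vanishes, and finish with the hypothesis on $\abs{\mean(\bx^*)-\alpha}^2$ together with Lemma~\ref{lem:min-k}.
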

\ifdefined\FULL
\begin{proof}
 W.l.o.g.\ we again assume $\bx^* = (x_1, \ldots, x_{n-k})$.
  Define $f(b) \triangleq \|\bx^*-b\|_2^2$. Let $\gamma = \mean(\bx^*)$. By Lemma \ref{lem:min-k} we have 
\begin{equation}
\label{eq:b-1}
  f(\gamma) =  (n-k)\sigma^2(\bx^*) = \|\bx^*-\gamma\|_2^2 = \min_{\beta}\err{2}(\bx - \beta)^2.
\end{equation}
  Write $\alpha = \gamma + \Delta$ and thus 
  $\Delta^2 \leq C \sigma^2(\bx^*)$,
  \begin{align*}
    \err{2}(\bx - \alpha)^2  &\leq \|\bx^* - \alpha\|_2^2\\
                             & = f(\alpha)  = f(\gamma + \Delta)\\
                           & = \sum_{i\in[n-k]} \left( ( x_i - \gamma) - \Delta \right)^2 \\
                           & = (n-k)\Delta^2 + \sum_{i = 1}^{n-k} (x_i - \gamma)^2 - 2\Delta\sum_{i=1}^{n-k}(x_i - \gamma)\\
%                           & = (n-k)\Delta^2 + f(\gamma)\\
                           & \leq   (n-k) \cdot C \sigma^2(\bx^*) + \norm{\bx^* - \gamma}_2^2 + 0\\
                           & = O\left(\min_{\beta}\err{2}(\bx - \beta)^2\right). \quad \quad \text{(by (\ref{eq:b-1}))}
  \end{align*}
We are done.
\end{proof}
\fi

The next lemma is crucial. It shows that the approximation $\hat{\beta}$ obtained in the recovery algorithm (Algorithm~\ref{alg:l2-recover}) is a good approximation of $\mean(\bx^*)$.  
\begin{lemma}
  \label{lem:estimate-beta}
Let $\hat{\beta}$ be given at Line~\ref{line:beta} of Algorithm \ref{alg:l2-recover}.  If $s = c_s k$ for a sufficiently large constant $c_s \ge 4$,  it holds that
$$ \Pr\left[\left(\hat{\beta} - \mean(\bx^*)\right)^2 = O\left(\sigma^2(\bx^*)\right)\right] = 1-O\left(\frac{1}{n}\right)$$
for any $\bx^*$ in $\S_{n-k}(\bx)$ that has the minimum variance.
\end{lemma}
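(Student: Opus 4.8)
The plan is to show that $\hat\beta$, which by construction equals the mass-weighted average $\hat\beta=\bigl(\sum_{i\in M}\pi_i a_i\bigr)\big/\bigl(\sum_{i\in M}\pi_i\bigr)$ of the bucket averages $a_i:=w_i/\pi_i$ over the block $M=\{s/2-k,\dots,s/2+k-1\}$ of $2k$ ``median'' buckets (in the sorted order $a_1\le\cdots\le a_s$), lies in the interval $[\mu-O(\sigma),\,\mu+O(\sigma)]$, where I abbreviate $\mu:=\mean(\bx^*)$ and $\sigma^2:=\sigma^2(\bx^*)$. Since a positively weighted average never leaves the range of its arguments, $\min_{i\in M}a_i\le\hat\beta\le\max_{i\in M}a_i$, so it suffices to bound the two central order statistics $a_{(s/2-k)}$ and $a_{(s/2+k-1)}$ to within $O(\sigma)$ of $\mu$; the desired bound $(\hat\beta-\mu)^2=O(\sigma^2)$ follows at once.

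First I would record a purely deterministic inequality. Write $T$ for the $n-k$ coordinates of $\bx^*$, and call a bucket \emph{contaminated} if it receives one of the $k$ dropped outlier coordinates. Cauchy--Schwarz applied inside each uncontaminated bucket gives $\pi_i(a_i-\mu)^2\le\sum_{j:g(j)=i}(x_j-\mu)^2$, and summing over all uncontaminated buckets yields the key mass bound $\sum_{i\ \mathrm{unc}}\pi_i(a_i-\mu)^2\le\|\bx^*-\mu\|_2^2=(n-k)\sigma^2$, where the last equality is Lemma~\ref{lem:min-k}. This holds with no appeal to randomness. At most $k$ buckets are contaminated and $s=c_sk\ge 4k$, so the uncontaminated buckets are an overwhelming majority.

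Next I would argue that the central order statistics cannot be far from $\mu$. Suppose $a_{(s/2+k-1)}>\mu+c\sigma$ for a large constant $c$. Then the top $s/2-k+2$ buckets all have average exceeding $\mu+c\sigma$; discarding the at most $k$ contaminated ones leaves at least $s/2-2k$ uncontaminated buckets with $(a_i-\mu)^2>c^2\sigma^2$, and the mass bound forces these buckets to hold fewer than $(n-k)/c^2$ of the good coordinates in total. A symmetric statement handles $a_{(s/2-k)}<\mu-c\sigma$. Thus any violation of the target range means that a set of roughly $s/2$ buckets is starved of mass, i.e.\ almost all $n-k$ good coordinates are crammed into the complementary $\approx s/2$ buckets. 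Ruling this out is exactly a bucket-balance statement: with $k=O(n/\log n)$ the expected load per bucket is $\lambda=(n-k)/s=\Omega(\log n)$, and a second-moment computation on the loads $\pi_i^T=\sum_{j\in T}\mathbb{1}[g(j)=i]$, using only the pairwise independence of $g$, shows that the smallest $\Theta(s)$ buckets collectively retain an $\Omega(1)$ fraction of the mass; choosing $c$ large relative to $c_s$ contradicts the starvation and establishes $|a_{(s/2\pm k)}-\mu|=O(\sigma)$.

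The main obstacle is precisely this last balance step. The deterministic mass bound controls the total \emph{weighted} deviation but says nothing about how \emph{many distinct} buckets are deviant, and a handful of tiny, low-average buckets could in principle crowd the low ranks and drag the median block away from $\mu$; the whole point of selecting the $2k$ buckets around the median (rather than averaging all buckets) is that a bucket carrying a large outlier has an extreme average and is sorted out of the block, but making this robust requires a count, not a mass, statement about the deviant buckets. Converting the $O(1)$-probability balance guarantee that Chebyshev gives on a single load into the claimed $1-O(1/n)$ is the delicate point; I would handle it by bounding the variance of the number of under-loaded buckets (again using only pairwise independence, so the note on $2$-wise hashing is respected) and applying Chebyshev to that count. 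Combining the resulting high-probability balance event with the range bound then gives $(\hat\beta-\mean(\bx^*))^2=O(\sigma^2(\bx^*))$ with probability $1-O(1/n)$, which is exactly the estimate consumed downstream by Lemma~\ref{lem:deviation}.
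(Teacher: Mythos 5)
Your deterministic skeleton is sound and genuinely different from the paper's. You sandwich $\hat{\beta}$ between the two extreme order statistics of the bucket averages over the median block, and control those order statistics by a mass-versus-count argument: the Cauchy--Schwarz bound $\sum_{i\,\mathrm{unc}}\pi_i(a_i-\mu)^2\le\|\bx^*-\mu\|_2^2=(n-k)\sigma^2(\bx^*)$ is correct, and so is the deduction that a central order statistic deviating by $c\sigma$ would force $\Omega(s)$ uncontaminated buckets to be simultaneously deviant and hence collectively starved of mass. (Two small points: you need $c_s$ strictly larger than $4$ for the count $s/2-2k$ to be positive, which the hypothesis ``sufficiently large $c_s\ge 4$'' permits; and the quantifiers must be ordered so that the balance constant $\epsilon$ is fixed first and $c$ is then taken of order $1/\sqrt{\epsilon}$.) The paper's proof avoids order statistics altogether: it shows $\gamma_1\le\hat{\beta}\le\gamma_2$, where $\gamma_1,\gamma_2$ are obtained by swapping the contaminated buckets of the median block for uncontaminated buckets outside it, so that each $\gamma$ is the plain average of a set $G$ of coordinates of $\bx^*$ with $|G|=\Theta(n)$; a deterministic auxiliary lemma (the mean of any $\Theta(m)$-sized subset of a vector's coordinates is within $O(\sigma)$ of the global mean, again by Cauchy--Schwarz) then finishes, and the only probabilistic ingredient is the single estimate $|G|=\Theta(n)$. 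Your route buys a cleaner sandwiching step --- a positively weighted average trivially lies between its extreme arguments, whereas the paper's swap-monotonicity requires care --- but it pays for this in the probabilistic step.

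That probabilistic step is where the gap is; you have located it correctly but not closed it. What you need is that, with probability $1-O(1/n)$, \emph{every} set of $\delta s$ buckets receives at least an $\epsilon$-fraction of the $n-k$ coordinates of $\bx^*$, equivalently that the $\delta s$ least-loaded buckets do. Your proposed fix --- Chebyshev on the count $N$ of under-loaded buckets --- needs a bound on $\var[N]$, hence on $\Pr[\pi^T_i<\lambda/2 \wedge \pi^T_{i'}<\lambda/2]$ for pairs of buckets, where $\pi^T_i$ denotes the number of good coordinates hashed to bucket $i$ and $\lambda=(n-k)/s$ is the mean load; this is a joint tail event of two loads and is governed by fourth-order correlations among the values $g(j)$, which the $2$-wise independence the paper insists is sufficient does not control. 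Even granting $\var[N]=O(\E[N])$, Chebyshev would give failure probability $O(\E[N]/(\delta s)^2)=O(1/(s\log n))$, which is not $O(1/n)$ when $k$ (hence $s$) is small. The step does go through for a fully random $g$: then $\lambda=\Omega(\log n)$ (using $k=O(n/\log n)$, assumed elsewhere in the paper), a Chernoff bound gives $\Pr[\pi^T_i<\lambda/2]=n^{-\Omega(1)}$, and a union bound over the $s$ buckets shows all loads are at least $\lambda/2$ simultaneously, so any $\delta s$ buckets carry at least $\delta(n-k)/2$ coordinates. So your argument can be completed under full independence, but as written it does not deliver the stated $1-O(1/n)$ guarantee in the pairwise-independence regime; the paper's formulation deliberately reduces the randomness to a single second-moment estimate on $|G|$ to sidestep exactly this difficulty.
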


\ifdefined\FULL
Before proving Lemma~\ref{lem:estimate-beta}, we need a bound on the difference between the average of all coordinates of a vector and the average of a subset of coordinates. 
\begin{lemma}
\label{lem:ave-var}
Let $\bx = \{x_1,\ldots, x_m\} \in \mathbb{R}^m$ be a vector.   Let $S$ be a subset of $\bx$'s coordinates of size $\abs{S} = \Theta(m)$.  Let $\mu = \frac{1}{m} \sum_{i \in [m]} x_i$, and $\mu' = \frac{1}{\abs{S}}\sum_{i \in S} x_i$. Then we have 
\begin{equation*}
\abs{\mu' - \mu}^2 = O\left(\sigma^2(\bx)\right).
\end{equation*}
\end{lemma}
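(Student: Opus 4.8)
The plan is to write the gap $\mu' - \mu$ as an average over $S$ of the coordinates centered at the \emph{global} mean $\mu$, and then control it by the variance via Cauchy--Schwarz. Since $\mu = \frac{1}{m}\sum_{i\in[m]}x_i$ is a constant, subtracting it term-by-term inside the average defining $\mu'$ gives
$$\mu' - \mu = \frac{1}{\abs{S}}\sum_{i\in S}(x_i - \mu).$$
First I would apply the Cauchy--Schwarz inequality to this sum of $\abs{S}$ terms, which yields
$$\abs{\mu' - \mu}^2 \le \frac{1}{\abs{S}^2}\cdot\abs{S}\cdot\sum_{i\in S}(x_i-\mu)^2 = \frac{1}{\abs{S}}\sum_{i\in S}(x_i-\mu)^2.$$

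The next step is to enlarge the summation range from $S$ to all of $[m]$. Because every term $(x_i-\mu)^2$ is nonnegative this only increases the right-hand side, and $\sum_{i\in[m]}(x_i-\mu)^2 = m\,\sigma^2(\bx)$ directly from the definition of the variance. Combining the two inequalities gives
$$\abs{\mu'-\mu}^2 \le \frac{m}{\abs{S}}\,\sigma^2(\bx),$$
and since $\abs{S} = \Theta(m)$ the prefactor $m/\abs{S}$ is a constant, so the bound is $O(\sigma^2(\bx))$ as claimed.

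There is no substantive obstacle here; the statement is a direct consequence of Cauchy--Schwarz together with the fact that the subset is a constant fraction of all coordinates. The only point that requires a moment's care is the choice of centering constant: one must center the summands at the global mean $\mu$ (not at $\mu'$), so that after extending the range from $S$ to $[m]$ the sum is exactly $m\,\sigma^2(\bx)$. Centering at $\mu'$ instead would relate the quantity to the variance of the sub-collection rather than of the whole vector $\bx$, which is not what the lemma asserts.
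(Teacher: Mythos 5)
Your proof is correct and uses essentially the same idea as the paper: after centering at the global mean, apply Cauchy--Schwarz to the subset sum and bound it by the full sum $m\,\sigma^2(\bx)$, with the $\Theta(m)$ size of $S$ absorbing the prefactor. If anything, your version is slightly cleaner, since the paper first sorts the coordinates and reduces to prefix/suffix ``extreme cases'' before applying the identical Cauchy--Schwarz step, whereas your argument handles an arbitrary subset $S$ directly.
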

\begin{proof}
W.l.o.g., let $x_1  \le \ldots \le x_m$.  Let $\alpha \in (0,1)$ be an arbitrary constant. Let 
$$\mu_1 = \frac{1}{\alpha m}  \sum_{i \in [\alpha m]}x_i \quad \text{and} \quad \mu_2 = \frac{1}{(1-\alpha) m}  \sum_{i \in [m] \backslash [\alpha m]}x_i.$$ We only need to prove two extreme cases where $S = [\alpha m]$ or $S = [m] \backslash [\alpha m]$ (in both cases $\abs{S} = \Theta(n)$ since $\alpha = (0,1)$ is an arbitrary constant):
\begin{equation}
\label{eq:ave-var}
\abs{\mu_1 - \mu}^2 \le \frac{1}{\alpha} \cdot \sigma^2(\bx) \quad \text{and} \quad \abs{\mu_2 - \mu}^2 \le \frac{1}{1-\alpha} \cdot \sigma^2(\bx).
\end{equation}
For simplicity (and w.o.l.g.), we assume $\mu = 0$, since we can always define $y_i = x_i - \mu$ and prove on $y_i$'s.  We can write the variance of $\bx$ as 
\begin{eqnarray*}
\sigma^2(\bx) & = & \frac{1}{m} \left(\sum_{i \in [\alpha m]} x_i^2  +  \sum_{i \in [m] \backslash [\alpha m]} x_i^2 \right) \\
&\ge& \left(\alpha m \mu_1^2 + (1-\alpha) m \mu_2^2\right)/m \quad \text{(Cauchy-Schwarz)} \\
&=& \alpha \mu_1^2 + (1-\alpha) \mu_2^2.
\end{eqnarray*}
(\ref{eq:ave-var}) follows straightforwardly.
\end{proof}

\smallskip

\begin{proof} (of Lemma~\ref{lem:estimate-beta})
Fix any $\bx^* = (x_{i_1}, \ldots, x_{i_{n-k}})$ in $\S_{n-k}(\bx)$ that has the minimum variance.  Let 
$O$ be the set of the top-$k$ indices $i$ in $\bx$ that maximize $\abs{x_i - \mean(\bx^*)}$.  By Lemma~\ref{lem:min-k} we have that $\bx^*$ can be obtained from $\bx$ by dropping coordinates indexed by $O$.

We call an index $i \in [s]$ in the sketching vector $\bw = \Pi(g) \bx$ {\em contaminated} if there is at least one $o \in O$ such that $g(o) = i$. W.l.o.g., we assume $w_1/\pi_1 \le \ldots \le w_s/\pi_s$ where $\bpi$ is defined at Line~\ref{line:pi} of Algorithm~\ref{alg:l2-recover}.  Let 
$${I} = \{i\ |\ s/2 - k \le i < s/2 +k\}$$ be the $2k$ ``median'' indices of $\bw$, and $$\bar{{I}} = \{i\ |\ i < s/2 - k \vee i \ge s/2 +k\}$$ be the rest of indices in $\bw$.  Since $\abs{O} = k$ and $s \ge 4k$, there are {\em at most} $k$ coordinates in $I$ that are contaminated, and {\em at least} $k$ coordinates in $\bar{I}$ that are {\em not} contaminated.  

The approximation to the bias $\beta$ at Line~\ref{line:beta} of Algorithm~\ref{alg:l2-recover} can be written as
\begin{equation}
\label{eq:beta}
\hat{\beta} = \frac{\sum_{i \in {I}} w_i}{\sum_{i \in {I}} \pi_i}.
\end{equation}
Let $O' = I \cap g(O)$ be the indices in $I$ that are contaminated, and let $J$ be an arbitrary subset of $\bar{I}$ with size $\abs{O'}$.  Define 
\begin{equation}
\label{eq:gamma}
\gamma_1 = \min_J \frac{\sum_{i \in {I \cup J \backslash O'}} w_i}{\sum_{i \in {I \cup J \backslash O'}} \pi_i}  \ \ \text{and} \ \ \gamma_2 = \max_J \frac{\sum_{i \in {I \cup J \backslash O'}} w_i}{\sum_{i \in {I \cup J \backslash O'}} \pi_i}.
\end{equation}
It is easy to see that $\gamma_1 \le \hat{\beta} \le \gamma_2$: since $s \ge 4k$, one can always find a subset $J \subseteq \bar{I}$ of size $\abs{O'}$ such that for any $j \in J, o \in O'$ we have $w_j / \pi_j \ge w_o / \pi_o$, and replacing $O'$ with $J$ only increases the RHS of (\ref{eq:beta}); On the other hand one can also find a subset $J \subseteq \bar{I}$ of size $\abs{O'}$ such that for any $j \in J, o \in O'$ we have $w_j / \pi_j \le w_o / \pi_o$, and replacing $O'$ with $J$ only decreases  the RHS of (\ref{eq:beta}).

We now show that both $\gamma_1$ and $\gamma_2$ deviate at most $O(\sigma(\bx^*))$ from $\mean(\bx^*)$, and consequently $\hat{\beta}$, which is sandwiched by $\gamma_1$ and $\gamma_2$, deviates from $\mean(\bx^*)$ by at most $O(\sigma(\bx^*))$.  Consider the set $G = g^{-1}(I \cup J \backslash O')$. First, by definitions of $I, J$ and $O'$ we have $G \subseteq \{{i_1}, \ldots, {i_{n-k}}\}$; and thus $\{x_j\ |\ j \in G\}$ are coordinates in $\bx^*$. Second, since $\abs{I \cup J \backslash O'} = \Theta(k)$ and $g$ is a random mapping from $[n]$ to $[s]$, by a Chebyshev inequality we have $\abs{G} = \Theta(n)$ with probability at least $1 - O(1/n)$.\footnote{More precisely, define for each $i \in [n]$ a random variable $Y_i$, which is $1$ if $g(i) \in I \cup J \backslash O'$ and $0$ otherwise. Since $\abs{I \cup J \backslash O'} = \Theta(k)$ and $s = \Theta(k)$, we have $\E[Y_i] = \Theta(1)$, and $\var[Y_i] \le \E[Y_i^2] = O(1)$.  Next note that $\abs{G} = \sum_{i \in [n]} Y_i$. We thus can apply a Chebyshev inequality on $Y_i$'s and conclude that $\abs{G} = \Theta(n)$ with probability $1-O(1/n)$.} 
%\footnote{More precisely, define for each $i \in [n]$ a random variable $Y_i$, which is $1$ if $g(i) \in I \cup J \backslash O'$ and $0$ otherwise. Since $\abs{I \cup J \backslash O'} = \Theta(k)$ and $s = \Theta(k)$, we have $Y_i = 1$ with probability $\Theta(1)$.  Next note that $\abs{G} = \sum_{i \in [n]} Y_i$. We thus can apply a Chernoff bound on $Y_i$'s and conclude that $\abs{G} = \Theta(n)$ with probability $1-1/n$.} 
For any $J \subseteq \bar{I}$ of size $\abs{O'}$, let 
$$\gamma_J = \frac{1}{\abs{G}} \sum_{j \in G} x_j = \frac{\sum_{i \in {I \cup J \backslash O'}} w_i}{\sum_{i \in {I \cup J \backslash O'}} \pi_i}.$$
By Lemma~\ref{lem:ave-var}, we have
\begin{equation}
\label{eq:G}
\abs{\gamma_J - \mean(\bx^*)} = O(\sigma(\bx^*)).  
\end{equation} 
Since Inequality~(\ref{eq:G}) applies to any $J \subseteq \bar{I}$ of size $\abs{O'}$, we have $\abs{\gamma - \mean(\bx^*)} = O(\sigma(\bx^*))$ for any $\gamma \in \{\gamma_1, \gamma_2\}$.
\end{proof}
\fi

Finally we prove Theorem~\ref{thm:l2} using Lemma~\ref{lem:deviation} and \ref{lem:estimate-beta}; we show that the obtained $\hat{\beta}$ is a good approximation of the best $\beta$ that minimizes $\err{2}(\bx - \beta)$.

\begin{proof}(of Theorem~\ref{thm:l2})
Let $\bx^*$ be a vector in $\S_{n-k}(\bx)$ that has the minimum variance. At Line \ref{line:debias}-\ref{line:median} in Algorithm \ref{alg:l2-recover} the Count-Sketch recovery algorithm is used to compute $\hat{\bz}$ as an approximation to $\bx - \hat{\beta}$. By Theorem~\ref{thm:count-sketch} we have
\begin{equation*}
  \Pr\left [\infnorm{\hat{\bz} - (\bx - \hat{\beta})} = O\left(\frac{1}{\sqrt{k}}\right)\cdot\err{2}(\bx-\hat{\beta})\right] \geq 1 - \frac{1}{n}.
\end{equation*}
Since at Line~\ref{line:change} we set $\hat{\bx} = \hat{\bz} + \hat{\beta}$, it holds that
\begin{equation}
  \label{eq:count-sketch}
  \Pr\left [\infnorm{\hat{\bx} - \bx} = O\left(\frac{1}{\sqrt{k}}\right)\cdot\err{2}(\bx-\hat{\beta})\right] \geq 1 - \frac{1}{n}.
\end{equation}
By Lemma \ref{lem:estimate-beta}, 
$$  \Pr\left[|\hat{\beta} - \mean{\bx^*}| = O\left(\sigma(\bx^*)\right)\right] = 1 - O\left(\frac{1}{n}\right). $$
Plugging it to Lemma \ref{lem:deviation} we have with probability at least $(1 - O(1/n))$ that
\begin{equation}
  \label{eq:deviation}
  \err{2}(\bx - \hat{\beta}) = O\left(\min_{\beta}\err{2}(\bx - \beta)\right).
\end{equation}
Inequality (\ref{eq:l2-correctness}) in Theorem~\ref{thm:l2} follows from
$(\ref{eq:count-sketch})$ and $(\ref{eq:deviation})$.
\end{proof}

\paragraph{Complexities}  Since each CS-Matrix or CM-matrix only has one non-zero entry in each column, using sparse matrix representation we can compute $\Psi(h^i, r^i)\bx\ (i \in [d])$ or $\Pi(g)\bx$ in $O(n)$ time. Thus the sketching phase can be done in time $O(n d) = O(n \log n)$.

The sketch size is $O(k \log n)$, simply because $\Pi(g)\bx$ and each $\Psi(h^i, r^i)\bx\ (i \in [d])$ has size $O(k)$.

In the recovery phase, the dominating cost is the computation of coordinates in $\hat{\bz}$, for each of which we need $O(d) = O(\log n)$ time.  Thus the total cost is $O(n \log n)$.

\subsection{Streaming Implementations}
\label{sec:streaming}

\ifdefined\FULL

We now discuss how to maintain the bias (estimation) $\beta$ at any time step in the streaming setting. This is useful since we would like to answer individual point queries efficiently without decoding the whole vector $\bx$; to this end we need to first maintain $\beta$ efficiently.  

For the $\ell_\infty/\ell_1$ guarantee we can easily maintain a good approximation of $\beta$ with $O(\log \log n)$ time per update: we can simply keep the $\Theta(\log n)$ sampled coordinates sorted (e.g., using a balanced binary search tree) during the streaming process, and use their median as an approximation of $\beta$ at any time step. For the $\ell_\infty/\ell_2$ guarantee, the recovery procedure in Algorithm~\ref{alg:l2-recover} takes the average of items in the middle $2k$ of the $s$ sorted buckets $w_1/\pi_1 \le \ldots \le w_s/\pi_s$.  This can again be done in $O(\log s) = O(\log k + \log\log n)$ time per update using a balanced binary search tree. An alternative implementation using {\em biased heaps} is described Algorithm~\ref{algo:bias-heap}.  The idea of the algorithm is very simple: we use heaps to keep track of $\sum_{i \in A} w_i, \sum_{i \in C} w_i, \sum_{i \in A} \pi_i, \sum_{i \in C} \pi_i$, where $A$ is the set of the top $(s/2 - k)$ coordinates and $C$ is the set of the bottom $(s/2 - k)$ coordinates (the order is defined by $w_i/\pi_i$). Using these sums together with $\sum_{i \in [s]} w_i$ and  $\norm{\pi}_1$ we can well estimate the bias.  

\begin{algorithm}[!ht]
\DontPrintSemicolon % Some LaTeX compilers require you to use \dontprintsemicolon instead
\KwIn{$s$ (\# of rows of CM-matrix $\Pi$), and vector $\bpi$ (coordinate-wise sum of columns of matrix $\Pi$)}
create $s$ nodes, each of which is associated with a (key, value, id) triple  $(w_i/\pi_i, w_i, i)$ where $w_i = 0$ and $\pi_i$ is the $i$-th coordinate of $\bpi$ \;
\tcc{key is the priority of nodes in the heap}
set $k \leftarrow s/4$\;
initialize a min-heap $A$ for nodes $1$ to $s/2 - k - 1$\; 
initialize a max-heap $B$ for nodes $s/2 - k$ to $s$\;
initialize a max-heap $C$ for nodes $s/2 + k$ to $s$\;
initialize a min-heap $D$ for nodes $1$ to $s/2 + k - 1$\;
$\pi_A \triangleq \sum_{i \in \{\text{all id in}~ A\}} \pi_i$;
$\pi_C \triangleq \sum_{i \in \{\text{all id in}~ C\}} \pi_i$\;
$w_A \triangleq \sum_{i \in \{\text{all id in}~ A\}} w_i$;
$w_C \triangleq \sum_{i \in \{\text{all id in}~ C\}} w_i$\;
$w \leftarrow 0$\;

\tcc{process updates or queries}
\Case{upon receiving an update $(j, \Delta)$} {
  $w \leftarrow w + \Delta$\;
  find node with id $j$ in two of heaps $A, B, C, D$ and update its $w_j$ by adding $\Delta$; update the corresponding key and maintain the heap properties\;
  \If {the key of the top node in $A$ is smaller than that of the top of $B$}{
    swap their tops and maintain heap properties\;
  }
  \If {the key of the top node of $C$ is larger than that of the top of $D$}{
    swap their tops and maintain heap properties\;
  }
  update $\pi_A, \pi_C, w_A, w_C$ if necessary\;
}
\Case{upon receiving a query of the bias $\beta$} {
  \Return{ $\frac{w - w_A - w_C}{\norm{\bpi}_1 - \pi_A - \pi_C}$ }\;
}
\caption{{Bias-Heap}}
\label{algo:bias-heap}
\end{algorithm}

The full streaming algorithm for $\ell_\infty/\ell_2$ guarantee is described in Algorithm~\ref{algo:l2-streaming}, which is similar to Algorithm~\ref{alg:l2-recover} but has been augmented to fit the streaming model.  The one for $\ell_\infty/\ell_1$ guarantee can be done similarly, and we omit here.
% (in experiments we will only use $\ell_\infty/\ell_2$ since its performance is better than $\ell_\infty/\ell_1$). 

\begin{algorithm}[t]
\DontPrintSemicolon % Some LaTeX compilers require you to use \dontprintsemicolon instead
\tcc{$s = c_s k$ for a constant $c_s \ge 4$; $d = \Theta(\log n)$; $g, h^1, \ldots, h^d : [n] \to [s]$; $r^1, \ldots, r^d : [n] \to \{-1, 1\}$ are the same as in Algorithm~\ref{alg:l2-recover}; $\bpi \gets \sum_{j \in [n]} \text{$j$-th column of } \Pi(g)$}
$\forall i\in[d], \boldsymbol{\psi}^i \leftarrow$ coordinate-wise sum of columns of the CS-matrix $\Psi(h^i, r^i)$\;
initialize $\by^1, \ldots \by^d$ to be all-zero vectors of length $s$\; 
initialize Bias-Heap in  Algorithm~\ref{algo:bias-heap} with $s$ and $\bpi$\;
\tcc{process updates or queries}
\Case {upon an update $(e_i, \Delta)$} {
  $\forall t \in [d]$, $y^t_{h^t(i)} \leftarrow y^t_{h^t(i)} + r^t(i)\cdot\Delta$\;
%  \tcc{$\Delta$ needs to be scaled by $1/\pi_{g(i)}$}
  update the Bias-Heap with $(g(e_i), \Delta)$\;
}
\Case {upon receiving a query for computing $x_i$} {
  query Bias-Heap to get $\hat{\beta}$\;
  $z \leftarrow \median{\left\{r^t(i) \cdot \left. \left( y^t_{h^t(i)} - \psi^t_{h^t(i)}\cdot \hat{\beta}\right)~\right|~ t \in [d]\right\} }$\;
  \Return{$z + \hat{\beta}$}\;
}
\caption{Streaming Algorithm with $\ell_\infty/\ell_2$ Guarantee}
\label{algo:l2-streaming}
\end{algorithm}

Finally we comment on how to generate and store random hash functions in the streaming setting.  In fact, we can simply choose hash functions $g, h^i, r^i (i \in [d])$ to be 2-wise independent (and each will use $O(1)$ space to store). This will not affect any of our analysis since we only need to use the second moment of random variables (same in the proofs for Theorem~\ref{thm:count-median} and Theorem~\ref{thm:count-sketch} for the CM-sketch and CS-sketch, see \cite{CCFC02,CM04}).
%, except a place  in the proof for Lemma~\ref{lem:estimate-beta} where we used the Chernoff bound. But that usage of the Chernoff bound is just for convenience, and can easily be replaced by the Cherbyshev inequality without affecting the asymptotic bounds. 
Thus the total extra space to store random hash functions can be bounded by $O(d) = O(\log n)$, and is negligible compared with the sketch size $O(k \log n)$.

\else

Finally we discuss how to maintain the bias (estimation) $\beta$ at any time step in the streaming setting. This is useful since we would like to answer individual point queries efficiently without decoding the whole vector $\bx$; to this end we need to first maintain $\beta$ efficiently.  

For \oneSR, we can easily maintain a good approximation of $\beta$ with $O(\log \log n)$ time per update: we can simply keep the $\Theta(\log n)$ sampled coordinates sorted during the streaming process, and use their median as an approximation of $\beta$ at any time step. 

For \twoSR, the recovery procedure in Algorithm~\ref{alg:l2-recover} takes the average of items in the middle $2k$ of the $s$ sorted buckets $w_1/\pi_1 \le \ldots \le w_s/\pi_s$.  This can again be done in $O(\log s) = O(\log k + \log\log n)$ time per update using a balanced binary search tree. An alternative implementation is to use {\em biased heap}  which has the same update/query complexity but works faster in practice. Biased heap is  described in Algorithm 5 of the full version of this paper \cite{CZ16e}.  The whole streaming implementation can also be found in Algorithm 6 of \cite{CZ16e}.

\fi

\newcommand{\ellerror}{\textbf{$\ell_1$-{\tt error}}}
\newcommand{\ellmeanOne}{{$\ell_1$-\textsf{mean}}}
\newcommand{\ellmeanTwo}{{$\ell_2$-\textsf{mean}}}

\section{Experiments}
\label{sec:exp}

In this section we give our experimental studies. 
%We first introduce the reference algorithms, datasets, measurement metrics and the computation environments used in our experiments. Next, we show the experimental results for point query.  Finally we run our algorithms in the streaming setting and show the running time and quality of point query.
%we show the performance (including both running time and quality of point query) of our algorithms in the streaming setting.

\subsection{The Setup}
\label{sec:setup}

\paragraph{Reference Algorithms}
%We use {\sc $\ell_1$-Sketch/Recover} (\oneSR) to denote Algorithm~\ref{alg:l1-sketch} and Algorithm~\ref{alg:l1-recover} for the $\ell_\infty/\ell_1$ guarantee, and {\sc $\ell_2$-Sketch/Recover} (\twoSR) to denote Algorithm~\ref{alg:l2-sketch} and Algorithm~\ref{alg:l2-recover} for the $\ell_\infty/\ell_2$ guarantee. %We use  \oneStr\ and \twoStr\ to denote their streaming implementation respectively.
We compare \oneSR\ and \twoSR\ with Count-Sketch (\CS) algorithm and Count-Median Sketch  (\CM) algorithm, as well as non-linear sketches Count-Min with conservative update (\CMCU) \cite{EV02,GDC12} and Count-Min-Log with conservative update (\CMLCU) \cite{PF15}.  For \CMLCU, we set the base to be $1.00025$. 

We would like to mention that the Count-Min algorithm, which was proposed in the same paper \cite{CM04} as Count-Median, is very similar to Count-Median; they share the same sketching matrix. In fact, Count-Median can be thought as a generalization of Count-Min~\cite{CM04}.  On the other hand, \CMCU\ is an improvement upon Count-Min and has strictly better performance. We thus do not compare our algorithms with Count-Min  in our experiments. 

%We experimentally compared the performances of the two algorithms on the real-world datasets that we use in this paper, and found that their performances are very similar.  Due to the space constraints we leave the details of the comparison in Appendix A of the full version~\cite{CZ16e}. 

%We also compare our algorithms with \BOMP\ proposed in \cite{YZH+15}, but only on small datasets since \BOMP\ cannot finish in a reasonable amount of time on large datasets.  We briefly describe here how \BOMP\ works.  To sketch a vector $\bx \in \mathbb{R}^n$, \BOMP\ first computes $\by = \Phi \bx$ where $\Phi = [\phi_1, \ldots, \phi_n] \in \mathbb{R}^{t\times n}$, where each entry of $\Phi$ is independently sampled from the Gaussian distribution $\mathcal{N}(0, 1/t)$. In the recovery phase \BOMP\ prepends a new column $\frac{1}{\sqrt{n}}\sum_{i=1}^n \phi_i$ to $\Phi$ to get $\textstyle \Phi' = [\frac{1}{\sqrt{n}}\sum_{i=1}^n \phi_i, \Phi]$,  and then runs \OMP\ ({\em Orthogonal Matching Pursuit}) on $\by$ and $\Phi'$ in $k + 1$ iterations to recover $\tilde{\bx}$ as an approximation of $\bx$. We used QR-decomposition to speed up the implementation of \OMP, following the steps sketched in \cite{SBCMG12} (Section 2.3).    As mentioned, in \oneSR\ and \twoSR, as well as \CS\ and \CM, we can decode each coordinate of $\bx$ efficiently and in parallel, but we cannot do this in \BOMP\ since it uses \OMP\ as a subroutine which is a time-expensive iterative procedure. 
%that only decodes the $k$ ``outlier'' coordinates in $\bx$. 

Finally, we also compare \oneSR\ and \twoSR\ with two simple algorithms that just use the {\em mean} of {\em all} coordinates in $\bx$ as the bias (denoted by \ellmeanOne\ and \ellmeanTwo\ respectively).
\ifdefined\FULL
See sec. \ref{sec:ell-mean} for details.
\else
Due to the space constraints we leave this part to the full version \cite{CZ16e}. 
%\chensays{to be changed to new link}%(Figure 8).
\fi
 As mentioned earlier, using the mean of all the coordinates as the bias {\em cannot} give us any theoretical guarantees -- for example, it will perform badly on datasets where the top-$k$ largest coordinates are significantly larger than all the rest coordinates. However, this simple heuristic does work well on some real-world datasets.  Thus they may be interesting to practitioners.

\paragraph{Datasets}
We compare the algorithms using a set of real and synthetic datasets. 
\begin{itemize}
\item {\tt Gaussian}. Each entry of $\bx$ is independently sampled from the Gaussian distribution  $\mathcal{N}(b, \sigma^2)$ where $b$ is the {\em bias}. In our experiments, we fix $n=500,000,000$, $\sigma = 15$ and vary the value of $b$. 

\ifdefined\FULL
\item {\tt Gaussian-2}. This dataset is used to compare \oneSR, \twoSR, \ellmeanOne\ and \ellmeanTwo. Each entry of $\bx$ is independently sampled from the Gaussian distribution  $\mathcal{N}(100, 15^2)$. In our experiments, we fix $n=5,000,000$. To verify our theorems, we shift several coordinates. See Figure \ref{fig:gauss-2} for details.
\fi
  
%\item {\tt Yahoo}.  The Yahoo! Webscope dataset.  We take the A1Benchmark/real\_24.csv dataset from ydata-labeled-time-series-anomalies-v1\_0 \cite{yahoo}; the dimension of $\bx$ is $n = 1461$.
% and Figure \ref{fig:yahoo-dist-24} show the distribution of those values. 
%This small dataset is mainly used to compare \oneSR\ and \twoSR\ with \BOMP, since \BOMP\ does not scale well to larger datasets.

\ifdefined\WORLDCUP
\item{\tt WorldCup} \cite{arlitt1998world}. This dataset consists of all the requests made to all resources of the 1998 World Cup Web site between April 30, 1998 to July 26, 1998.  We picked all the requests made to all resources on May 14, 1998. We construct $\bx$ from those requests where each coordinate is the number of requests made in a particular second. The dimension of $\bx$ is therefore $24\times 3600 = 86,400$. There are about $3,200,000$ requests. 
\fi

\item {\tt Wiki} \cite{wiki}. This dataset contains pageviews to the English-language Wikipedia from March 16, 2015 to April 25, 2015. The number of pageviews of each second is recorded. We model the data as a vector $\bx$ of length about $3,500,000$ (we added up mobile views and desktop views if they have the same timestamp).  There are about $13,000,000,000$ pageviews.
%This dataset has mean $3797.5$ and median $3702$.

\item {\tt Higgs} \cite{BSW14}. The dataset was produced by Monte Carlo simulations. There are $28$ kinematic properties measured by the particle detectors in the accelerator. We model the fourth feature as a vector $\bx$ of size $11,000,000$. The vector is non-negative.

\item {\tt Meme} \cite{meme}. This dataset includes memes from the {\tt memetracker.org}. We model the vector $\bx$ as the lengths of memes. Each coordinate of $\bx$ can be thought as the number of words of a specific meme.  The dimension of $\bx$ is 210,999,824. 

\item {\tt Hudong}  \cite{konect:zhishi}. There are ``related to'' links between articles of the Chinese online encyclopaedia Hudong.\footnote{\url{http://www.hudong.com/}} This dataset contains about 2,452,715 articles, and 18,854,882 edges.  Each edge $(a, b)$ indicates that in article $a$, there is a ``related to'' link pointing to article $b$. Such links can be added or removed by users. We consider edges as a data stream, arriving in the order of editing time. Let $\bx$ be the out-degree of those articles, and $x_i$ is the number of ``related to'' links in article $i$.
% the median of $x_i$'s is $10$. 
This dataset will be used to test our algorithms in the streaming model where we dynamically maintain a sketch for $\bx$.

\end{itemize}

%We visualize all these datasets in Figure~\ref{fig:visualization-dataset} in Appendix~\ref{sec:more-exp}.

%\begin{figure}[h]
%    \centering
%    \includegraphics[width=0.4\textwidth]{draft-cjc/yahoo/dist24.png}
%    \caption{Visualization of Dataset {\tt Yahoo}. Coordinates are sorted.}
%    \label{fig:yahoo-dist-24}
%\end{figure}

%\vspace{-2mm}
\paragraph{Measurements}
We measure the effectiveness of the tested algorithms by the tradeoffs between sketch size and the recovery quality. We also compare the  running time of these algorithms in the streaming setting.  
%Lastly we test and compare our algorithms with \CS\ and \CM\ in the streaming model on {\tt Hudong} (\BOMP\ cannot finish within a reasonable time on this large dataset in our computing environments).

For \oneSR\ and \twoSR, we use $d = 9$ copies of CS/CM-matrices of dimensions $s \times n$ (see  Algorithm~ \ref{alg:l1-sketch} and Algorithm~\ref{alg:l2-sketch}). Theoretically we only need $O(\log n)$ extra words for \oneSR\ to estimate the bias, but in our implementation we use $s$ (typically much larger than $\log n$) extra words for both \oneSR\ and \twoSR, which makes it easier to compare the accuracies of \oneSR\ and \twoSR. Moreover, it also allows us to get more accurate and stable bias estimation for \oneSR.  
For \CM, \CS, \CMCU\ and \CMLCU, we set $d=10$ so that all algorithms use $10 s$ words.  We will then vary $s$ to get multiple sketch-size versus accuracy tradeoffs.

%Being consistent with our theorems (Theorem~\ref{thm:l2}, Corollary~\ref{cor:l2}, Theorem~\ref{thm:l1}, Corollary~\ref{cor:l1}), 

For point query we use the following two measurements: (1) average error $\frac{1}{n}\norm{\bx - \hat{\bx}}_1$, and (2) maximum error $\norm{\bx - \hat{\bx}}_\infty$.  Recall that $\hat{\bx}$ is the approximation of $\bx$ given by the recovery scheme.

% As mentioned above \BOMP\ only tries to recover the $k$ coordinates that deviate the most from the bias. Let $\hat{O}_k$ be the set of indices of the $k$ outlier coordinates that \BOMP\ recovers.  We thus use $\sum_{i \in \hat{O}_k} \abs{x_i - \hat{x}_i} /k$ and $\max_{i\in \hat{O}_k}\abs{x_i -\hat{x}_i}$ (which is at most $\|\bx - \hat{\bx}\|_{\infty}$)  as estimations of the average error and the maximum error of \BOMP, respectively.

% For the application of {\em outlier detection}, we use \ellerror\ to measure the performance of the algorithms: Let $m$ be the median of $\bx$, $O_k$  the set of the indices of the $k$ outliers in $\bx$ (those deviate from $m$ the most) and $\hat{O}_k$ the set of the indices of the outliers recovered by the algorithm. The \ellerror\ is defined as
% \begin{equation}
% \label{eq:outlier-error}
% 1 - \frac{\sum_{i\in \hat{O}_k}|x_i - m|}{\sum_{i\in O_k}|x_i - m|}.\end{equation}
% It is clear that if $\hat{O}_k$ is equal to $O_k$, then \ellerror\ is zero. Moreover, \ellerror\ is very robust to the choice of $k$. In contrast, the precision and recall of the outputted top-$k$ outliers are very sensitive to the choice of $k$. To see this, suppose that there are two equally large outliers in $\bx$ and we set $k = 1$, even with a perfect recovery we may end up with $|O_k \cap \hat{O}_k| = 0$.  

%We repeat each experiment $10$ times and then take the average of the results.

\paragraph{Computation Environments}
All algorithms were implemented in C++.
%When compare the running time of $\ell_p$-S/R and BOMP, we run the code in a laptop equipped with 4GM RAM and Intel Core i5-3317U CPU @ 1.70GHz x 4. The operating system is Ubuntu 14.04 LTS. The code is run in single-thread mode.
All experiments were run in a server with 32GB RAM and an Intel Xeon E5-2650 v2 8-core processor; the operating system is Red Hat Enterprise Linux 6.7. 

\subsection{Accuracy for Point Query}

\begin{figure*}[!ht]
     \centering
     \subfloat[][$b=100$, {ave.~error}]{\includegraphics[width=0.22\textwidth,height=0.20\textwidth]{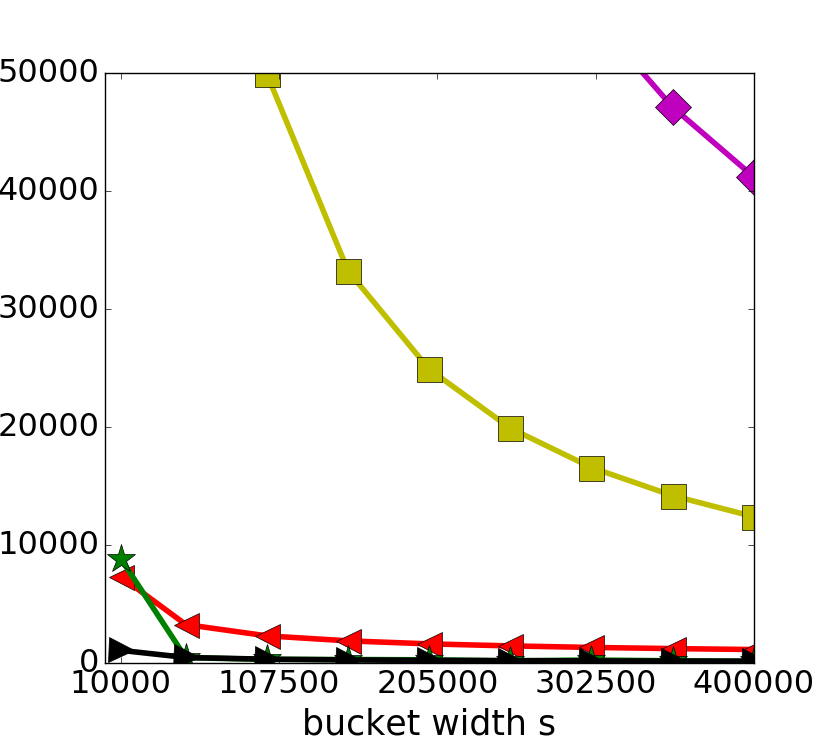}\label{fig:gauss-100-ave}}
     \subfloat[][$b=100$, {max.~error}]{\includegraphics[width=0.22\textwidth,height=0.20\textwidth]{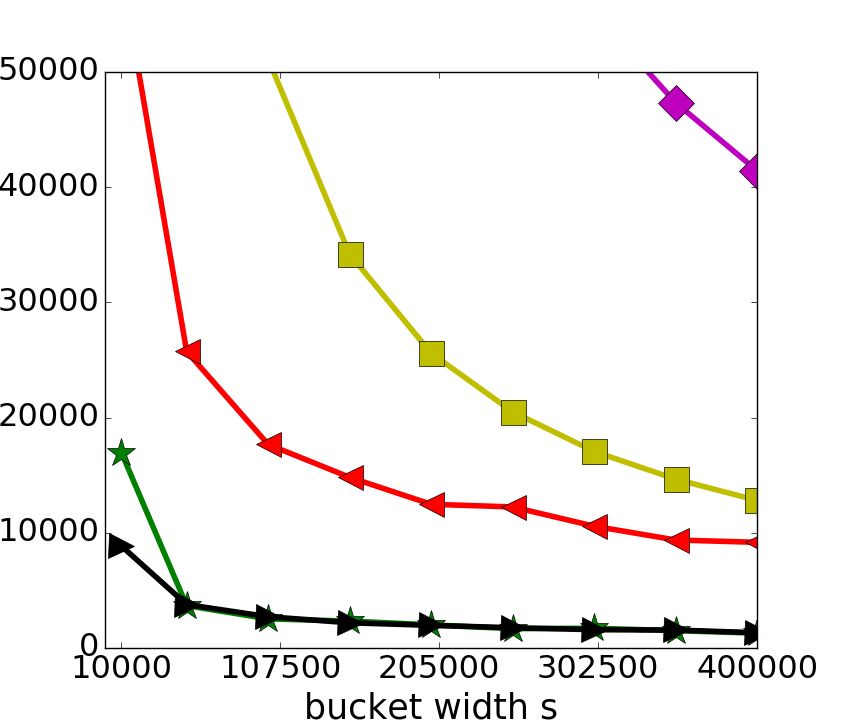}\label{fig:gauss-100-max}}
     \subfloat[][$b=500$, {ave.~error}]{\includegraphics[width=0.22\textwidth,height=0.20\textwidth]{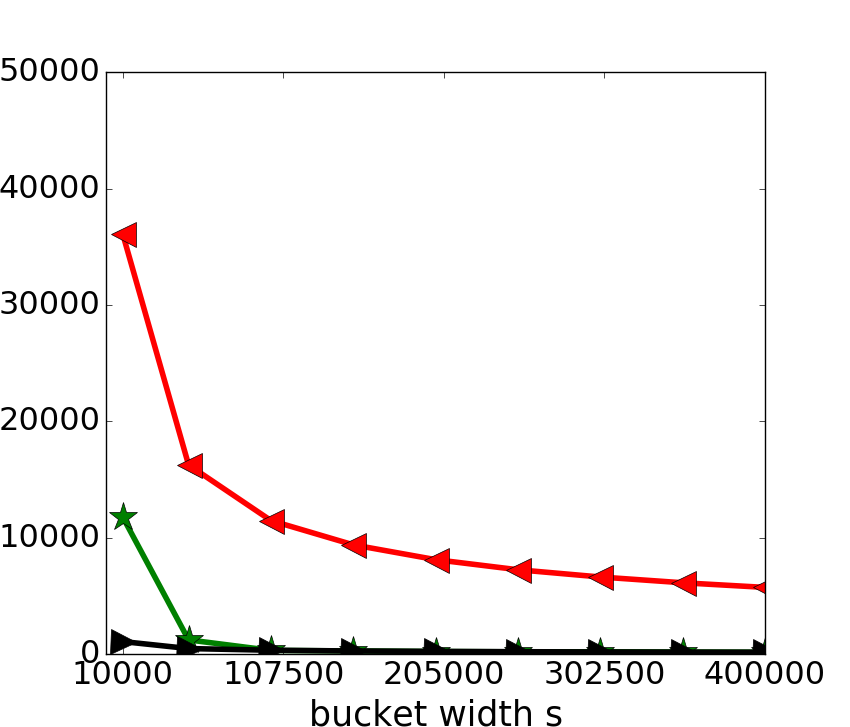}\label{fig:gauss-500-ave}}
     \subfloat[][$b=500$, {max.~error}]{\includegraphics[width=0.22\textwidth,height=0.20\textwidth]{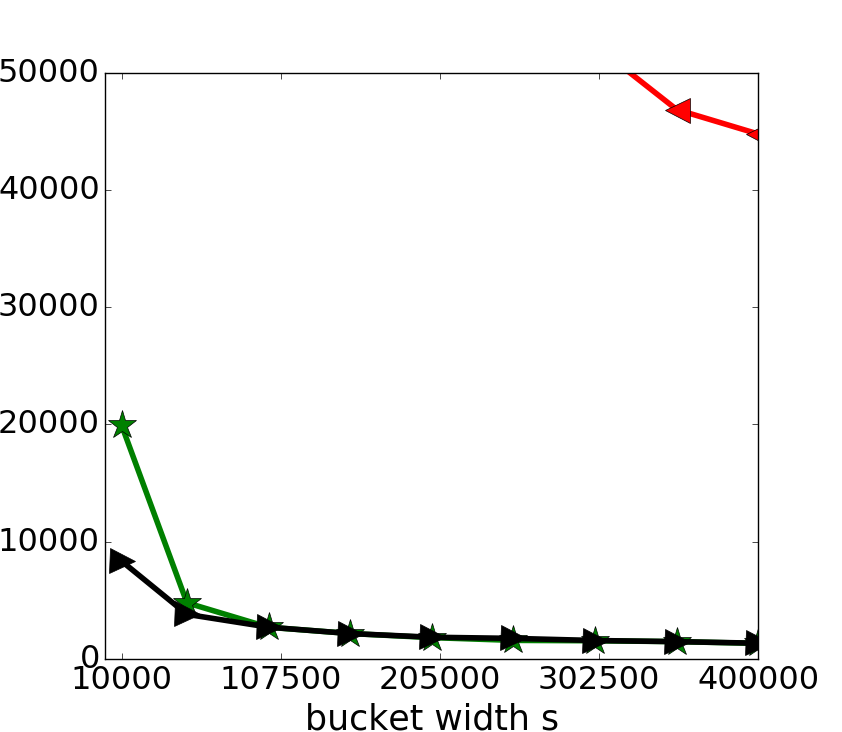}\label{fig:gauss-500-max}}
     \subfloat[][Legend]{\includegraphics[width=0.1\textwidth]{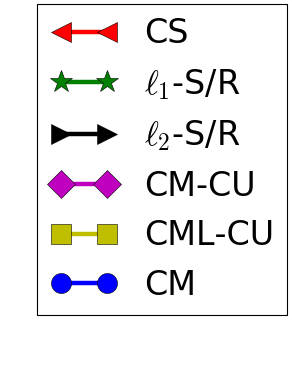}\label{fig:legend}}
     \caption{{\tt Gaussian} dataset; $n = 500,000,000$ and $\sigma=15$. Some curves for \CM, \CMCU, \CMLCU\ cannot be presented since the errors are too large}
     \label{fig:gauss}
\end{figure*}

\begin{figure}[!ht]
     \centering
     \subfloat[][{Average error}]{\includegraphics[width=0.25\textwidth,height=0.20\textwidth]{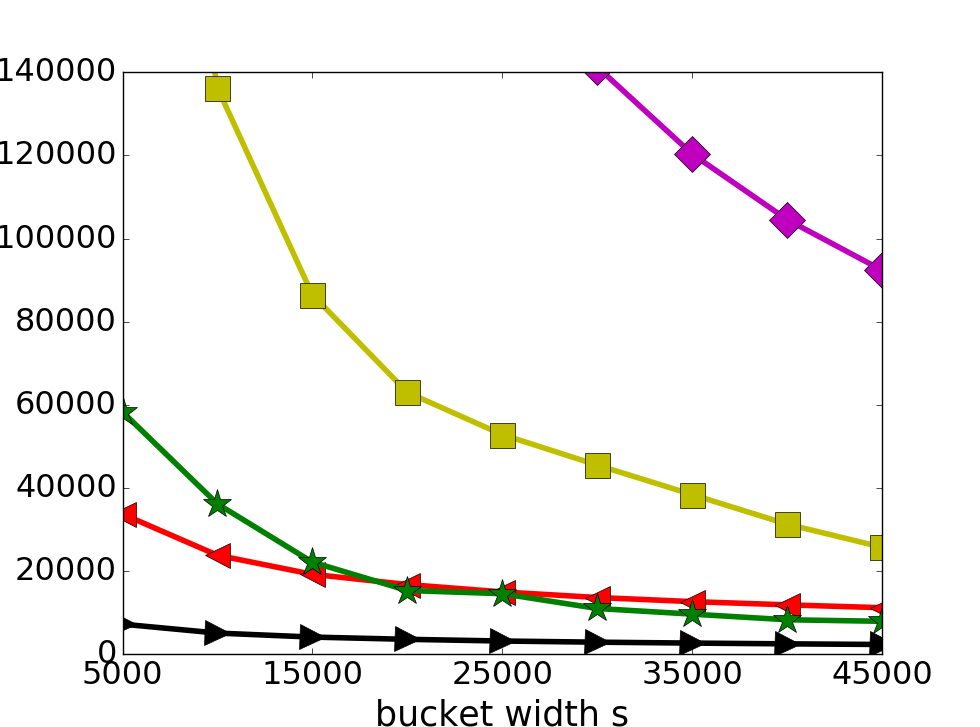}\label{fig:wiki-ave}}
     \subfloat[][{Maximum error}]{\includegraphics[width=0.25\textwidth,height=0.20\textwidth]{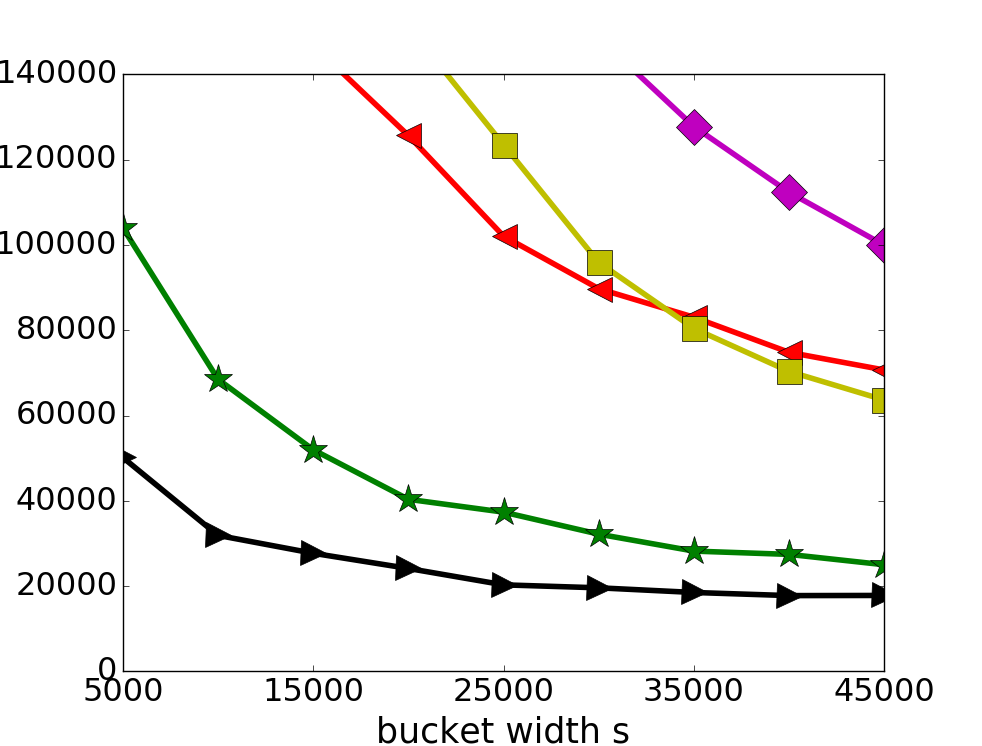}\label{fig:wiki-inf}}
     \caption{{\tt Wiki} dataset; $n = 3,513,600$. The curve for \CM\ cannot be presented since the errors are too large}
     \label{fig:wiki}
\end{figure}

\ifdefined\WORLDCUP
\begin{figure}[!ht]
     \centering
     \subfloat[][{Average error}]{\includegraphics[width=0.25\textwidth,height=0.20\textwidth]{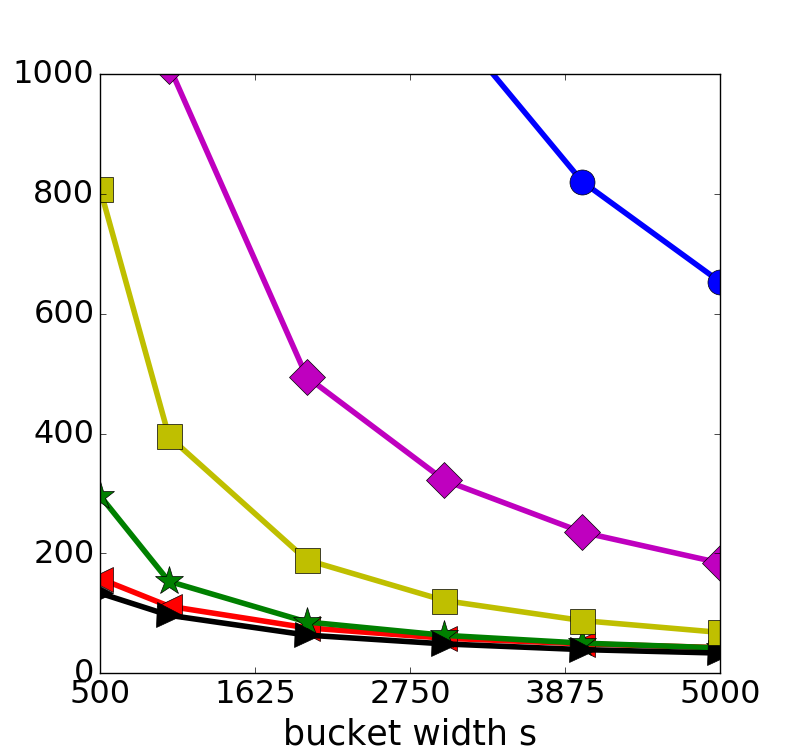}\label{fig:wc-ave}}
     \subfloat[][{Maximum error}]{\includegraphics[width=0.25\textwidth,height=0.20\textwidth]{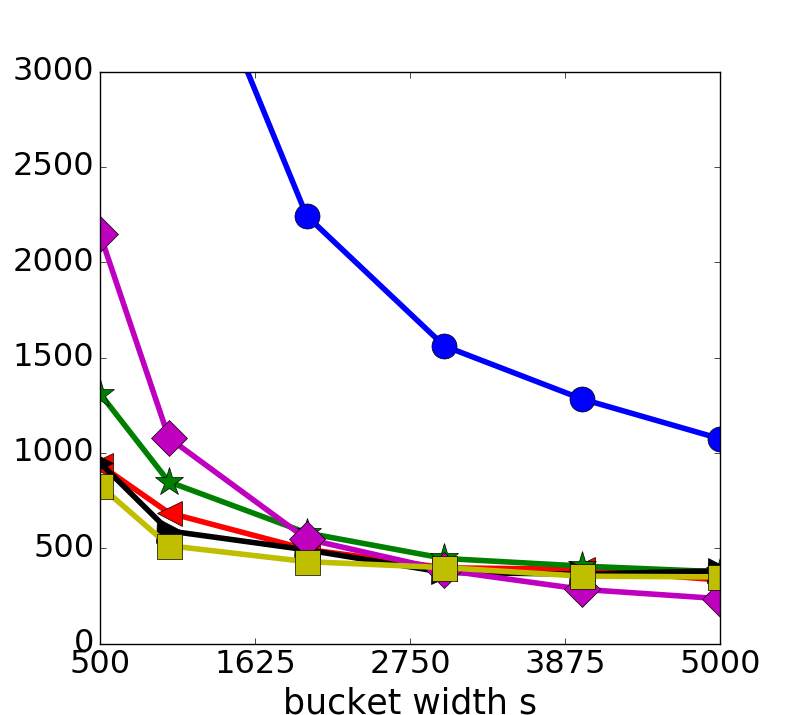}\label{fig:wc-inf}}
     \caption{{\tt WorldCup} dataset; $n = 86,400$}
     \label{fig:wc}
\end{figure}
\fi

\begin{figure}[!ht]
     \centering
     \subfloat[][{Average error}]{\includegraphics[width=0.25\textwidth,height=0.20\textwidth]{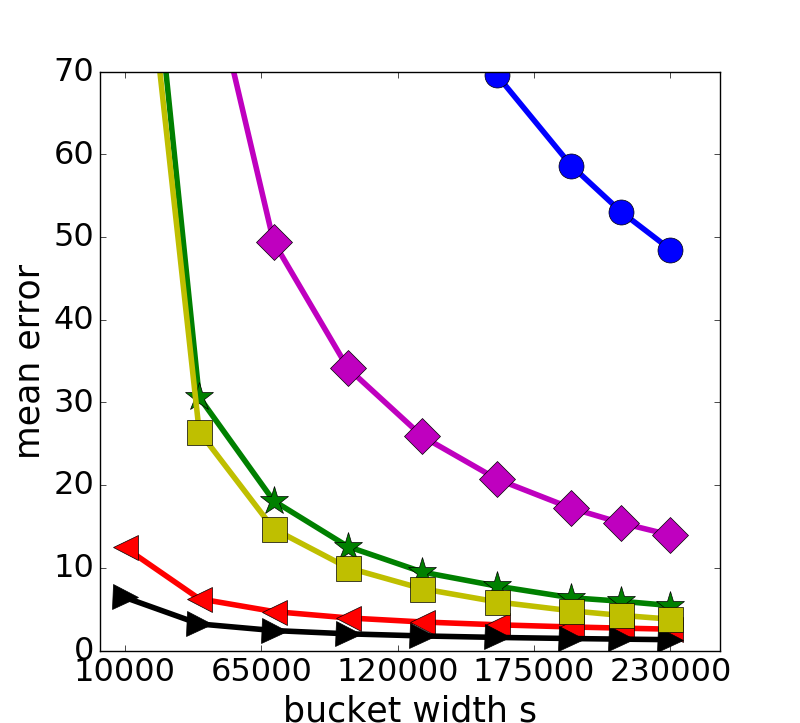}\label{fig:higgs-4-mean}}
     \subfloat[][{Maximum error}]{\includegraphics[width=0.25\textwidth,height=0.20\textwidth]{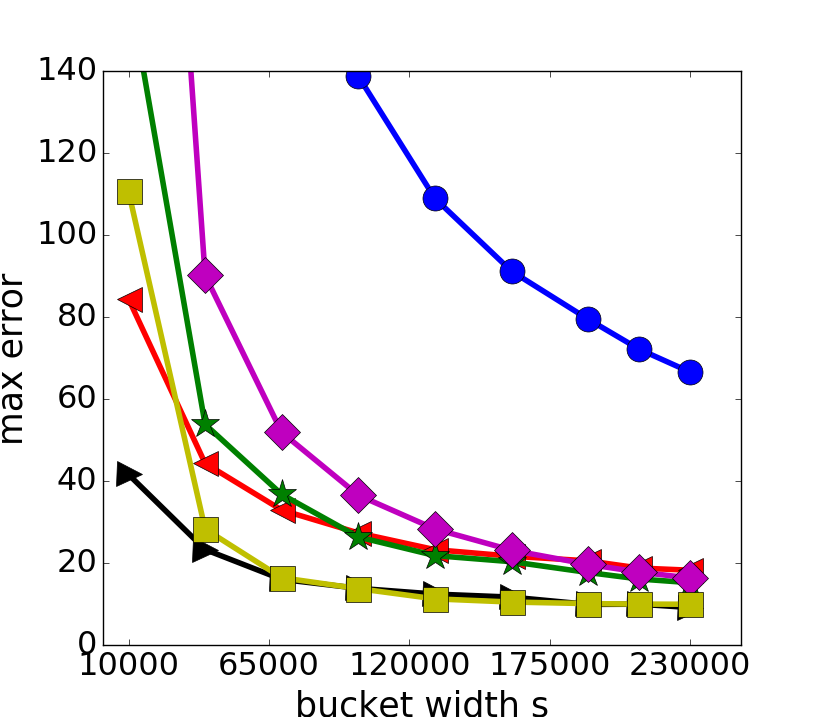}\label{fig:higgs-4-max}}
     \caption{{\tt Higgs} dataset; $n = 11,000,000$.}
     \label{fig:higgs}
\end{figure}

\begin{figure}[!ht]
     \centering
     \subfloat[][{Average error}]{\includegraphics[width=0.25\textwidth,height=0.20\textwidth]{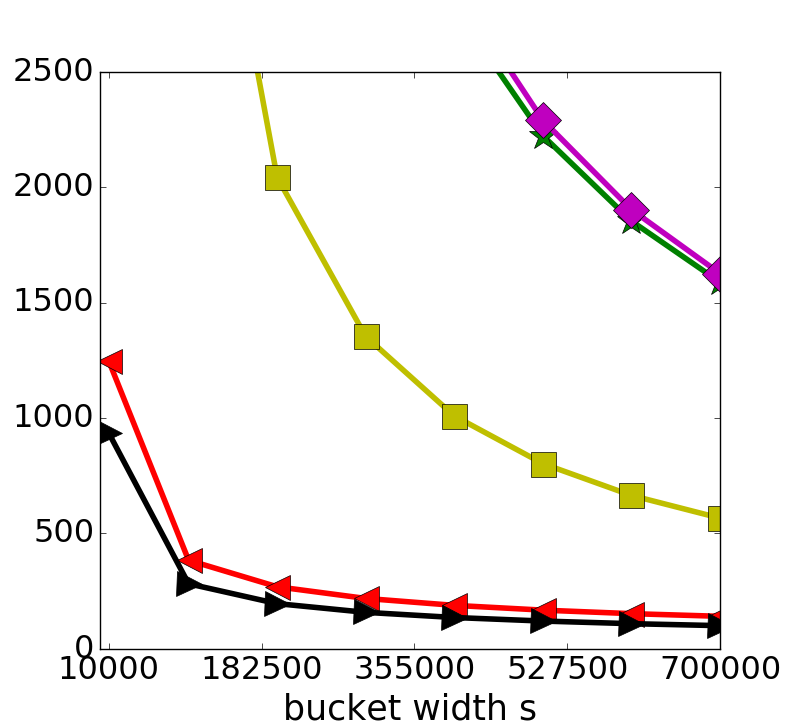}\label{fig:meme-mean}}
     \subfloat[][{Maximum error}]{\includegraphics[width=0.25\textwidth,height=0.20\textwidth]{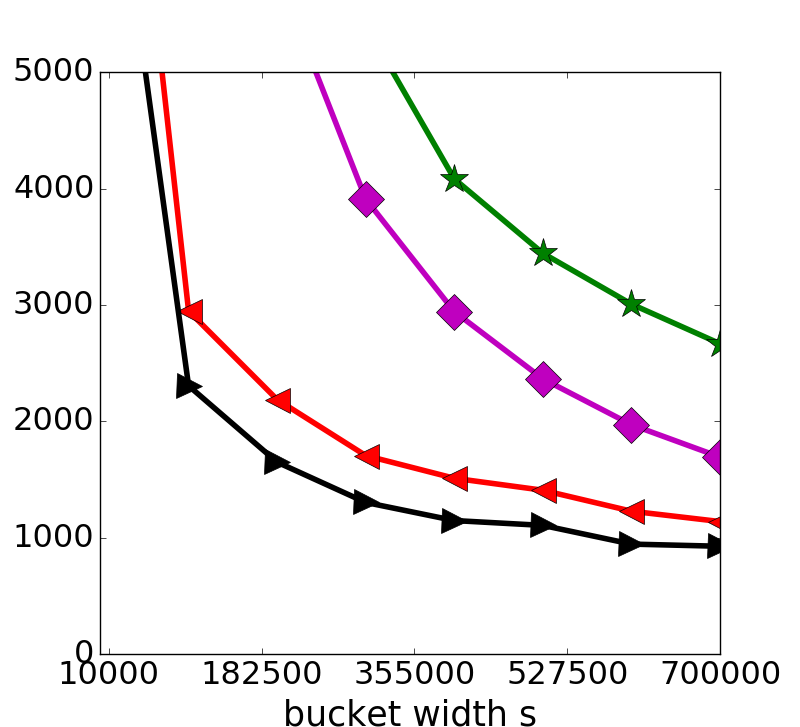}\label{fig:meme-max}}
     \caption{{\tt Meme} dataset; $n = 210,999,824$. Some curves for \CM\ and \CMLCU\ cannot be presented since the errors are too large}
     \label{fig:meme}
\end{figure}

%our conclusion is similar to that on the smaller {\tt Gaussian} dataset (Figure \ref{fig:gauss-10000}), that is, all algorithms have similar recovery qualities when $b = 0$; and when $b = 500$,  \oneSR\ and \twoSR\ have comparable performance and significantly outperform \CM\ and \CS. 

%\paragraph{Hudong dataset}

\begin{figure*}[!ht]
     \centering
     \subfloat[][{Average error}]{\includegraphics[width=0.22\textwidth,height=0.20\textwidth]{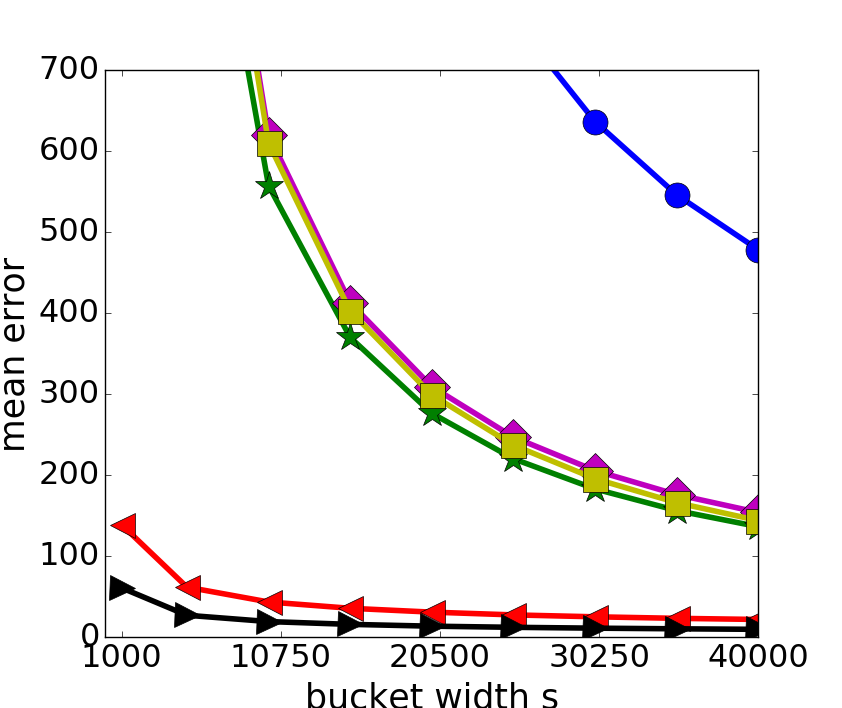}\label{fig:hudong-ave}}
     \subfloat[][{Maximum error}]{\includegraphics[width=0.22\textwidth,height=0.20\textwidth]{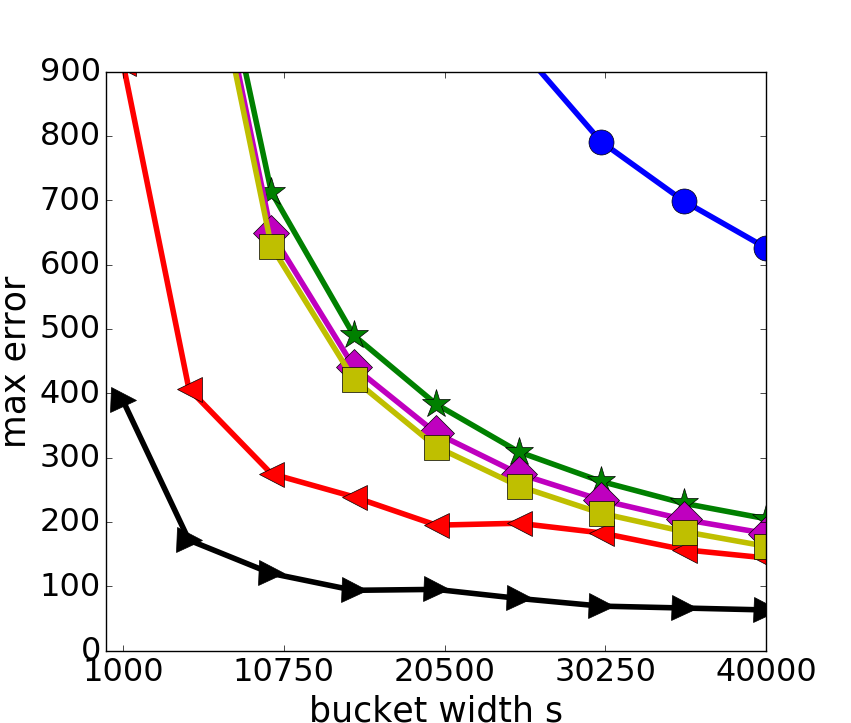}\label{fig:hudong-inf}}
     \subfloat[][Update time]{\includegraphics[width=0.22\textwidth,height=0.20\textwidth]{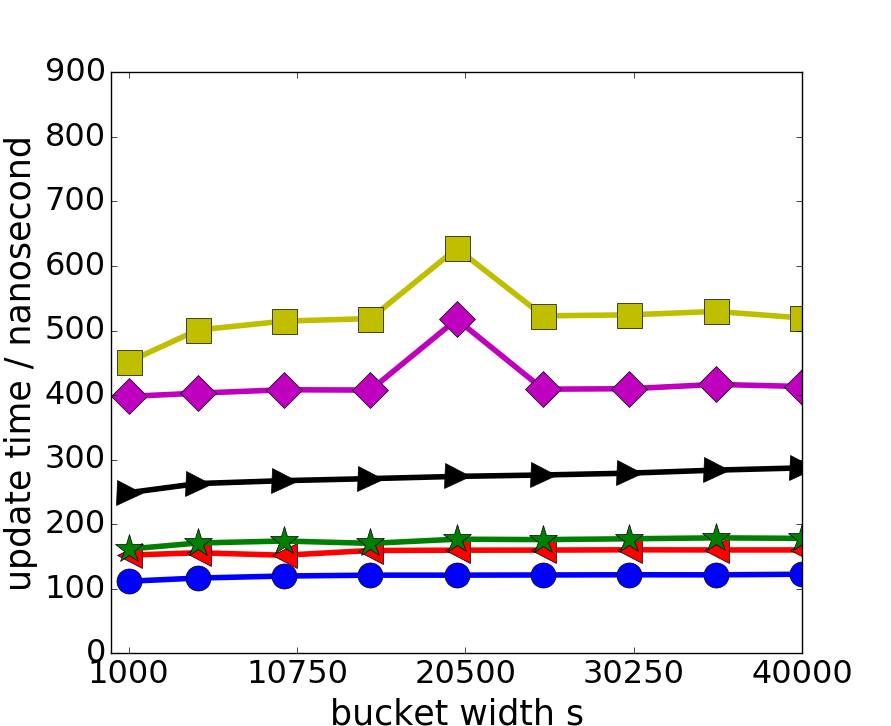}\label{fig:update-time}}
     \subfloat[][Query time]{\includegraphics[width=0.22\textwidth,height=0.20\textwidth]{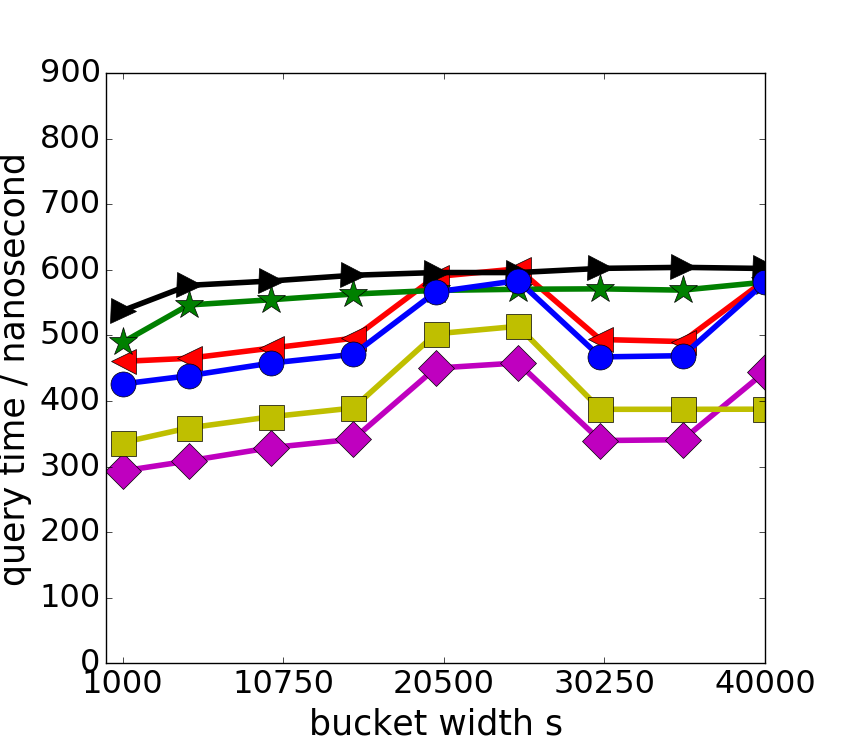}\label{fig:point-query}}
      \subfloat[][Legend]{\includegraphics[width=0.1\textwidth]{legend}\label{fig:legend}}
     \caption{{\tt Hudong} dataset; $n = 2,232,285$, there are $18,854,882$ updates in total}
     \label{fig:hudong}
\end{figure*}

\begin{figure}[!ht]
     \centering
     \subfloat[][{Average error}]{\includegraphics[width=0.25\textwidth,height=0.20\textwidth]{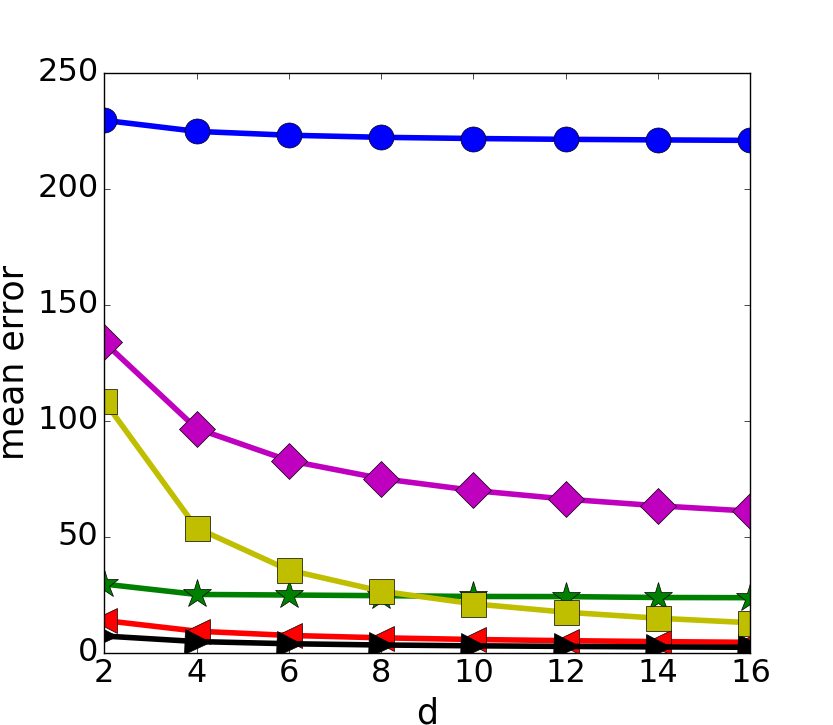}\label{fig:higgs-4-mean-d}}
     \subfloat[][{Maximum error}]{\includegraphics[width=0.25\textwidth,height=0.20\textwidth]{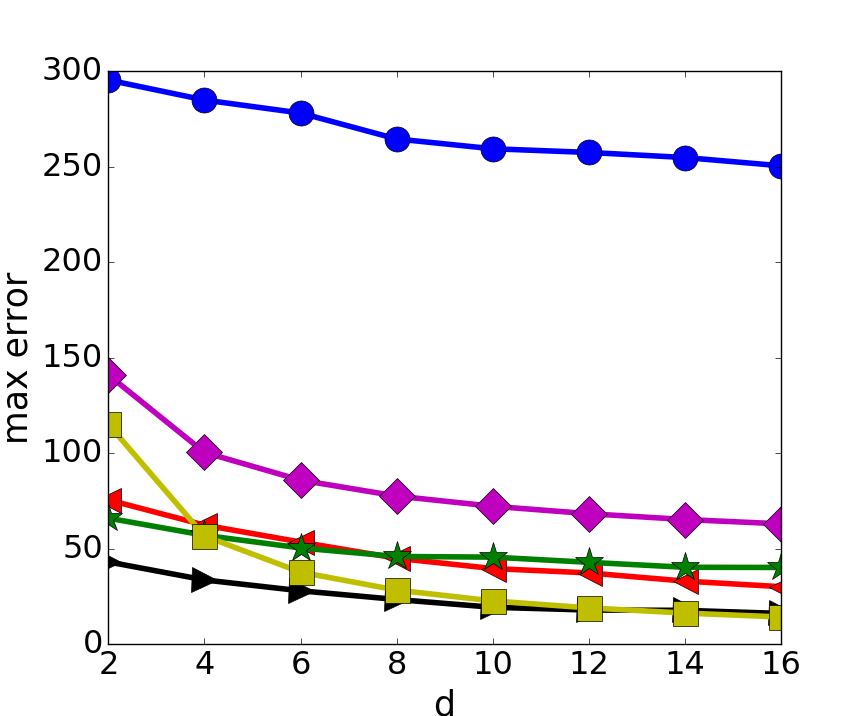}\label{fig:higgs-4-max-d}}
     \caption{{\tt Higgs} dataset for fixed $s$; $n = 11,000,000$. We fix $s = 50000$ and vary $d$. The depth $d$ here is for \oneSR\ and \twoSR; for \CS, \CM, \CMCU\ and \CMLCU, the depth is $d+1$.}
     \label{fig:higgs-fix-s}
\end{figure}

\ifdefined\FULL
\begin{figure*}[h]
     \centering
     \subfloat[][{Average error}]{\includegraphics[width=0.22\textwidth,height=0.20\textwidth]{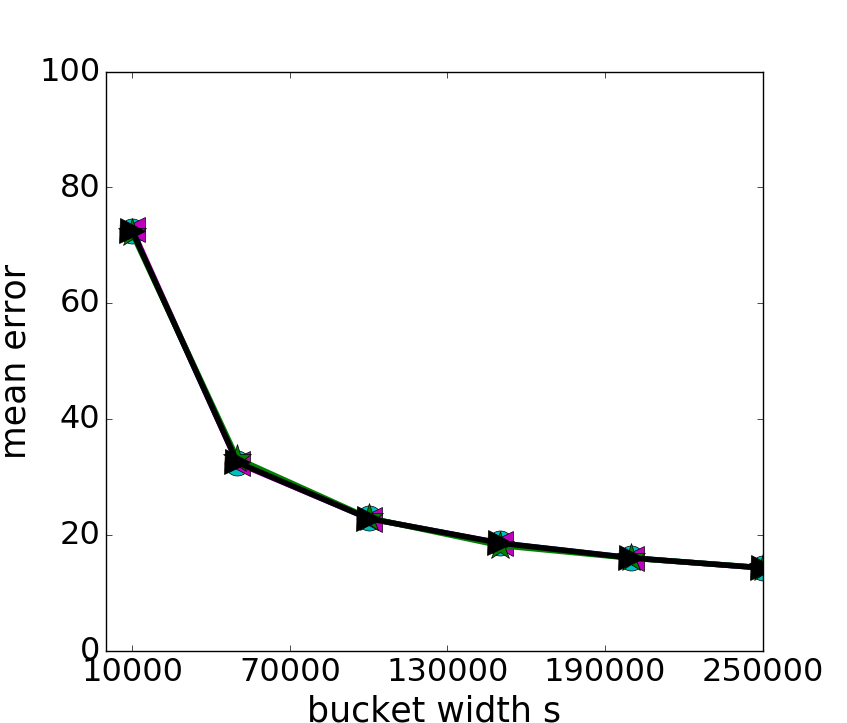}\label{fig:no-shift-mean}}
     \subfloat[][{Maximum error}]{\includegraphics[width=0.22\textwidth,height=0.20\textwidth]{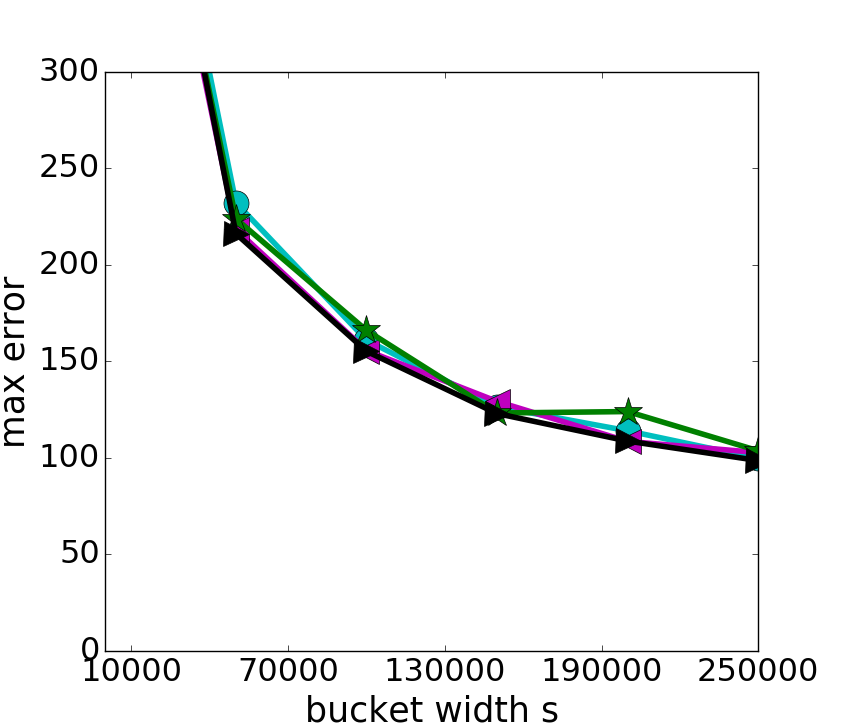}\label{fig:no-shift-max}}
     \subfloat[][{Average error}]{\includegraphics[width=0.22\textwidth,height=0.20\textwidth]{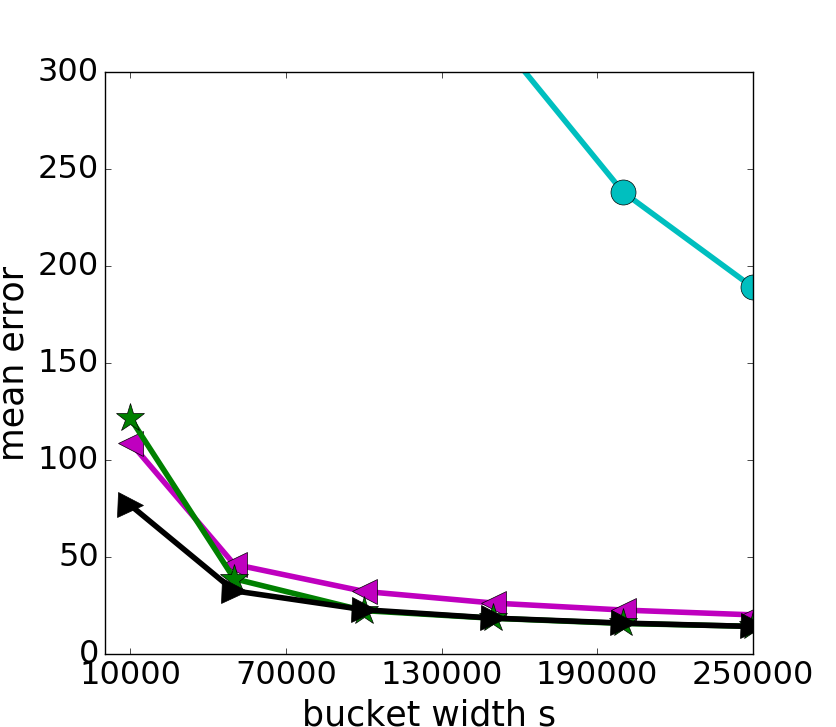}\label{fig:shift-mean}}
     \subfloat[][{Maximum error}]{\includegraphics[width=0.22\textwidth,height=0.20\textwidth]{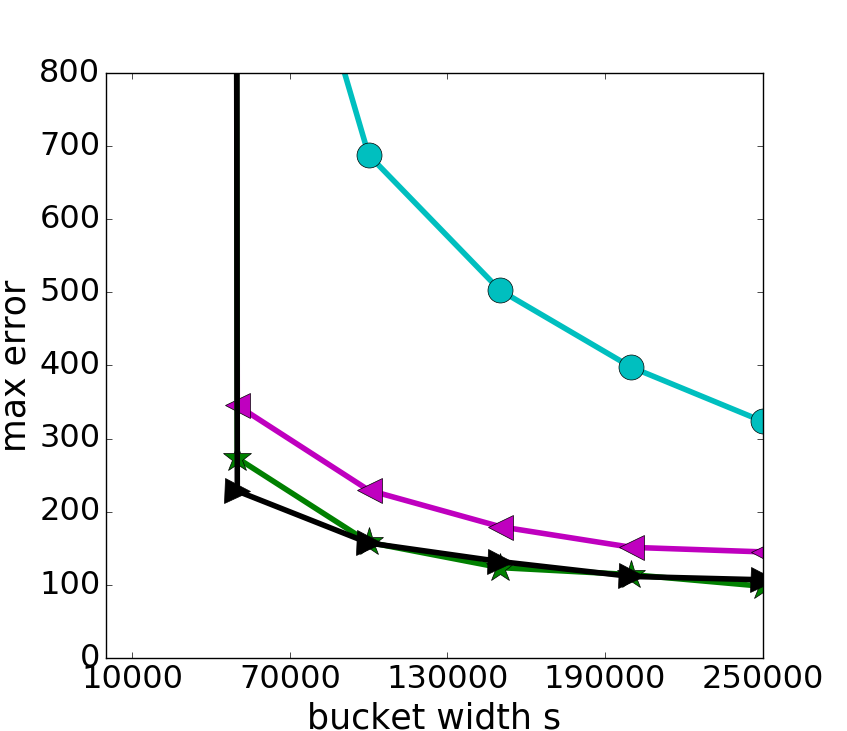}\label{fig:shift-max}}
     \subfloat[][Legend]{\includegraphics[width=0.1\textwidth]{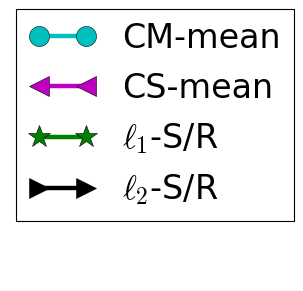}}
     \caption{{\tt Gaussian-2} dataset; Fig. \ref{fig:no-shift-mean}-\ref{fig:no-shift-max}, the dataset is not shifted.  Fig. \ref{fig:shift-mean}-\ref{fig:shift-max}, $500$ entries are shifted by $100,000$}
     \label{fig:gauss-2}
\end{figure*}
\begin{figure*}[h]
     \centering
     \subfloat[][{Average error}]{\includegraphics[width=0.22\textwidth,height=0.20\textwidth]{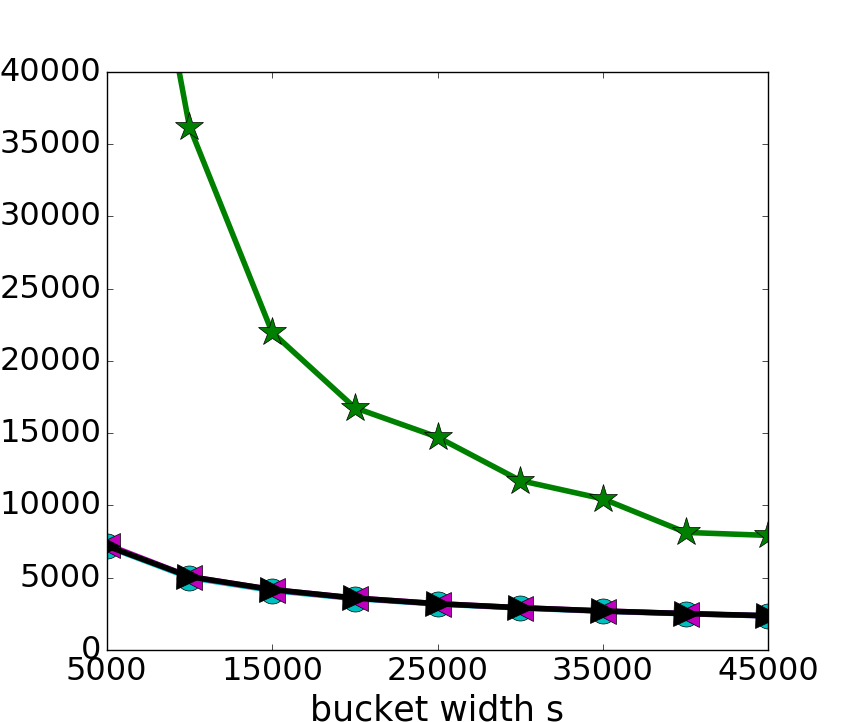}\label{fig:wiki-simple-mean}}
     \subfloat[][{Maximum error}]{\includegraphics[width=0.22\textwidth,height=0.20\textwidth]{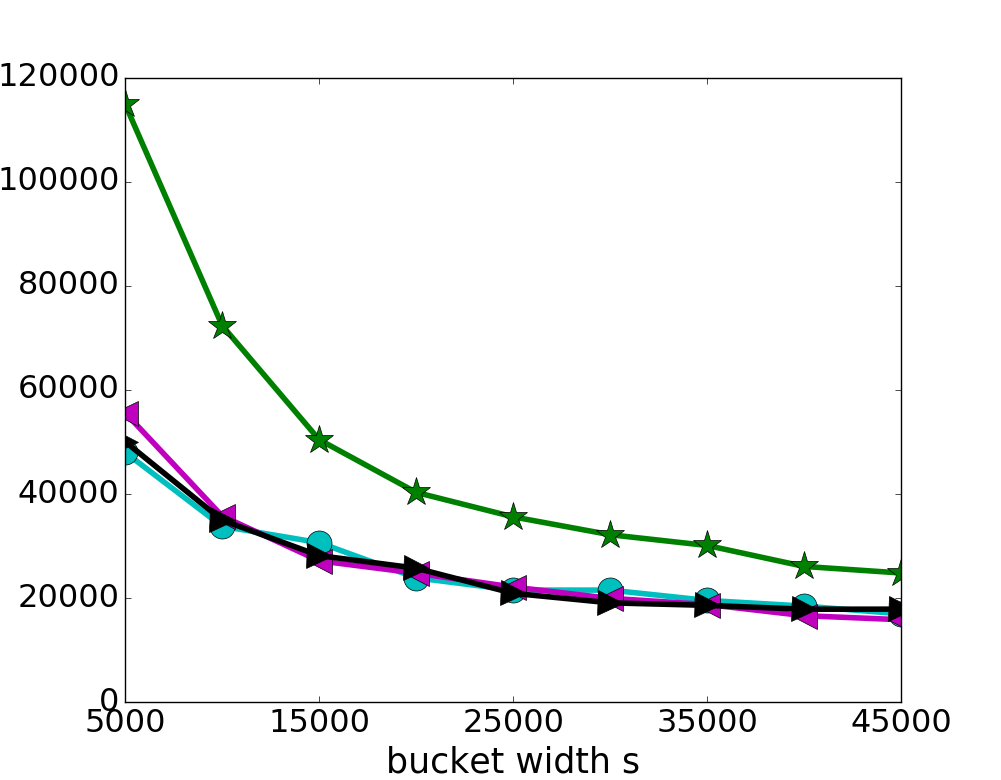}\label{fig:wiki-simple-max}}
%      \subfloat[][Legend]{\includegraphics[width=0.1\textwidth]{legend-2}}
     \caption{{\tt Wiki} dataset}
     \label{fig:wiki-simple}
\end{figure*}

\fi

{\tt Gaussian} dataset with $n = 500,000,000$.
%\chensays{TODO, fix n, test three datasets, say (b=100, k=1000), (b=100, k=10000), (b=0, k=1000)}
Figure~\ref{fig:gauss-100-ave} and Figure~\ref{fig:gauss-100-max}  show the average and maximum errors of \oneSR, \twoSR, \CM, \CS, \CMCU\ and \CMLCU\ respectively on {\tt Gaussian} dataset with $n=500$ million, $\sigma=15$ and $b=100$.

First note that \oneSR\ and \twoSR\ have similar average/maximum errors when we increase $s$. An explanation for this phenomenon is that in \twoSR\ we use random signs $+1, -1$ to reduce/cancel the noise (contributed by colliding coordinates) in each hashing bucket, while in \oneSR\ we do not. But in {\tt Gaussian} the ``perturbation'' of each $x_i$ around the bias is symmetric, and thus both algorithms achieve good cancellations.  When $s$ is small,  the error of \twoSR\ is slightly smaller than that of \oneSR, this might because \oneSR\ can not estimate the bias accurately. 
%(recall that in our implementation, \oneSR\ samples $s$ coordinates and takes the median as an estimation of the bias). 
On the other hand, both \oneSR\ and \twoSR\ outperform other algorithms significantly. As a comparison, the errors of \oneSR\ and \twoSR\ are less than $1/5$ of \CS, $1/20$ of \CMLCU, $1/50$ of \CMCU\ and $1/200$ of \CM.

%First note that as long as the sketch size is larger than $k \log n$, the change of $k$ does not affect the recovery quality by much. This is in fact predicted from the tail-form error (i.e.\ $\err{p}$) of \CM, \CS, \oneSR\ and \twoSR, since the rest of the $n-k$ coordinates only have Gaussian noise. 

In Figure~\ref{fig:gauss-500-ave} and Figure~\ref{fig:gauss-500-max}, we increase the value of $b$ to $500$. As we can observe from those figures that the average and maximum errors of \oneSR\ and \twoSR\ are not affected by the value of $b$, which can be fully predicted from our theoretical results. On the contrary, the errors of \CM, \CS, \CMCU\ and \CMLCU\ increase significantly when we increase $b$.

% \paragraph{Yahoo dataset}  
% Figure \ref{fig:yahoo} shows the experimental results for {\tt Yahoo} dataset. 
% %This dataset has median about $35$ (can be considered as bias). 
% We vary the sketch size and fix the number of iterations to be $10$ in \BOMP. It is clear from Figure \ref{fig:yahoo} that \BOMP\ and \twoSR\  have best recovery qualities, and \oneSR\ is slightly worse -- it has errors  $2-10$ times larger. Compared with \twoSR\ and \BOMP, \CS\ has errors that are $30+$ times larger and \CM\ has errors that are $100+$ times larger.

% We next move to the datasets with much larger sizes.  As mentioned earlier, \BOMP\  cannot scale to these datasets in our computational environments. We thus only compare \oneSR\ and \twoSR\ with \CM\ and \CS.

%\chensays{I am also running gauss with $n = 500,000,000,000$, it hasn't stopped after 7 hours, still running.}
 \smallskip

{\tt Wiki} dataset.
Figure~\ref{fig:wiki} shows the accuracies of different algorithms on {\tt Wiki}. We have observed that when varying the sketch size, \twoSR\ always achieves the best recovery quality. For example, when sketch size is $s = 20,000$,  the average error of \twoSR\ is smaller than $1/10$ of the average errors of other algorithms. For average error, \oneSR\ and \CS\ perform similarly but the maximum error of \CS\ is typically $2+$ times larger than that of \oneSR. The performance of \CM, \CMCU\ and \CMLCU\ are much worse than \oneSR\ and \twoSR.
\smallskip

\ifdefined\WORLDCUP
{\tt WorldCup} dataset.
Figure~\ref{fig:wc} shows the accuracies of different algorithms on {\tt WorldCup}. While \twoSR\ still achieves the smallest  average error, \CS\ and \oneSR\ follow closely.  Again \CM, \CMCU\ and \CMLCU\ perform significantly worse than others. For maximum error, \CS, \CMCU, \CMLCU\, \oneSR\ and \twoSR\ have similar errors; \CM\ gives significantly (typically $4+$ times) larger errors than other algorithms.
\smallskip
\fi

{\tt Higgs} dataset.
Figure~\ref{fig:higgs} shows the accuracies of different algorithms on {\tt Higgs}. It can be observed that for average error, \twoSR\ again achieves the smallest error. The average error of \CS\ is typically larger than that of \twoSR\ and is much smaller than that of other algorithms.
For maximum error, \CMLCU\ has similar accuracy as \twoSR\ for large $s$. The maximum errors of all other algorithms are larger than that of \twoSR. \CM\ again has the worst performance.
\smallskip

{\tt Meme} dataset.
Figure~\ref{fig:meme} shows the accuracies of different algorithms on {\tt Meme}. We can again observe that \twoSR\ achieves the best recovery quality. The errors of \CS\ are about $30\%$ larger than that of \twoSR. Both \twoSR\ and \CS\ outperform other algorithms significantly.

\subsection{Effects of Sketch Depth}
To see how the sketch depth $d$ affects the accuracy of the sketch, we conduct experiments as follows: we fix the sketch size $s$ and vary the sketch depth $d$. We only present the results for {\tt Higgs} and similar results can be observed in other datasets.

It can be observed from Figure \ref{fig:higgs-fix-s} that for all algorithms we tested, increasing $d$ will improve the accuracy. One can also observe that \CMLCU\ is more sensitive to the value of $d$ than other algorithms. In terms of accuracy, \twoSR\ still outperforms other algorithms and for the maximum error, \CMLCU\ follows closely when $d$ is large.

\ifdefined\FULL
\subsection{Comparisons with Mean Heuristics}
\label{sec:ell-mean}
We also compare our algorithms with \ellmeanOne\ and \ellmeanTwo. In Figure \ref{fig:no-shift-mean}-\ref{fig:no-shift-max}, we use the dataset whose entries are sampled from $\mathcal{N}(100, 15^2)$. It can be observed that all algorithms have similar performance and this is because all of \oneSR, \twoSR, \ellmeanOne\ and \ellmeanTwo\ can estimate the bias ($b=100$) well.  In the dataset used in Figure \ref{fig:shift-mean}-\ref{fig:shift-max}, we shift $500$ entries by $100,000$. A direct consequence is that the mean of the vector is no longer an accurate estimation of the bias. It can be observed that errors of both \ellmeanOne\ and \ellmeanTwo\ increase significantly.

We also conduct experiments on {\tt Wiki} dataset, one can observe that \twoSR, \ellmeanOne\ and \ellmeanTwo\ have similar performance and all of them outperform \oneSR. 

\fi

\subsection{Distributed and Streaming Implementations}

As mentioned in the introduction, it is straightforward to implement our bias-aware sketches in the distributed model by making use of the linearity. Moreover, their performance in the distributed model can be {\em fully} predicted by the centralized counterparts -- the total communication will just be the number of sites times the size of the sketch, and the time costs at the sites and the coordinator will be equal to the sketching time and recovery time respectively.\footnote{Regarding the random hash functions, the coordinator can simply generate $g, h^1, \ldots, h^d : [n] \to [s]$; $r^1, \ldots, r^d : [n] \to \{-1, 1\}$ at the beginning and send to each site, which only incurs an extra of $O(\log n)$ communication on each channel and is thus negligible.} Therefore, our experiments in the centralized model can also speak for that in the distributed model.

We implemented our bias-aware sketches in the streaming model.
\ifdefined\FULL
\else
We refer readers to Algorithm 5 and 6 in the full version~\cite{CZ16e} for the details.
\fi
We have run our algorithms on the streaming dataset {\tt Hudong} where edges are added in the streaming fashion. We update the sketch at each step, and recover the entire $\hat{\bx}$ after feeding in the whole dataset.  To measure the running time, we first process the whole data stream and calculate the average update time. We then recover the whole vector and calculate the average query time.

%\qinsays{Details on how insertions/queries are performed? How are averages measured?}

\paragraph{Accuracy for Point Query}
% which allows us to compare $\norm{\bx - \hat{\bx}}_\infty$ easily.
Figure \ref{fig:hudong-ave} and Figure \ref{fig:hudong-inf} show that the recovery errors of \CS\ are $2+$ times larger than that of \twoSR. The others algorithms have even larger errors. In both Figures the results of \CMLCU\ and \CMCU\ are very close and their curves overlap with each other. The performance of \oneSR\ is also quite similar to \CMLCU\ and \CMCU.

\paragraph{Update/Recover Running Time}
It can be seen from Figure \ref{fig:update-time} and Figure \ref{fig:point-query} that all of the six tested algorithms have similar processing time per update and per point query.  The time cost per update of $\ell_1$-S/R is about 50\% more than \CM, and that of $\ell_2$-S/R is within a factor of 2 of \CS.  We thus conclude that the additional components (such as the Bias-Heap) used in \oneSR\ and \twoSR\ only generate small overheads.  %We note that the running time results may be affected by the implementation and the system status.

\subsection{Summary of Experimental Results}
We now summarize our experimental results.  We have observed that in terms of recovery quality, \oneSR\ strictly outperforms \CM, and \twoSR\ strictly outperforms \CS. In general \twoSR\ is much better than \oneSR, especially when the noise around the bias is not symmetric.  Note that this is similar to the phenomenon that the error of \CS\ is almost always smaller than that of \CM\ in practise, and is consistent to the theoretical fact that if $n \gg k$, and the tail coordinates of $\by = \bx - \beta^{(n)}$ follows some long tail distribution, than the error $\frac{1}{k} \err{1}(\by)$ is much larger than $\frac{1}{\sqrt{k}} \err{2}(\by)$.

In almost all datasets we have tested, \twoSR\ outperforms \CMLCU\ and \CMCU, the latter two are considered as improved versions of the Count-Min sketch.

The sketch depth $d$ also affects the accuracy of a sketch. Larger $d$ leads to better performance. It is also observed that some algorithms (e.g. \CMLCU) are more sensitive to $d$ than others.

As for running time (update/query), the differences between \oneSR, \twoSR, \CS, \CM, \CMCU\ and \CMLCU\ are not significant. The overhead introduced by the components  used to estimate the bias is fairly low in both \oneSR\ and \twoSR.

%\chensays{check this, add more discussion?}
%Our experiments also show that the mean of the vector turns out to be a fairly good estimation of the bias for real-world datasets (though it can be arbitrarily bad for adversarial inputs). This observation can help to simplify the implementation of \oneSR\ and \twoSR\ in some situations.

%We also believe \twoSR\ is conceptually simpler than \BOMP\ and is easier to implement.

\section{Conclusion}
\label{sec:conclude}

In this paper we formulated the bias-aware sketching and recovery problem, and proposed two algorithms that strictly generalize the widely used Count-Sketch and Count-Median algorithms. Our bias-aware sketches, due to their linearity, can be easily implemented in the streaming and distributed computation models. We have also verified their effectiveness experimentally, and showed the advantages of our bias-aware sketches over Count-Sketch, Count-Median and the improved versions of Count-Min in both synthetic and real-world datasets.

\section{ACKNOWLEDGMENT}
Jiecao Chen and Qin Zhang are supported in part by NSF CCF-1525024 and IIS-1633215.

\bibliographystyle{abbrv}
\bibliography{paper}

%\appendix
%\input{appendix}

\end{document}